%% file: main.tex
\documentclass[12pt,reqno]{article}

\usepackage[english]{babel}
\usepackage[T1]{fontenc}
\usepackage[utf8]{inputenc}
\usepackage{authblk}
\usepackage{amsmath, amsthm}
\usepackage{graphicx}

\usepackage[capposition=top]{floatrow}
\usepackage{subcaption}
\usepackage{pdflscape}
\usepackage{caption}
\usepackage{rotating}

\usepackage{color}
\usepackage{tabularx,ragged2e,booktabs,caption}
\usepackage{adjustbox}
\usepackage[flushleft]{threeparttable}

\usepackage{csquotes}
\usepackage[style=authoryear,maxcitenames=3,uniquelist=false,maxbibnames=99,backend=bibtex]{biblatex}
\DeclareLanguageMapping{english}{english-apa}

\renewbibmacro{in:}{}

\renewbibmacro*{volume+number+eid}{%
  \printfield{volume}%
  \setunit*{\addnbspace}
  \printfield{number}%
  \setunit{\addcomma\space}%
  \printfield{eid}}
\DeclareFieldFormat[article]{number}{\mkbibparens{#1}}

\usepackage{longtable}
\usepackage[toc,page]{appendix}

\newtheorem{prop}{Proposition}

\makeatletter
\renewcommand*{\@fnsymbol}[1]{\ensuremath{\ifcase#1\or *\or 1 \else\@ctrerr\fi}}
\makeatother

\makeatletter
\renewcommand{\@seccntformat}[1]{\csname the#1\endcsname.\quad}
\makeatother

\begin{document}

\title{Quantifying Different Gains from Trade in Quality}

\author{Chen Yuting\footnote{School of Economics, Yokohama National University, 79-4 Hodogaya-ku, Yokohama, Japan \\Email: chen-yuting-bh@ynu.ac.jp }

}

\maketitle

\begin{abstract}

In this paper, I study to what extent countries differ in their preferences for quality and their technologies for improving quality. The paper also quantifies the contribution of those differences to the differences in gains from trade across countries. I adopt \textcite{ANTONIADES2015263}, which allows endogenous quality choice of firms. This paper extends that framework into multi-country and multi-sector setting. The new feature is that bilateral trade liberalization yields spillover effects on the competitiveness of another country. This paper structurally estimates the model under multi-country and multi-sector framework. It is found that richer countries generally have stronger valuation for quality. The quantification demonstrates that variations in the strength of quality preferences and in the technology of improving quality across countries add to their heterogeneities in market competitiveness. Subsequently, I simulate a 5\% increase in the trade barriers. If the quality channel is shut down, countries with stronger preferences for quality have larger degrees of underestimations in their losses from the trade barrier. Finally, gains from a universal rise in quality preference are unequal among countries, with larger economies generally gaining more than smaller economies.
\end{abstract}

\newpage
\section{Introduction}
Trade economists have paid particular attention to the role played by product quality in international trade. Under quality sorting, more efficient firms produce higher-quality goods, enter more competitive markets and charge higher prices,\footnote{See, for example, \textcite{baldwinharrigan2011}, \textcite{baller2015}, \textcite{doi:10.1093/restud/rdr030}, \textcite{Demir} and \textcite{ECKEL2015216}.} which enriches conventional efficiency-based trade theories.\footnote{Conventional heterogeneous firm trade theories include \textcite{melitz2003} and \textcite{MO2008}.} Additionally, quality differentiations across firms and markets can explain price variations in exports and imports.\footnote{See, for example, \textcite{Mandel2010HeterogeneousFA}.} Finally, the product quality is related to welfare gains from trade.\footnote{See, for example, \textcite{FLXY}}

This paper analyzes differences in quality preferences and technologies among countries and investigates the impact of these differences on gains from trade. Despite the growing literature linking quality, price and welfare, little has considered the contribution of quality either to cross-country differences in market competitiveness or to unequal gains from trade. In addition, this paper is the first to quantify the countries' heterogeneities in quality preferences and costs from both demand and supply sides. In this paper, I investigate this
problem using a combination of empirical, theoretical and structural analyses. Two findings emerge from this analysis: first, differences in quality preferences among countries enlarge the inequality in gains from trade; second, a universal rise in quality preferences brings higher gains to larger economies.

To provide micro-foundations, I use a recent Chinese firm-level export transaction dataset to explore firm-level evidence. Evidence from this dataset shows that firms charging higher prices sell to larger countries, earn higher revenues and enter more markets. The positive relationship between price, revenue, and market entry is strengthened in sectors of high research and development (R\&D) intensity. Typically, R\&D intensity can be viewed as a proxy for scope of quality differentiation in \textcite{mz2012}. Additionally, I choose two sectors with different levels of R\&D intensity - tobacco (R\&D intensity approximately zero) and pharmaceutical (R\&D intensity approximately 50\%). In the pharmaceutical sector, the positive relationship between price, revenue and market entry still holds. However, in tobacco sector, price is negatively correlated to revenue and the destination market size. The evidence reveals that the price is positively correlated destination market size, firm revenues and market entry in a subset of sectors only. This implies that price reveals quality only in selected industries and that some sectors are homogeneous with limited quality differentiation space.

Motivated by these stylized facts, I extend the framework of \textcite{ANTONIADES2015263} into the multi-country and multi-sector setting. The framework introduces quality differentiation into \textcite{MO2008} (MO, thereafter). According to \textcite{ANTONIADES2015263}, firms compete along the quality dimension as well as cost dimension so that more productive firms charge higher prices because they can produce higher-quality products. I also extend \textcite{ANTONIADES2015263} so that a firm can produce multiple products. The extension suggests that more productive firms produce more products with higher average qualities. On the demand side, quality preferences are homogeneous within a country and sector. On the supply side, the cost of improving quality is embedded in both variable and sunk costs that firms have to pay. 

In the multi-country setting in this paper, it is possible to assess spillover effects of the bilateral trade liberalization and the preference shock. The spillover effect is unique to this multi-country model. A bilateral trade liberalization between two countries can generate negative effect on competitiveness of other countries. The effect on the third country is higher, the larger the trade-liberalizing economies. Apart from that, a positive preference shock on one country could have positive effect on competitiveness of other countries. This occurs since quality differentiation scope in selling to that market are widened following positive preference shock. Firms in all countries respond by rising quality of goods sold to that market. The responses are higher for more productive firms. Therefore, the productivity threshold of entering the market with preference shock rises. As for other countries without shocks, the least productive firms cannot profitably export to the market with the positive preference shock. This makes this group of firms unable to cover the sunk cost of entry. Thus, the least productive firms are driven out of the market, in all countries. The productivity thresholds of other countries without preference shock could also rise. The larger the country with preference shock, the higher the effects on productivity of other countries. Finally, if the trade liberalization negatively affect competitiveness of a third country, higher quality preference alleviate the negative effect. If the trade liberalization positively affect competitiveness of a third country, higher quality preference strengthen the positive effect. This occurs since higher quality preference have positive selection effect. 

The multi-country setting of this paper enables computation of equilibrium. I bring the model to data to estimate the relevant parameters of quality in the model. An extended gravity model can be derived from the above theoretical framework. In addition to the bilateral trade costs and the importer/exporter fixed effects typically included in a gravity model estimation,\footnote{The original gravity model starts from \textcite{tinbergen} and is augmented by \textcite{AvW2003}.} I include (a non-linear function of) quality preferences and costs into the the gravity equation. According to this gravity model, bilateral trade flow is positively affected by the destination country's quality preferences and negatively affected by the origin country's costs of improving quality.\footnote{The details of this are discussed in the Section \ref{theory} and Section \ref{emprics}.} I parameterize that gravity equation to estimate quality preferences/costs parameters as well as trade costs. After the estimation, I recover two measures of the degree of competition for each country-sector pair. The first measure is endogenous competitiveness. This is the productivity threshold above which firms can make non-negative profits. Thus, before a firm makes a random draw of productivity, a firm will be, ex ante, less likely to enter a market where the cutoff productivity is high and competition is fierce. The second is the exogenous measure of competitiveness. This is the fundamental productivity level in a market.  Firms are more likely to draw a high productivity if the level of fundamental productivity is higher. I follow \textcite{CGMO2011} and estimate the endogenous competitiveness first and back out the exogenous competitiveness using free-entry conditions.\footnote{This is discussed in detail in Section \ref{emprics}.} 

The structural analysis suggests that quality preferences are related to income and that considering quality enlarges differences in competitiveness across countries. First, the preference for quality is positively correlated with GDP per capita in more than half of the sectors studied.\footnote{For the rest of sectors, GDP per capita has insignificant effect on quality preferences.} This indicates that for these industries, the preferences for quality are on average stronger in richer countries. Second, the distribution of preferences for quality vary across sectors. The variance of the estimates across countries is particularly large for certain sectors such as HS 85 (Electrical machinery and equipment), which is 56.19, and small for certain sectors such as HS 37 (Photographic or cinematographic goods), which is approximately 9.94.\footnote{As for the cost of quality, it is found that the mean is smaller than the preference parameters within the same sector and the variance are also smaller.} These findings are, to some extent, consistent with empirical evidence on the correlation between price and market size.\footnote{\textcite{Syversion07} among other, reports a negative correlation between market size and output price, while \textcite{Verhoogen08} and \textcite{hallak2009firms} report positive correlations.} For example, \textcite{amit} and \textcite{KY2016} finds that prices are a good proxy for quality in some industries but not in others. I compare the endogenous competitiveness estimated from this model with that estimated from the MO canonical model. Comparing across countries by sector, it is noticeable that endogenous competitiveness is more dispersed under this model. This more dispersed distribution can be explained by the additional heterogeneity introduced by differentials in both quality preferences and costs of improving quality among countries.

A counterfactual exercise is performed to examine the competitiveness-enhancing effect of quality. Exogenous competitiveness  and quality parameters are kept constant. I experiment with a 5\% universal increase in international trade costs. This exercise is implemented under both this model and the MO model. In general, the productivity cutoffs, which measure the endogenous competitiveness, decrease in both models. This implies that the average productivity is lower and the economy becomes less competitive. Quantitatively, the decline of the productivity cutoff is 25\% to 300\% larger in this model with quality than that it is in the MO setup (depending on the sector). Furthermore, the differences between this model and the MO model in degrees of the declines of productivity cutoffs rise in the magnitude of preferences for quality, in most sectors. If consumers value quality more in some countries, the difference between the rise of productivity cutoffs in this model and in the MO model will be larger in these countries than the difference in other countries where quality is not valued. This suggests that gains from trade across countries are more heterogeneous than in a canonical model, in which quality is considered.

For the second counterfactual exercise, I simulate a universal positive shock in preference for quality. Although quality preferences are exogenous to this model, they can change over time by promoting consumers' awareness of product quality. Most countries experience gains in productivity of less than 141.2\%. Approximately 25 countries experience a loss in productivity. The total gains in productivity is nearly 5 times the total losses in productivity, from the global scale. It is noticeable that large economies generally gain more than small economies. On average, 1\% rise in population size leads to 7.876\%  more gains in productivity. This finding is consistent with the model implication that larger economies have wider scope for quality differentiation. 

Quality in international trade has been intensively studied, but existing work pays limited attention to sources of quality variation from both the demand and the supply side and does not link them to heterogeneity among countries. \textcite{HALLAK2006238} estimates the demand for quality across countries on the demand side. Other studies focus more on firms' behavior in quality improvement (\textcite{BAS2015250}, among others). Nevertheless, \textcite{ANTONIADES2015263} reconciles both the demand and the supply sides into a single theoretical framework. This paper uses a more general framework and compares predictive differences between that quality-extended model and the MO model. 

The analysis of this paper provides new insight into the sources of unequal gains from trade from a quality perspective. That countries do not gain equally has been well documented. Gains from trade can be divergent among countries of different sizes (\textcite{MARKUSEN1981531}, among others) or incomes (\textcite{tw90}, among others). A recent study by \textcite{ANDERSON2016279} suggests that free trade agreement (FTA) can bring -0.3\% to 5\% gains to different countries. However, little literature provides explanations for the unequal gains across countries. The paper points out and quantifies the strengthening effect on competition brought by higher quality preferences and quantifies it. 


The paper relates to three strands of literature: heterogeneous firms' quality choices, pricing-to-market and non-homothetic preferences. Trade liberalization induces firms' quality-upgrading behavior (\textcite{FLY2015}, among others). Larger plants have higher output prices and use more expensive materials \parencite{KV2012}.\footnote{Other studies include \textcite{mz2012}.} This paper embraces the above findings by allowing quality choices to be endogenous to firms such that prices and markups vary with plant size and productivity. The model also includes findings by \textcite{dfr2014} that export unit values vary to a larger extent than quality-adjusted prices.\footnote{\textcite{abaer} finds that price deviate from purchasing power parity.} Finally, the study, by estimating quality preferences and relating to the country-level income, incorporates non-homothetic preferences proposed by \textcite{fgh}.\footnote{Similarly, \textcite{fieler} proposes that richer households consume higher-elasticity goods more thus leading to more trade flows between high income countries than between low-income ones.}

The remainder of the paper is organized as follows. Section \ref{stylizedfact} uses firm-product level trade statistics to explore the relationship between price, market features and firm exporting performances. I use these statistics to show the micro foundations of the model. Section \ref{theory} lays out the theoretical framework under a multi-country and multi-product setting. Section \ref{emprics} discusses the data and how relevant parameters are estimated. Section \ref{counterfactual} implements the counterfactual analysis. Finally, Section \ref{conclude} concludes.

\section{Stylized Facts}\label{stylizedfact}
\subsection{Data}
In order to document stylized facts regarding \textit{f.o.b.} export prices across destinations and across firms within the same destination, I use a unique micro-level data and two sets of macro-level data. The specific micro dataset used here is the China Customs Trade Statistics (CCTS) issued in 2013 by The General Administration of Customs. The advantage of this dataset is that it is highly disaggregated in recording the import/exports of Chinese firms. Additionally, it records the prices of exports/imports, origins/destinations of imports/exports, the product (HS8) of the transaction, and the mode (ordinary or processing) of international trade. I use the CEPII-Dist dataset to obtain bilateral characteristics of destination countries and China. The Penn World Table dataset is used to obtain population of destination countries.

\subsection{Empirical Findings}
In this subsection, I report three stylized facts concerning export prices across destinations and across firms within destinations. Although the existing literature such as \textcite{mz2012} has documented these findings, it is important to show that they hold in the more recent years. Moreover, these are the facts to be embraced in the model and that lay the micro foundation in the model setup.

\textit{On export prices across destinations} — Based on the entire customs export data in 2013, Table \ref{style1} reports the regression results using (log) export prices as the dependent variable and destination country’s population as the main explanatory variable, controlling for destination’s GDP per capita and distance to China. Columns 1-2 use the prices at the firm-HS8-country level and Columns 3-4 use HS8-country level. The coefficients on (log) population in all specifications are significantly positive, suggesting that export prices increase with the destination’s market size, consistent with \textcite{mz2012}. Thus, it is summarized as the following fact:\\
\textbf{Stylized Fact 1:} \textit{On average, firms set higher export prices for the same product in larger markets.}

\textit{On export prices and revenues} - Table \ref{style2} presents robust evidence that firms charging higher export prices earn greater revenues even within very narrowly defined destination-product markets. In Columns 1-2, I use prices at the firm-HS8-country level and Columns 3-4 I use prices at firm-product level. This relationship is highly statistically significant. Importantly, it is also markedly stronger for goods with greater scope for quality upgrading, as proxied by sectors’ R\&D intensity compiled by \textcite{KROSZNER2007187}. The magnitudes and signs of estimated coefficients are relatively robust to different specifications and different level of aggregations. The elasticity of export prices with respect to revenues is 0.15. A doubling in firm sales in a given market is thus associated with 20\% higher bilateral unit prices for the average product. That number is bigger for sectors with higher R\&D intensity. This yields the following fact:\\
\textbf{Stylized Fact 2:} \textit{On average, firms charge higher prices simultaneously earn greater revenues in each destination. The correlation between price and revenue is higher in R\&D intensive sectors.}

\textit{On prices and market entry} -As reported in Table \ref{style3}, exporters that supply more countries systematically charge higher average prices (Columns 1-2). Firms selling to more destinations also exhibit greater price dispersion across importers (Columns 3-4). In this table, I use prices (or price dispersions) at firm-HS8 product level. These results are both largely enhanced by products with substantial potential for quality differentiation. As columns (2) and (4) show, the patterns are stronger for sectors with larger scope for quality differentiation, which is proxied by sectoral R\&D intensity, defined as before. The finding is consistent with \textcite{MANOVA2017116} and \textcite{mz2012} which use earlier versions of CCTS. This finding suggests the following fact:\\
\textbf{Stylized Fact 3:}\textit{ On average, exporters entering more destinations and offering a wider range of export prices charge higher prices. The correlation between price and market entry is higher in R\&D intensive sectors.}

One caveat is that the above relationships can be sector-dependent. The above relationship exist only in a subset of sectors. In Figure \ref{fig:s1} and \ref{fig:s2}, I plot the above 3 stylized facts for sector HS 24 (Tobacco) and HS 6, respectively. The findings of the three stylized facts continue to hold in the sub-sample of HS 6, while they are reversed in HS 24. From Figure (\ref{fig:stylized1_hs24}), the average price decreases with destination market size. Firms charging higher prices earn lower revenues, as in Figure (\ref{fig:stylized2_hs24}). The relationship between the firm's price charged and the number of market it entered (Figure (\ref{fig:stylized3_hs24})) is not as significant as in sector HS 6 (\ref{fig:stylized3_hs30}). The comparison suggests that competition improves quality in only a subset of sectors. This is also consistent with the Table (\ref{style2}) -(\ref{style3}) where the relationship between price, revenue and market entry are "more positive" in R\&D intensive sectors. In sector 30 (Pharmaceuticals), the average R\&D intensity is approximately 58\% while that ratio of sector 24 is 0. 
\section{Theory}\label{theory}
I lay out theoretical frameworks that incorporate endogenous quality choices in the linear demand system. The demand side features heterogeneous quality preferences among different countries; In addition, quality upgrading requires additional variable and fixed costs such that producers endogenously choose the level of quality in their products. The model is built on a multi-country basis for the convenience of quantification in subsequent empirical studies. For simplicity, sector notations are dropped as there are no interactions between them.
\subsection{Setup}
I follow the framework of \textcite{ANTONIADES2015263} and \textcite{foster08}, the preference of a consumer in country $j$ is represented as (\ref{consumer_utility}). This utility function has the advantage that the price/quantity are linear in quality preference
\begin{eqnarray}\label{consumer_utility}
U_{j} = &q^{c}_{0}& + \sum_{s}\left[\alpha_{s}\int_{\omega\in \Omega_{s}}q_{s}(\omega)^{c}d\omega + \kappa_{js} \int_{\omega \in \Omega_{s}} z_{s}(\omega)q_{s}(\omega)^{c}d\omega\right] \nonumber\\ && -\sum_{s}\left[\frac{1}{2}\gamma_{s}\int_{\omega\in\Omega_{s}}(q_{s}(\omega)^{c})^{2}d\omega + \frac{1}{2}\eta_{s}\left(\int_{\omega\in\Omega_{s}}q_{s}(\omega)^{c}\right)^{2}\right]
\end{eqnarray}
where $q_{0}^{c}$ and $q_{s}(\omega)$ are individual $c$'s consumption in numeraire and differentiated goods (in sector $s$) respectively. The quality of each variety $\omega$ is given by $z_{s}(\omega)$ and the taste for quality by consumers in country $j$ in sector $s$ is $\kappa_{js}$. In the above utility function, $\kappa_{js}z_{s}(\omega)$ is equivalent to the variety-specific taste shifter in \textcite{foster08}.\footnote{The difference between this paper and \textcite{foster08} is that the latter regard the shifter as the idiosyncratic term.} If qualities of all varieties are zero or if the taste parameters are zero, the model becomes traditional \textcite{MO2008} model. Finally, $\alpha_{s}$ and $\eta_{s}$ capture the degree of substitution between each variety and the numeraire. The parameter $\gamma$ measures the degree of differentiation between varieties. $\kappa_{js}$ picks up the degree of preference for variety. Specifically, $\kappa_{js}$ are assumed to be positive and differ across countries and sectors.

This generates a linear inverse demand function that depends on both quantity and quality:
\begin{equation}\label{indemand}
p_{j}(\omega) = \alpha -\gamma q(\omega)^{c} +\kappa_{j}z(\omega)-\eta Q^{c}
\end{equation}
where $Q^{c} = \int_{\omega \in \Omega} q(\omega)^{c} d\omega$.
By re-arranging (\ref{indemand}) one can obtain the total demand in country $j$ of each variety:
\begin{equation}
q_{j}(\omega) = L_{j}q(\omega)^{c} = \frac{\alpha L_{j}}{\eta N_{j}+\gamma} -\frac{L_{j}}{\gamma} p_{j}(\omega) + \frac{L_{j}\kappa_{j}}{\gamma}z(\omega)+ \frac{\eta N_{j} L_{j}}{\gamma\left(\eta N_{j} + \gamma\right)}\bar{p}_{j} -\frac{\eta N_{j}L_{j}\kappa_{j}}{\gamma\left(\eta N_{j}+\gamma\right)}\bar{z}_{j}
\end{equation}
where $\bar{p}_{j} = \frac{1}{N_{j}}\int_{\omega \in \Omega} p_{j}(\omega) d\omega$ is the mean price over all varieties in country $j$, $\bar{z}_{j} = \frac{1}{N_{j}} \int_{\omega \in \Omega} z_{j}(\omega) d\omega$ is the mean quality over all varieties in $j$. This form of preference over quantity and quality ensures that demand function is linear in price and quality. 

On the supply side, firms have to incur costs in both production and quality improvements. For simplicity, labor is assumed to be the only factor of production. As with most heterogeneous firm trade models, firms have to pay a sunk cost $f_{e}$ before they draw the productivity $c$ (marginal cost of production). If a firm in country $i$ produce and sell to country $j$, it should choose both the quantity and the quality of its product, given the total cost function:
\begin{equation}\label{totalcost}
TC_{ij} (\omega) = q_{ij}(\omega)(\tau_{ij}c + \mu_{i}z_{ij}(\omega)) + \delta z_{ij}(\omega)^{2}  
\end{equation}
where $\tau_{ij}$ is the ice-berg trade cost from country $i$ to $j$ and $z_{ij}$ is the quality level of goods sold from $i$ to $j$. 
The first term implies that firms have to incur additional marginal cost of quality upgrading,which depends on both quality level $z_{ij}$ and quality improving technology at origin country $\mu_{i}$, when producing in country $i$; the second term picks up the fixed cost of improving quality. To generate closed form solutions, I assume that the fixed cost is a quadratic function of quality level. A more generalized function form in the fixed cost can be found in \textcite{FLY2015} in which CES preference is assumed. In the cost function (\ref{totalcost}), the marginal cost of upgrading quality differs across countries/sectors, since labor productivity are heterogeneous across countries. The multiplier in the fixed cost $\delta$ is assumed to be identical across countries, since the fixed costs of innovation across countries, which include high-skilled workers, tend to be homogeneous across countries.\footnote{This assumption is made since the highly skilled labor migrate to regions where wages are high, according to \textcite{migration_wage}. Therefore the wage for high skilled workers tend to converge over time.}
\subsection{Firms' Problem}
For a firm (of marginal cost $c$) from country $i$ selling to each (potential) market $j \in J$, it chooses quantity and quality to maximize profit:
\begin{equation}
\pi_{ij}(c) = \max_{q_{j}(\omega), z_{j}(\omega)} p_{ij}(\omega) q_{ij}(\omega) - TC_{ij}(\omega)
\end{equation}
Given the demand structure, the quantity of goods sold from country $i$ to country $j$ satisfies a linear function with respect to price $p_{ij}$: 
\begin{equation}\label{linear_demand}
q_{ij} (c, z) = \frac{L{j}}{\gamma}\left[c_{D}^{j}- p_{ij}(c, z) + \kappa_{j}z_{ij} \right]
\end{equation}
where $c_{D}^{j}$ is the cost threshold so that firms with marginal cost of production $c\leq c_{D}^{j}/\tau_{ij}$ can sell to market $j$ and make non-negative profits. Equation (\ref{linear_demand}) shows that demand is linear to both price and quality. This  makes it convenient for later analysis in  optimal pricing.

Given the demand function and the firms' problem, the optimal price and quantity as a function of quality $z$ is linear in marginal cost\footnote{Per unit trade cost  $\left(\tau_{ij} -1 \right)c$ does not depend on quality}:
\begin{subequations}
\begin{align}
p_{ij}(c, z) & = \frac{1}{2}(c_{D}^{j}+\tau_{ij}c) +\frac{1}{2}(\kappa_{j}+ \mu_{i})z_{ij}\\
q_{ij}(c, z) & = \frac{L_{j}}{2\gamma}(c_{D}^{j}-\tau_{ij}c) +\frac{L_{j}}{2\gamma}(\kappa_{j} - \mu_{i})z_{ij}
\end{align}
\end{subequations}
where $c_{D}^{j}$ is the cost-cutoff in country $j$, above which firms cannot profitably sell in market $j$. This is also the endogenous competitiveness of country $j$ (lower cutoff, more endogenous competition), which changes with factors such as ice-berg trade costs, quality parameters and market size.
Thus, that the profit of selling to market $j$ can be re-written as
\begin{equation}
\pi_{ij} (c, z) = \frac{L_{j}}{4\gamma}\left[(c_{D}^{j} -\tau_{ij}c) +(\kappa_{j}-\mu_{i})z_{ij}\right]^{2} - \delta z_{ij}^{2}
\end{equation}
Solving for the first-order conditions for profit, the optimal quality choice to sell from $i$ to $j$ of firm $c$ is written as a decreasing function of marginal cost so that more productive firms offer higher quality within a market
\begin{equation}\label{quality}
z_{ij}(c) = \rho_{ij}  (c_{D}^{j} - \tau_{ij}c)
\end{equation}
where $\rho_{ij} = L^{j}(\kappa_{j} - \mu_{i})/ \left(4\delta\gamma - L^{j}(\kappa_{j}-\mu_{i})^{2}\right)$. As Equation (\ref{quality}) shows, the quality is linearly related to marginal cost $c$. Consistent with other international trade in quality literature,\footnote{For example, see \textcite{FLXY} in which more productive firms offer higher quality products with lower quality-adjusted costs.} for the same firm $c$ from country $i$, it is expected to offer higher quality goods to larger markets and  countries with higher taste for quality. Higher quality is also expected from countries with higher labor productivity in quality improvement (i.e., lower $\mu_{i}$). It is noticeable that at the first glance, a lower cutoff resulting from trade liberalization seems to have negative impact on quality updates. However, a lower cutoff indicates that the average marginal cost is also lower, since firms with marginal cost above the new cutoff exit the market. Since firm-level quality $z(c)$ decreases with $c$, the average quality of products sold in the economy rises. That effect is higher the larger quality differentiation scope $\rho_{ij}$.

Given optimal quality function in (\ref{quality}) and price/quantity functions, it is evident that price and quantity can be further expressed as 
\begin{subequations}\label{pq}
\begin{align}
p_{ij}(c, z) & = \frac{1}{2}(c_{D}^{j}+\tau_{ij}c) +\frac{1}{2}(\kappa_{j}+ \mu_{i})\rho_{ij}(c_{D}^{j} - \tau_{ij}c)\\
q_{ij}(c, z) & = \frac{L_{j}}{2\gamma}(c_{D}^{j}-\tau_{ij}c) +\frac{L_{j}}{2\gamma}(\kappa_{j} - \mu_{i})\rho_{ij}(c_{D}^{j} - \tau_{ij}c)
\end{align}
\end{subequations}
So the for each firm with marginal cost of production $c$, the operating revenue of exporting from country $i$ to $j$ is:
\begin{equation}\label{revenue}
r_{ij} (c) = \frac{\rho_{ij} \left((c_{D}^{j})^{2} - (\tau_{ij}c)^{2}\right) + \delta\rho_{ij}^{2}(\kappa_{j}+\mu_{i})(c_{D}^{j} -\tau_{ij}c)^{2}}{\kappa_{j}-\mu_{i}}
\end{equation}
Equation (\ref{pq}) and (\ref{revenue}) implies that other things equal, more productive firms charge higher prices and earn greater revenues from each destination, if $(\kappa_{j}+\mu_{i})\rho_{ij}>1$. This is consistent with Stylized 2 in the prior section. 

The operating profit can be expressed as\footnote{Following \textcite{ANTONIADES2015263}, the fixed cost of upgrading quality is also regarded as sunk since this model is not dynamic.}:
\begin{equation}\label{profit}
\pi_{ij} = \frac{L^{j}}{4\gamma} \left[ 1+ (\kappa_{j}-\mu_{i})\rho_{ij}\right](c_{D}^{j} - \tau_{ij}c)^{2} 
\end{equation}
Profit function in (\ref{profit}) implies that countries of higher quality preferences offer higher profits, all else being equal. This is related to the third stylized facts. From (\ref{profit}), more productive firms earn higher profits and earn non-zero profits in a larger set of markets.

This profit function is a quality-adjusted form of \textcite{MO2008}, in which profits are magnified by a term larger than one. This quality-adjusted term is key to the later empirical analysis. 
\subsection{Equilibrium}
In equilibrium, the free entry condition of each country implies that the expected profit (earned from selling to each market) of a firm is zero. Such that that 
\begin{equation}\label{fe}
\sum_{j\in J} \int_{0}^{c_{D}^{J (c)}/\tau_{ij}} \pi_{ij} (c) dG_{i}(c) = f_{E}^{i}
\end{equation}
where $f_{E}$ is the sunk cost a firm have to pay prior to entry. It is assumed to be positive and can vary across countries and sectors. Equation (\ref{fe}) implies that operating profit from all markets are expected to merely compensate for the sunk cost. 

Following prior literature, it is assumed here that marginal cost draws follow Pareto distribution: $G_{i}(c) = \left(\frac{c}{c_{M}^{i}}\right)^{k}$ so that firms draw the marginal cost $c$ from the range $\left[0, c_{M}^{i}\right]$. The term $c_{M}^{i}$ is the exogenous competitiveness in market $i$. Lower values indicates that a firm in country $i$ is more likely to draw a lower cost. The power $k$ governs the Pareto distribution dispersion: higher $k$ implies that the distribution of $c$ is more concentrated. 

Given profit as in (\ref{profit}), one can re-write the equilibrium condition in (\ref{fe}) for country $i$ so that the summation of expected profit from each market equals the sunk cost paid.
\begin{equation}
\sum_{j\in J}\frac{L^{j}}{4\gamma} \left[ 1 +\left(\kappa_{j}-\mu_{i}\right)\rho_{ij}\right]  \tau_{ij}^{-k} (c_{D}^{j})^{k+2} = \frac{(k+1)(k+2)\gamma f_{e}}{2} \left(c_{M}^{i}\right)^{k}
\end{equation}
The same equilibrium condition can be written for each country $h$, in the matrix form. In the multi-country case, the vector of cost threshold $c_{D}$ for each country $j$ satisfies:
\begin{equation}\label{fe_matrix}
\mathbf{B}*\mathbf{L}*\mathbf{c_{D}}^{k+2} = 2\gamma\left(k+1\right)\left(k+2\right)\mathbf{f_{e}}\mathbf{c_{M}}^{k}
\end{equation}
where matrix $B$ is the the "quality adjusted" matrix of trade freeness. If there are a total of $J$ countries, the dimension is $J\times J$. This matrix is represented as the following form:

\[ 
\textbf{B} =\left[
\begin{array}{cccc}

    \left[ 1 +\left(\kappa_{1}-\mu_{1}\right)\rho_{11}\right] \tau_{11}^{-k}        & \left[ 1 +\left(\kappa_{2}-\mu_{1}\right)\rho_{12}\right] \tau_{12}^{-k}  & \dots & \left[ 1 +\left(\kappa_{J}-\mu_{1}\right)\rho_{1J}\right] \tau_{1J}^{-k} \\
    
    \left[ 1 +\left(\kappa_{1}-\mu_{2}\right)\rho_{21}\right] \tau_{21}^{-k}       & \left[ 1 +\left(\kappa_{2}-\mu_{2}\right)\rho_{22}\right] \tau_{22}^{-k} & \dots & \left[ 1 +\left(\kappa_{J}-\mu_{2}\right)\rho_{2J}\right] \tau_{2J}^{-k} \\
    
    \dots\\
    
     \left[ 1 +\left(\kappa_{1}-\mu_{J}\right)\rho_{J1}\right] \tau_{J1}^{-k}        & \left[ 1 +\left(\kappa_{2}-\mu_{J}\right)\rho_{J2}\right] \tau_{J2}^{-k}    & \dots & \left[ 1 +\left(\kappa_{J}-\mu_{J}\right)\rho_{JJ}\right] \tau_{JJ}^{-k}
\end{array}\right]
\]

It is noticeable that whether elements in $B$ are larger or smaller than one depends on the trade freeness, $\kappa$'s and $\mu$'s. If trade cost is small (i.e., $\tau_{ij}^{-k}$ is larger) and $\kappa_{j}$ is large, $i, jth$ element can exceed one. 

The matrix $L$ is of dimension $J \times J$ and is the diagonal matrix of market size $L_{i}$ for each country $i$:

\[ 
\textbf{L} =\left[
\begin{array}{cccc}

    L_{1}      & 0 & \dots & 0 \\
    
   0      & L_{2} & \dots & 0 \\
    
    \dots\\
    
     0        & 0   & \dots & L_{J}
\end{array}\right]
\]

The last term on the right-hand-side of (\ref{fe_matrix}) $\mathbf{c_{D}}^{k+2}$ is the vector of cutoffs (to the power of $k+2$), with the $ith$ element being $(c_{D}^{i})^{k+2}$. On the right hand side of Equation (\ref{fe_matrix}), $c_{M}^{k}$ is the vector of exogenous competitiveness (to the power $k$), with $ith$ element being $c_{M}^{i, k}$. Both $c_{D}^{k}$ and $c_{M}^{k}$ are of dimension $J\times 1$.

By inverting matrix $B$, cost threshold $c_{D}^{j}$ for each country $j$ can be computed as follows: 
\begin{equation}\label{cutoff}
\left(c_{D}^{j}\right)^{k+2} = \frac{2\gamma(k+1)(k+2)f_{e}}{\mid B\mid} \frac{\sum_{i}\mid C_{ij}\mid\left(c_{M}^{i}\right)^{k}}{L_{j}}
\end{equation}
where $\mid B \mid$ is the determinant of matrix $B$ and $C_{ij}$ is the cofactor of $B_{ij}$. A closer examination of the matrices and Equation (\ref{cutoff}) implies that higher market size $L_{j}$ leads to a lower calculated cutoff, all else being equal.\footnote{This also holds under multi-product setting in Appendix \ref{multiproduct}.}
If ice-berg trade cost is symmetric, i.e. $\tau_{ij} = \tau_{ji}$, bilateral trade liberalization can have heterogeneous effects on cost cutoffs of different countries. Following this argument, bilateral trade liberalization can have negative externalities on countries without trade liberalization.
\begin{prop}
If ice-berg trade cost is symmetric, i.e. $\tau_{ij} = \tau_{ji}$, the effect of a bilateral trade liberalization between $i$ and $j$ on cutoffs is not universal i.e. $\exists$ $k$, $k'$ $\in \lbrace 1,...,J \rbrace$ such that $\frac{\partial c_{D}^{k}}{\partial \tau_{ij}} >0$ and $\frac{\partial c_{D}^{k'}}{\partial \tau_{ij}} <0$. Additionally, $\exists$ $\textbf{B}$ such that $\frac{\partial c_{D}^{i}}{\partial \tau_{ij}}$, $\frac{\partial c_{D}^{j}}{\partial \tau_{ij}}$ $>0$ and $\frac{\partial c_{D}^{k}}{\partial \tau_{ij}} <0$, $\forall k\neq i, j$. The larger the trade liberalizing economies, the higher the effects on the cutoff of country $k \in \lbrace 1,...,J \rbrace$. 
\end{prop}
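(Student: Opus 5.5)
We will work directly with the free-entry system (\ref{fe_matrix}), viewing the cutoffs as implicitly defined through the linear system $\mathbf{B}\mathbf{L}\mathbf{x} = \mathbf{d}$, where $x_j = (c_D^j)^{k+2}$ and $d_i = 2\gamma(k+1)(k+2)f_e (c_M^i)^k$. Since $\mathbf{d}$ does not depend on $\tau$, differentiating with respect to the common bilateral cost $\tau_{ij}=\tau_{ji}$ gives
\begin{equation*}
\frac{\partial \mathbf{x}}{\partial \tau_{ij}} = -(\mathbf{B}\mathbf{L})^{-1}\frac{\partial \mathbf{B}}{\partial \tau_{ij}}\mathbf{L}\mathbf{x},
\end{equation*}
and $\partial \mathbf{B}/\partial\tau_{ij}$ has only two nonzero entries, at positions $(i,j)$ and $(j,i)$. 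Both are negative, equal to $-k\tau_{ij}^{-k-1}[1+(\kappa_j-\mu_i)\rho_{ij}]$ and its mirror image. Hence $\partial\mathbf{B}/\partial\tau_{ij}\,\mathbf{L}\mathbf{x}$ is the vector $-(\beta_{ij}L_j x_j)e_i - (\beta_{ji}L_i x_i)e_j$ with $\beta_{ij},\beta_{ji}>0$. Therefore
\begin{equation*}
\frac{\partial x_m}{\partial \tau_{ij}} = \frac{1}{L_m}\left[(B^{-1})_{mi}\,\beta_{ij}L_j x_j + (B^{-1})_{mj}\,\beta_{ji}L_i x_i\right],
\end{equation*}
which, via cofactors, is the derivative analogue of (\ref{cutoff}). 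The sign of $\partial c_D^m/\partial\tau_{ij}$ equals the sign of this bracket, so everything reduces to the sign pattern of columns $i$ and $j$ of $B^{-1}$.

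\textbf{Non-universality.} We use the identity $\mathbf{1}'\ldots$ we instead rely on the fact that $B$ is a positive matrix whose diagonal (domestic) entries dominate, since $\tau_{ii}=1$ and $\tau_{ij}>1$. For such a matrix the inverse has positive diagonal entries and, generically, some negative off-diagonal entries. More concretely, the vector $\mathbf{y}=\partial \mathbf{x}/\partial\tau_{ij}$ satisfies
\begin{equation*}
\mathbf{B}\mathbf{L}\mathbf{y} = \beta_{ij}L_jx_j e_i + \beta_{ji}L_ix_i e_j \geq 0.
\end{equation*}
Rows $m\neq i,j$ then give $\sum_l B_{ml}L_l y_l = 0$ with all $B_{ml}>0$. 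So as soon as some $y_l\neq 0$, which follows from invertibility and a nonzero right-hand side, $\mathbf{y}$ cannot be of one sign across all coordinates. This yields indices $k,k'$ with opposite signs and proves the first claim. We need $J\geq 3$ for such a row to exist; the case $J=2$ must be treated separately or excluded.

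\textbf{Existence of a specific $\mathbf{B}$.} We will construct the matrix explicitly. Take $B$ close to the identity, with small off-diagonal entries $\varepsilon b_{ml}$ (large $\tau$ for all non-domestic pairs). Then $B^{-1}=I-\varepsilon \tilde B+O(\varepsilon^2)$. The bracket for $m=i$ is dominated by $(B^{-1})_{ii}\beta_{ij}L_jx_j>0$, and similarly for $m=j$, so both are positive. For $m\neq i,j$ the bracket is $-\varepsilon(b_{mi}\beta_{ij}L_jx_j+b_{mj}\beta_{ji}L_ix_i)+O(\varepsilon^2)<0$. This gives exactly the stated pattern: liberalization lowers the cutoffs of $i$ and $j$ and raises everyone else's. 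The main obstacle is making this perturbation rigorous, since $x$ itself depends on $B$. We handle it by noting that $x$ stays bounded and positive as $\varepsilon\to 0$ (it converges to $d_m/L_m$), so the leading-order signs persist for small $\varepsilon$.

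\textbf{Size effect.} From the displayed formula, the magnitude $|\partial x_m/\partial\tau_{ij}|$ is linear in $L_jx_j$ and $L_ix_i$. At leading order $L_ix_i\approx d_i$, so size enters through the weights $L_i,L_j$ and the entries $\beta$. We will argue monotonicity by differentiating the bracket with respect to $L_i$ (or $L_j$) at the leading order of the perturbation, where $L_ix_i$ and the magnitude of $\beta_{ij}$ through $\rho_{ij}$ (which is increasing in $L_j$) increase. The effect on a third country $k$ therefore grows with the size of the liberalizing economies. We expect this comparative static to hold cleanly only in the near-diagonal regime, and we will state it in that regime.
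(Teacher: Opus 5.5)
Your handling of the first two claims is sound. The non-universality step is the paper's own argument. The paper differentiates the free-entry system in $\tau_{ij}^{-k}$ rather than $\tau_{ij}$, but it likewise uses that the rows $d\neq i,j$ are zero-sum with strictly positive weights $B_{dd'}L_{d'}$, so a nonzero derivative vector must take both signs. You are right that this needs $J\ge 3$: for $J=2$ with near-diagonal $\mathbf{B}$, both cutoffs fall under liberalization.

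Your expression $\partial x_m/\partial\tau_{ij}=L_m^{-1}\bigl[(B^{-1})_{mi}\beta_{ij}L_jx_j+(B^{-1})_{mj}\beta_{ji}L_ix_i\bigr]$ is the correct form of the paper's closing cofactor formula, which repeats the $(i,j)$ term and drops the factor $1/L_d$. For the ``$\exists\,\mathbf{B}$'' claim the paper only remarks that the signs depend on the cofactors, so your near-diagonal construction is a genuine addition. Two repairs are needed there:
\begin{itemize}
\item Delete the diagonal-dominance sentence. It is not needed, and $\tau_{ii}=1<\tau_{ml}$ does not imply it, because a foreign supplier's quality factor can exceed the domestic one.
\item Fix the order bookkeeping, since $\beta_{ij}=kB_{ij}/\tau_{ij}$ is itself $O(\varepsilon)$. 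For $m\neq i,j$ no comparison is needed, as $(B^{-1})_{mi},(B^{-1})_{mj}<0$ for small $\varepsilon$. For $m=i$, compare $(B^{-1})_{ii}B_{ij}L_jx_j=O(\varepsilon)$ with $|(B^{-1})_{ij}|B_{ji}L_ix_i=O(\varepsilon^2)$.
\end{itemize}

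The size claim is where the proposal fails. You argue that $L_ix_i$ grows with $L_i$, but free entry pins down $\mathbf{L}\mathbf{x}=\mathbf{B}^{-1}\mathbf{d}$. In your regime $L_ix_i\approx d_i/B_{ii}$ (you yourself wrote $L_ix_i\approx d_i$), which is flat in $L_i$ without quality and decreasing in $L_i$ with it.

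Market sizes enter only through $A_{ml}:=1+(\kappa_l-\mu_m)\rho_{ml}=4\delta\gamma/\bigl(4\delta\gamma-L_l(\kappa_l-\mu_m)^2\bigr)$, with $B_{ml}=A_{ml}\tau_{ml}^{-k}$. So a larger $L_i$ scales up all of column $i$ of $\mathbf{B}$, diagonal included. To leading order, for $m\neq i,j$,
\[
L_m\frac{\partial x_m}{\partial\tau_{ij}}\approx-\frac{k\,\tau_{ij}^{-k-1}}{A_{mm}}\left[\tau_{mi}^{-k}\,\frac{A_{mi}}{A_{ii}}\,\frac{A_{ij}}{A_{jj}}\,d_j+\tau_{mj}^{-k}\,\frac{A_{mj}}{A_{jj}}\,\frac{A_{ji}}{A_{ii}}\,d_i\right],
\]
while $x_m\approx d_m/(L_mA_{mm})$ involves neither $L_i$ nor $L_j$. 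The rise of $\beta_{ij}$ through $\rho_{ij}$ that you invoke is therefore netted against the rise of $B_{jj}$. What matters are ratios like $A_{ij}/A_{jj}$ and $A_{mi}/A_{ii}$, and $\partial\ln(A_{mi}/A_{ii})/\partial L_i=(A_{mi}-A_{ii})/L_i$ is positive iff $|\kappa_i-\mu_m|>|\kappa_i-\mu_i|$. So the sign is ambiguous even in the near-diagonal regime.

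Worse, consider the case where all $A_{ml}=1$ (e.g.\ all $\kappa_l$ and $\mu_m$ equal, the MO nest). Then $\mathbf{B}$ does not depend on $\mathbf{L}$, hence neither do $\mathbf{B}^{-1}\mathbf{d}$ and its $\tau_{ij}$-derivative. It follows that $\partial c_D^m/\partial\tau_{ij}=\frac{c_D^m}{k+2}\,\partial\ln(\mathbf{B}^{-1}\mathbf{d})_m/\partial\tau_{ij}$ is exactly independent of $L_i$ and $L_j$.

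The last sentence of the proposition therefore cannot be proved without an extra hypothesis. One sufficient condition is $|\kappa_i-\mu_l|\ge|\kappa_i-\mu_i|$ for $l=j,m$, together with the mirror condition for $j$; this gives monotonicity at leading order. The paper's proof supplies no such condition: it stops at the cofactor formula.
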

\begin{proof}
See Appendix \ref{apptwocountry}
\end{proof}
The intuition for the negative externalities is comparable to the special case of two-country world. It is driven by the long-term entry.In the long run, the market entries in trade-liberalizing economies rise. This drives up the competitiveness, i.e. the cost cutoff decrease as a result. For other countries without trade liberalization, the profit from selling to these markets are higher, relative to countries with trade liberalization. This results in decreased entries and higher cost threshold.

Another implication from (\ref{cutoff}) is on the effect of a preference shock, i.e. a change in $\kappa_{j}$, on cutoffs of countries. 

\begin{prop}
The effects of a shock in preference for quality $\kappa_{j}$ of any country $j$ on cutoffs are universal, i.e. $\exists$ $\textbf{B}$ such that $\frac{\partial c_{D}^{k}}{\partial \kappa_{k}} <0$, $\forall$ $k$. The larger the country with preference shock, the higher the effects on the cutoff of country $k \in \lbrace 1,...,J \rbrace$
\end{prop}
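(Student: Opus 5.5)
I will work directly from the matrix free-entry system (\ref{fe_matrix}), rewritten as $\mathbf{B}\mathbf{x}=\mathbf{m}$ with $x_j = L_j (c_D^j)^{k+2}$ and $m_i = 2\gamma(k+1)(k+2)f_e (c_M^i)^k$. The right-hand side does not depend on $\kappa$, so a preference shock acts only through $\mathbf{B}$. Differentiating with respect to $\kappa_j$ gives
\begin{equation*}
\frac{\partial \mathbf{x}}{\partial \kappa_j} = -\mathbf{B}^{-1}\frac{\partial \mathbf{B}}{\partial \kappa_j}\mathbf{x}.
\end{equation*}
The key observation is that $\kappa_j$ enters only column $j$ of $\mathbf{B}$, through $b_{ij}=[1+(\kappa_j-\mu_i)\rho_{ij}]\tau_{ij}^{-k}$. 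Hence $\partial \mathbf{B}/\partial\kappa_j = \mathbf{d}\,\mathbf{e}_j^{\top}$, where $d_i=\partial b_{ij}/\partial\kappa_j$, and therefore
\begin{equation*}
\frac{\partial \mathbf{x}}{\partial \kappa_j} = -x_j\,\mathbf{B}^{-1}\mathbf{d}.
\end{equation*}
So every cutoff responds through the common scalar $x_j=L_j(c_D^j)^{k+2}$ times the vector $-\mathbf{B}^{-1}\mathbf{d}$.

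The second step is to sign $\mathbf{d}$. I will show that $(\kappa_j-\mu_i)\rho_{ij}=A/(4\delta\gamma-A)$, with $A=L_j(\kappa_j-\mu_i)^2$, is increasing in $\kappa_j$ whenever $\kappa_j>\mu_i$ and the second-order condition $4\delta\gamma>A$ holds (which is needed for (\ref{quality}) to be a maximum). This gives $d_i>0$ for all $i$.

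The existence claim, that there is a $\mathbf{B}$ for which all cutoffs fall, then reduces to exhibiting a $\mathbf{B}$ with $\mathbf{B}^{-1}\mathbf{d}$ strictly positive. For this I will use the same kind of construction as in the previous proposition (Appendix \ref{apptwocountry}). Take $\mathbf{B}$ strictly diagonally dominant, with large domestic terms $b_{ii}$ (small $\tau_{ii}$, equivalently large trade costs abroad). Then $\mathbf{B}$ is close to a positive diagonal matrix, its inverse has a dominant positive diagonal, and $\mathbf{B}^{-1}\mathbf{d}\approx \mathrm{diag}(b_{ii}^{-1})\mathbf{d}>0$. A Neumann-series argument, $\mathbf{B}^{-1}=\mathbf{D}^{-1}\sum_n(-\mathbf{N}\mathbf{D}^{-1})^n$, bounds the off-diagonal error below the diagonal term, which yields $\partial x_k/\partial\kappa_j<0$ for every $k$. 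Because $c_D^k=(x_k/L_k)^{1/(k+2)}$ is monotone in $x_k$, this gives $\partial c_D^k/\partial\kappa_j<0$ for all $k$, and in particular for $k=j$ as in the statement.

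The main obstacle is the size comparison, ``the larger the country with the shock, the larger the effect''. $L_j$ enters in three places: the scalar $x_j=L_j(c_D^j)^{k+2}$, the vector $\mathbf{d}$ (since $\rho_{ij}$ rises with $L_j$), and $\mathbf{B}$ itself. I will first hold $\mathbf{m}$ and the other columns fixed. I will then show that $x_j$ is nondecreasing in $L_j$ and that each $d_i$ is increasing in $L_j$: $\partial d_i/\partial L_j>0$ follows by direct differentiation of $A/(4\delta\gamma-A)$ under the same second-order condition. The indirect effect of $L_j$ through the column-$j$ entries of $\mathbf{B}^{-1}$ is where I expect trouble. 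I will try to control it with the same diagonal-dominance construction, so that $\mathbf{B}^{-1}$ changes only at second order, and state the comparison for that class of $\mathbf{B}$, mirroring the qualifier ``$\exists\,\mathbf{B}$'' in the statement.
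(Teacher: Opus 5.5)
Your reduction follows the paper's own route. The paper differentiates the free-entry system, solves it with cofactors of $\mathbf{B}$, notes $\partial B_{hj}/\partial\kappa_j>0$, and then asserts without construction that a suitable $\mathbf{B}$ exists for all $i,j$; its final inequality is even printed reversed. Your $\partial\mathbf{x}/\partial\kappa_j=-x_j\mathbf{B}^{-1}\mathbf{d}$ is that formula correctly signed, and your second-order-condition argument for $\mathbf{d}>0$ is right.

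The gap is in the diagonal-dominance step. It does not give $\partial x_\ell/\partial\kappa_j<0$ for $\ell\neq j$, because $\mathbf{d}$ is not an independent positive vector. Since $1+(\kappa_j-\mu_i)\rho_{ij}=4\delta\gamma/\bigl(4\delta\gamma-L_j(\kappa_j-\mu_i)^2\bigr)$, one has $d_i=2\rho_{ij}b_{ij}$, so
\[
\mathbf{B}^{-1}\mathbf{d}=2\,\mathbf{B}^{-1}\,\mathrm{diag}(\rho_{1j},\dots,\rho_{Jj})\,\mathbf{B}\mathbf{e}_j .
\]
Let $\varepsilon$ be the off-diagonal/diagonal ratio. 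Then $d_\ell$ is of order $\varepsilon$, exactly like the off-diagonal entries you shrink. The $i=j$ term of the first Neumann correction, $-b_{\ell j}d_j/(b_{\ell\ell}b_{jj})$, is therefore of the same order as your leading term $d_\ell/b_{\ell\ell}$. To leading order, $(\mathbf{B}^{-1}\mathbf{d})_\ell\approx 2b_{\ell j}(\rho_{\ell j}-\rho_{jj})/b_{\ell\ell}$.

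Two cases make this concrete. For $J=2$, exactly $\partial x_2/\partial\kappa_1=-2x_1b_{11}b_{21}(\rho_{21}-\rho_{11})/\det\mathbf{B}$. If all $\mu_i$ coincide, the display gives $\mathbf{B}^{-1}\mathbf{d}=2\rho\,\mathbf{e}_j$, so third-country effects vanish for every $\mathbf{B}$.

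Negative cross effects therefore need conditions like $\mu_\ell<\mu_j$, which cannot hold for every ordered pair; for $J=2$, no $\mathbf{B}$ makes both cross effects negative. What your construction does prove, for all $j$ at once, is $(\mathbf{B}^{-1}\mathbf{d})_j=2\rho_{jj}+O(\varepsilon^2)>0$ when $\kappa_j>\mu_j$. That is exactly the displayed condition $\partial c_D^{\ell}/\partial\kappa_\ell<0$ for all $\ell$, and you should claim only that.

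The size comparison contains a second concrete error: $x_j$ is not nondecreasing in $L_j$. $L_j$ enters $\mathbf{B}$ only through column $j$, with $\partial\ln b_{ij}/\partial L_j=(\kappa_j-\mu_i)\rho_{ij}/L_j>0$. Hence $\partial\mathbf{x}/\partial L_j=-x_j\mathbf{B}^{-1}\partial_{L_j}(\mathbf{B}\mathbf{e}_j)$. In your regime $x_j\approx m_j/b_{jj}$ falls as $L_j$ rises, and with equal $\mu_i$ it falls for every $\mathbf{B}$. The three channels you list do not move together, so a factor-by-factor monotonicity argument cannot work.

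Compute the product instead. To leading order $\partial x_j/\partial\kappa_j\approx-2\rho_{jj}m_j/b_{jj}=-m_j\tau_{jj}^{k}L_j(\kappa_j-\mu_j)/(2\delta\gamma)$. Here the fall in $x_j$ exactly cancels the factor $\bigl(4\delta\gamma-L_j(\kappa_j-\mu_j)^2\bigr)^{-1}$ in $\rho_{jj}$. In cutoff terms, $\partial\ln c_D^j/\partial\kappa_j=-(\mathbf{B}^{-1}\mathbf{d})_j/(k+2)\approx-2\rho_{jj}/(k+2)$, whose magnitude grows with $L_j$ because $\rho_{jj}$ does. That establishes the size claim for the shocked country within your weak-trade class of $\mathbf{B}$.

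For third countries, no monotonicity in $L_j$ holds without extra restrictions, since their effect is identically zero when the $\mu_i$ coincide. The paper's proof supplies nothing further here. It only has the explicit prefactor $L_j(c_D^j)^{k+2}$, and that prefactor itself depends on $L_j$ through $\mathbf{B}$.
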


\begin{proof}
See Appendix \ref{apptwocountry}
\end{proof}

The intuition for this lies in the selection effect. Quality differentiation scope in selling to that market are widened following positive preference shock. Firms in all countries respond by rising quality of goods sold to that market. The responses are higher for more productive firms. Therefore, the productivity threshold of entering the market with preference shock rises. As for other countries without shocks, the least productive firms cannot profitably export to the market with the positive preference shock. This makes this group of firms unable to cover the sunk cost of entry. Thus, the least productive firms are driven out of the market, in all countries. The productivity thresholds of other countries without any preference shock also rise.

Additionally, the average bilateral \textit{f.o.b} price $\bar{p}_{ij}$ and trade value $r_{ij}$ can be computed by aggregating over $c\in \left[0, c_{D}^{i}\right]$\footnote{Trade value is computed based on \textit{f.o.b} price.}

\begin{equation}\label{price}
\bar{p}_{ij} = \frac{1}{2} \left[ \frac{2k+1}{k+1} + \frac{1}{k+1}\left(\kappa_{j}+\mu_{i}\right)\rho_{ij}\right]\frac{c_{D}^{j}}{\tau_{ij}}
\end{equation}
This is related to the first stylized fact, which links price with market size. From (\ref{price}), one can observe that market size affect price both by decreasing the cost cutoff $c_{D}^{j}$ and by increasing the scope for quality differentiation. If the latter dominates, the relationship between price and market size is similar to Figure (\ref{fig:stylized1_hs30}), otherwise, it is closer to (\ref{fig:stylized1_hs24}). 

The total trade value from $i$ to $j$  
\begin{equation}\label{tradeflow}
r_{ij} = \frac{kN_{i}^{E}(c_{m}^{i})^{-k}}{2\gamma} L^{j}(\tau_{ij})^{-(k+1)} \left(c_{D}^{j}\right)^{k+2}\left[1+(\kappa_{j}-\mu_{i})\rho_{ij}\right] \left(\frac{1}{k(k+2)}+\frac{(\kappa_{j}+\mu_{i})\rho_{ij}}{k(k+1)(k+2)}\right)
\end{equation}
where $N_{i}^{E}$ denotes the expected number of entrants in country $i$. From Equation (\ref{price}) and (\ref{tradeflow}), the quality components have two effects: higher $\mu_{i}$ raises prices and reduces the trade values because of the additional (variable) costs incurred in manufacturing higher-quality goods in the origin country. Higher quality preferences in the destination country raise the willingness to pay and thus average price and trade value rise. Another countervailing effect is that, higher quality preferences drive down the cost cutoffs in the importing country thus to some extent lowering prices and values. 

Finally, one can compute other aggregate variables, which are the number of varieties, entries, price indexes and welfare. The expected number of varieties in a country $j$ (i.e., the number of producers from domestic and abroad servicing market $j$) is similar to the M.O model multiplied by a quality-adjustment term that is negatively related to $\kappa$ (of the market $j$) and positively affected by $\mu$ (of each sourcing country). That can be characterized as:

\begin{equation}\label{varieties}
N_{j} = \frac{2(k+1)\gamma}{\eta}\frac{\alpha-c_{D}^{j}}{c_{D}^{j}} \frac{1}{\overline{1+(\kappa_{j}-\mu)\rho}}
\end{equation}
where $\overline{\left[1+(\kappa_{j}-\mu)\rho\right]} = \lbrace \left[1+(\kappa_{j}-\mu_{1})\rho_{1j}\right]N_{1j} + \left[1+(\kappa_{j}-\mu_{2})\rho_{2j}\right]N_{2j} + ... \rbrace / N_{j}$. $N_{kj}$ denotes the number of varieties sold from country $k$ to country $j$. Equation (\ref{varieties}) indicates that a higher preference for quality can be countervailing: $\kappa$ can directly lower the number of varieties because it imposes higher requirements for the quality firms offer to market $j$ such that fewer firms can meet the high-quality requirement. On the other hand, higher quality preferences lower the cutoffs as is previously argued, which then encourages entry by raising average profits. 

One can solve for the number of entrants by noting that the number of producers from origin to destinations depends on the exogenous competitiveness of the origin country and the cutoff in the destination country. Equation (\ref{varieties}) can be used to solve for the number of entrants in each country. Notice that the number of bilateral varieties $N_{ij}$ satisfies the following condition
\begin{center}
$N_{ij}=N_{i}^{E}*G(c_{D}^{ij})=N_{i}^{E}\tau_{ij}^{-k}c_{j}^{k}(c_{M}^{i})^{-k}$
\end{center}
For each country $j$, the total number of varieties from each sourcing country should equal Equation (\ref{varieties}), which can be represented in the following condition
\begin{equation}
\sum_{i\in J} N^{E}_{i}\left[1+(\kappa_{j}-\mu_{i})\rho_{ij}\right](\tau_{ij}c_{M}^{i})^{-k} =\frac{2(k+1)\gamma}{\eta}\frac{\alpha-c^{j}_{D}}{c^{j,k+1}_{D}}
\end{equation}

Under the multi-country set-up, refer to M.O model, re-arranging (\ref{varieties}) leads to the following condition:
\begin{equation}\label{entrants}
D*c_{M}^{k}*N^{E}=F
\end{equation}
If there are $J$ countries, $D$ has the dimension of $J\times J$ and is the transpose of matrix $B$:

\[ 
D =\left[
\begin{array}{cccc}

    \left[ 1 +\left(\kappa_{1}-\mu_{1}\right)\rho_{11}\right] \tau_{11}^{-k}        & \left[ 1 +\left(\kappa_{1}-\mu_{2}\right)\rho_{21}\right] \tau_{21}^{-k}  & \dots & \left[ 1 +\left(\kappa_{1}-\mu_{J}\right)\rho_{J1}\right] \tau_{J1}^{-k} \\
    
    \left[ 1 +\left(\kappa_{2}-\mu_{1}\right)\rho_{12}\right] \tau_{12}^{-k}       & \left[ 1 +\left(\kappa_{2}-\mu_{2}\right)\rho_{22}\right] \tau_{22}^{-k} & \dots & \left[ 1 +\left(\kappa_{2}-\mu_{J}\right)\rho_{J2}\right] \tau_{J2}^{-k} \\
    
    \dots\\
    
     \left[ 1 +\left(\kappa_{J}-\mu_{1}\right)\rho_{1J}\right] \tau_{J1}^{-k}        & \left[ 1 +\left(\kappa_{J}-\mu_{2}\right)\rho_{2J}\right] \tau_{2J}^{-k}    & \dots & \left[ 1 +\left(\kappa_{J}-\mu_{J}\right)\rho_{JJ}\right] \tau_{JJ}^{-k}
\end{array}\right]
\]

The second term on the left hand side $c_{M}^{-k}$ is the diagonal matrix of $c_{M}^{i,-k}$ and is also dimension of $J\times J$ written as 
\[ 
c_{M}^{k} =\left[
\begin{array}{cccc}

    c_{M}^{1,-k}      & 0 & \dots & 0 \\
    
   0      & c_{M}^{2,-k} & \dots & 0 \\
    
    \dots\\
    
     0        & 0   & \dots & c_{M}^{J,-k}
\end{array}\right]
\]

and the third term $N^{E}$ denotes the vector of entrants with dimension $J*1$. The $i-th$ element is $N^{E}_{i}$.

On the right-hand-side, matrix $F$ is of dimension $J*1$ and the $i-th$ element is
\begin{center}
$\frac{2\gamma(k+1)(\alpha-c_{D}^{i})}{\eta c_{D}^{i,k+1}}$
\end{center}

The average price in a market is obtained by aggregating over the average delivered price over destinations. Using the bilateral price in (\ref{price}), it is implied that the (weighted) expected price levels in country $i$ are as follows:
\begin{equation}\label{average_price}
p_{i} = \frac{2k+1+\overline{(\kappa_{i}+\mu)\rho}}{2(k+1)} c_{D}^{i}
\end{equation}
where $\overline{(\kappa_{i}+\mu)\rho} = \left[(\kappa_{i}+\mu_{1})\rho_{1i}N_{1i} + (\kappa_{i}+\mu_{2})\rho_{2i}N_{2i} +...\right]/N_{i}$. From (\ref{average_price}), similar argument can be applied: higher $\kappa$ can lower the cutoff thus lowering the price index. However, this raises prices, since the consumers could have a higher willingness to pay for high quality. 

Finally, the welfare in country $i$ is as follows:

\begin{eqnarray}\label{utility}
U_{i} &=& 1 + \sum_{s} \left[\frac{1}{2} \frac{N_{i}(\alpha_{s}-\bar{p}_{i}+\beta\bar{z}_{i})^{2}}{\gamma_{s} + \eta_{s} N}+\frac{N_{i}}{\gamma_{s}} \left(\frac{1}{2} \sigma_{p}^{2} +\frac{1}{2}\beta^{2}\sigma_{z}^{2} -\beta Cov(p,z)\right)\right] \nonumber\\&=& 1 + \sum_{s}\frac{1}{2\eta_{s}}(\alpha_{s}-c_{D}^{i})\left[\alpha_{s}-\frac{k+1}{k+2}c_{D}^{i}+\frac{\overline{(\kappa_{is}-\mu_{s})\rho_{jis}}}{k+2}c_{D}^{i}\right]
\end{eqnarray}
where the quality component $\frac{\overline{(\kappa_{is}-\mu)\rho_{jis}}}{k+2}$ is defined in the similar manner as previous. Other things equal, higher valuation on quality always raises welfare: it directly raises welfare  by raising the utility obtained from consuming higher quality goods; it indirectly increases welfare by reducing the cost cutoff so that more productive firms enters and provide higher quality goods.\footnote{The third term in the square bracket has the opposite effect compared with the first two terms. I attempted to compute the last term (multiplied by $\frac{1}{2\eta} (\alpha-c_{D}^{i})$) and found that its proportion to the total welfare is less than 5\%.}

The above derivations indicate that the existence of quality can be double-sides. The subsequent proposition argues that if the quality scope is high, the quality preference amplifies the effect of the trade cost change.
\begin{prop}
The larger the scope for quality differentiation $1+(\kappa_{j}-\mu_{i})\rho_{ij} $ $(i, j \in \lbrace 1,...,J\rbrace)$, the more likely the selection effect of quality preference $\kappa_{j}$, i.e. $\frac{\partial^{2} (c_{D}^{h})^{k+2}}{\partial \kappa_{j}\partial \tau_{ij}^{-k}}$ smaller. 
\end{prop}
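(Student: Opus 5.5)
Write $x:=(\mathbf{c_D})^{k+2}$ for the vector of transformed cutoffs and $\phi_{ij}:=\tau_{ij}^{-k}$ for trade freeness. In the matrix free-entry condition (\ref{fe_matrix}), the $(i,j)$ entry of $\mathbf{B}$ is $b_{ij}=s_{ij}\phi_{ij}$, where $s_{ij}:=1+(\kappa_j-\mu_i)\rho_{ij}$ is the scope for quality differentiation. The right-hand side $v:=2\gamma(k+1)(k+2)\mathbf{f_e}\mathbf{c_M}^k$ depends neither on $\kappa_j$ nor on $\phi_{ij}$. The plan is to differentiate the linear system $\mathbf{B}\mathbf{L}x=v$ twice, once in $\phi_{ij}$ and once in $\kappa_j$, and sign the cross derivative using the cofactor representation (\ref{cutoff}).

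\textbf{Step 1 (first derivatives).} From $\mathbf{B}\mathbf{L}x=v$ we have $x=\mathbf{L}^{-1}\mathbf{B}^{-1}v$. Hence
\[
\frac{\partial x}{\partial \phi_{ij}}=-\mathbf{L}^{-1}\mathbf{B}^{-1}\frac{\partial \mathbf{B}}{\partial \phi_{ij}}\mathbf{L}x .
\]
Since $\partial \mathbf{B}/\partial\phi_{ij}=s_{ij}e_ie_j^{\top}$, component $h$ is
\[
\frac{\partial x_h}{\partial\phi_{ij}}=-\frac{s_{ij}}{L_h}\,(\mathbf{B}^{-1})_{hi}\,L_jx_j .
\]
In cofactor form, $(\mathbf{B}^{-1})_{hi}=|C_{ih}|/|\mathbf{B}|$, which is the same objects used in (\ref{cutoff}) and in the proofs of the two preceding propositions. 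This gives $\partial x_h/\partial\phi_{ij}=-s_{ij}L_jx_j|C_{ih}|/(L_h|\mathbf{B}|)$. The liberalization effect on every country's cutoff is therefore proportional to the scope $s_{ij}$.

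\textbf{Step 2 (cross derivative).} Differentiate the expression from Step 1 in $\kappa_j$. By the product rule this yields three groups of terms:
\begin{itemize}
\item[(a)] $-\bigl(\partial s_{ij}/\partial\kappa_j\bigr)(\mathbf{B}^{-1})_{hi}L_jx_j/L_h$;
\item[(b)] $-s_{ij}\bigl(\partial(\mathbf{B}^{-1})_{hi}/\partial\kappa_j\bigr)L_jx_j/L_h$, where $\partial\mathbf{B}^{-1}/\partial\kappa_j=-\mathbf{B}^{-1}(\partial\mathbf{B}/\partial\kappa_j)\mathbf{B}^{-1}$ and $\partial\mathbf{B}/\partial\kappa_j$ is supported on column $j$ with entries $\phi_{lj}\,\partial s_{lj}/\partial\kappa_j$;
\item[(c)] $-s_{ij}(\mathbf{B}^{-1})_{hi}L_j\,(\partial x_j/\partial\kappa_j)/L_h$.
\end{itemize}
For term (c) I would invoke the second proposition, which says that $\exists\,\mathbf{B}$ with $\partial c_D^j/\partial\kappa_j<0$, so $\partial x_j/\partial\kappa_j<0$.

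The scope derivative is direct. With $a:=\kappa_j-\mu_i$ and $\rho_{ij}=L_ja/(4\delta\gamma-L_ja^2)$, we get $s_{ij}=4\delta\gamma/(4\delta\gamma-L_ja^2)$. Then $\partial s_{ij}/\partial\kappa_j=8\delta\gamma L_ja/(4\delta\gamma-L_ja^2)^2>0$. This derivative is increasing in $s_{ij}$; in fact it equals $2\rho_{ij}s_{ij}$. So a larger scope magnifies term (a).

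\textbf{Step 3 (signing).} I would work in the same class of $\mathbf{B}$ used for the earlier propositions: diagonally dominant with positive cofactors, which arises when domestic freeness dominates. In that class $(\mathbf{B}^{-1})_{hi}$ is positive for $h=i$ and small or negative off the diagonal. Term (a) and term (c) then both enter with a common sign, and each is scaled by $s_{ij}$ or by $\partial s_{ij}/\partial\kappa_j=2\rho_{ij}s_{ij}$. Thus the cross derivative is decreasing in $s_{ij}$ and becomes more negative as the scope grows, which is the claim that $\partial^2 x_h/\partial\kappa_j\partial\phi_{ij}$ is smaller. Finally, since $c_D^h=x_h^{1/(k+2)}$ is monotone, the sign statements transfer to the cutoffs themselves.

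\textbf{Main obstacle.} The hardest part is term (b). It is a second-order perturbation of $\mathbf{B}^{-1}$ and has no definite sign for general $\mathbf{B}$. My first attempt would be to bound it using a Neumann expansion $\mathbf{B}^{-1}=\mathrm{diag}(\mathbf{B})^{-1}\sum_n(-E)^n$ around the diagonally dominant case, where $E$ collects the normalized off-diagonal entries. This would show that term (b) is of higher order in the off-diagonal freeness. Consistent with the existential phrasing of the earlier propositions, the result would be stated for such $\mathbf{B}$ (small $\tau_{lj}^{-k}$ for $l\neq j$).
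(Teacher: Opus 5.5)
\textbf{The decomposition is the paper's; the signing step fails.} Your Step 1--2 decomposition is the same as the paper's. The paper differentiates the free-entry system in $\kappa_j$ and then in $\tau_{ij}^{-k}$, and its three forcing terms are exactly your (c), (a) and (b). The paper then only asserts that some $\mathbf{B}$ makes the cross partial negative.

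The gap is in Step 3: terms (a) and (c) do not share a sign. Both carry the factor $-(\mathbf{B}^{-1})_{hi}L_j/L_h$, so
\[
\text{(a)}+\text{(c)} = -(\mathbf{B}^{-1})_{hi}\,\frac{L_j}{L_h}\,\frac{\partial (s_{ij}x_j)}{\partial\kappa_j}
= -(\mathbf{B}^{-1})_{hi}\,\frac{L_j}{L_h}\,s_{ij}x_j\left(2\rho_{ij}+\frac{\partial \ln x_j}{\partial\kappa_j}\right).
\]
You have $\partial s_{ij}/\partial\kappa_j>0$, and the input you import from Proposition 2 gives $\partial x_j/\partial\kappa_j<0$. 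Hence (a) and (c) have \emph{opposite} signs for every $h$. A higher $\kappa_j$ raises the scope that multiplies the liberalization effect, but it lowers $x_j=(c_D^j)^{k+2}$, which scales that effect down. Reading off the prefactors $s_{ij}$ and $2\rho_{ij}s_{ij}$ therefore cannot sign the cross partial. Nor can it show that the cross partial is decreasing in $s_{ij}$.

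\textbf{A counterexample inside your own class.} Take $J=2$ and $(i,j)=(1,2)$, and let $\phi_{12},\phi_{21}\to 0$. In this limit:
\begin{itemize}
\item $x_2=v_2/(L_2B_{22})$;
\item $\partial x_2/\partial\kappa_2=-2\rho_{22}x_2<0$;
\item term (b) vanishes;
\item the cross partial is
\[
\frac{\partial^2 x_1}{\partial\kappa_2\,\partial\phi_{12}} = -\frac{2L_2\,s_{12}\,x_2}{L_1 B_{11}}\left(\rho_{12}-\rho_{22}\right).
\]
\end{itemize}
This is positive whenever $\mu_1>\mu_2$, even though all your first-order inputs hold. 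For $\kappa_2-\mu_l>0$ within the second-order-condition range, $\mu_1>\mu_2$ is equivalent to $s_{12}<s_{22}$.

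\textbf{What can be salvaged.} The same computation gives a correct and sharper version of the proposition near autarky. The cross effect on exporter $i$'s own cutoff is negative exactly when $s_{ij}$ exceeds the domestic scope $s_{jj}$ of market $j$. To prove a result of this kind, keep (a) and (c) together and compare $2\rho_{ij}$ with $-\partial\ln x_j/\partial\kappa_j$. Then control (b), as you planned.

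\textbf{Two smaller points.}
\begin{itemize}
\item With positive off-diagonal entries, the off-diagonal cofactors of a diagonally dominant $\mathbf{B}$ are negative at first order. So ``diagonally dominant with positive cofactors'' is not a coherent class.
\item Monotonicity of $x\mapsto x^{1/(k+2)}$ transfers the signs of first derivatives, not of cross partials: an extra term proportional to $\partial_{\kappa_j}x_h\,\partial_{\phi_{ij}}x_h$ appears. The statement is about $(c_D^h)^{k+2}$, so you do not need that last step.
\end{itemize}
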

\begin{proof}
See Appendix \ref{apptwocountry}.
\end{proof}

\subsection{Discussion}

The theory sketched above is based on linear demand, and it addresses the point of pricing-to-market. Specifically, it relates market toughness to the behavior of firms, and consequently, to the economic aggregates such as price index and welfare. In such a setting, market toughness operates through two channels: an increase in competition and an increase in the scope for quality differentiation. The theory identifies the second channel through which market toughness affects firms' outcomes. An increase in market toughness (e.g., an increase in market size or a decrease in trade costs) raises the scope of quality differentiation because it makes it easier for firms to recover the fixed cost of innovation. Under such circumstances, each firm responds by raising quality, mark-ups, and prices. The (endogenous) relation between the scope of quality differentiation and market toughness is a key element of the model and constitutes an important deviation from past work.\footnote{These include works using CES framework, for example \textcite{Gervais}, \textcite{Mandel2010HeterogeneousFA} and \textcite{FLY2015}, etc.} For the most productive firms, quality, prices and profit rise as the innovation effect dominates the competition. These firms can earn positive profit from exporting to more competitive market. This features the advantages of exporters. The theory provides clarity on the relation between prices, productivity, market shares, and quality. In heterogeneous firms' trade models, if no quality is present, these models predict a negative correlation between prices and productivity. However, if quality is present, and if higher quality indicates higher prices, then the correlation between prices and productivity, and (possibly) between prices and firm size becomes positive. Furthermore, since these models produce a quality sorting along the productivity axis, then the correlation between prices and quality is also positive.

The model is also connected to prior theoretical frameworks. For example, \textcite{FLXY} modified CES utility by allowing positive baseline utility. The derived price is then positively related to the destination market's income. Variable markup is implied by this modification. The slight difference from this model is that quality does not depend on market size. The implications of the model are also consistent with other forms of extensions of Melitz-Ottaviano under quality framework. For example, using slightly different extensions of the MO framework with quality, \textcite{bmnw} confirms the dominance of the quality-enhancing effect of competition using French firm-level data. 

Furthermore, the setting can be extended to multi-product case. For instance, \textcite{ECKEL2015216} discovers that core products have lower costs so that firms have more incentives to invest in upgrading quality in these groups of products. I follow \textcite{mmo} and extend the current framework by allowing firms to sell more than one product to a market.\footnote{Details of setup and derivations are in Appendix \ref{multiproduct}.} The implications are isomorphic to this extension. Under this setup, it is concluded that higher product customization costs lead to a more competitive market. For individual firms, rising market competitiveness exerts a larger effect on quality improvements of products closer to the core.  

\section{Quantification}\label{emprics}
This section discusses the dataset used in the study and the methodologies to compute relevant parameters. I focus on sectors that are for final consumption, since producers and consumers can have different attitudes towards quality.\footnote{Sectors for intermediate use, capital investment and final consumption are recognized by Broad Economic Classification (BEC) conversion.} As there is no sectoral interaction here, I carry out estimations sector by sector. According to the theoretical model, the parameters to be estimated are as follows: 1) preference $\kappa_{i,s}$ and cost $\mu_{i,s}$ for each country $i$ and sector $s$;  2) cutoffs (endogenous competitiveness) $c_{D}^{i,s}$ for each country/sector; and 3) cost upper-bound (exogenous competitiveness) $c_{M}^{i,s}$. Finally, sector $s$ refers to HS 2 product level. As one step in the counterfactual analysis, I perform the quantification under quality model and M.O model.

These are the steps to estimate the model:

\textbf{Step 1.} Given the bilateral trade flow in the data, the bilateral trade cost $\tau_{ij}$ can be estimated using aggregate bilateral trade flow in (\ref{tradeflow}).

\textbf{Step 2.} Obtain the residuals from the first step estimation and compute preference for quality $\kappa_{i}$ and cost of improving quality $\mu_{i}$ for each country $i$, using Generalized Method of Moments.

\textbf{Step 3.} With the above set of parameters, the cost cutoff (endogenous competitiveness, $c_{D}^{i}$) of each country is projected using bilateral price in (\ref{price}).

\textbf{Step 4.} With parameters from the above three steps,the exogenous competitiveness (multiplied by fixed cost $f_{e}$) $c_{M}^{i}f_{e}$ can be computed from free entry condition in (\ref{fe_matrix}).

\subsection{Data}
The main dataset used here is the BACI world trade database provided by CEPII. The origin of the dataset is COMTRADE of the United Nations Statistical Division. The advantage of BACI data is that it reconciles records of both exporter and importer when there are inconsistencies in the transaction records from both sides (\cite{CEPII:2010-23}). Thus, the dataset is more accurate since the reliability of the reported data of both exporter and importer are evaluated and cross-proofed. Another advantage is that the dataset is highly disaggregated: it reports the bilateral trade value and quantity at the HS6 level for more than 5,000 HS6 sectors. 


To complement the dataset, I collect bilateral remoteness data to proxy for ice-berg trade cost. The major statistics is from GeoDist data compiled by CEPII. This dataset records bilateral information on distance, and other variables used in gravity equations to identify particular links between two countries. These variables include colonial relation, common languages, the contiguity (\cite{CEPII:2011-25}). Further, data on bilateral regional trade agreement and bilateral common currency relations are collected from de Sousa. These variables are used to proxy bilateral ice-berg trade cost. Finally, I proxy market size by population and this dataset is from Penn World Table.

I obtain Pareto distribution parameter from a global firm-level database. Here I use Orbis dataset for this purpose. The ORBIS database (compiled by the Bureau van Dijk Electronic Publishing, BvD) is
a commercial dataset, which contains administrative data on 130 million firms worldwide.
ORBIS is an umbrella product that provides firm-level data covering approximately 100+ countries, both
developed and emerging, since 2008. This dataset covers both public and private firms. I access the financial module to obtain firm-level variables. Following \textcite{NBERw21558}, I use total asset, employment and material cost (either recorded in the original dataset or imputed by subtracting the total cost of employees from the cost of goods sold) to estimate total factor productivity (TFP) at the firm level. To get $k$, I regress firm ranking (in TFP) on (computed) firm productivity.\footnote{I use \textcite{LP2003} to compute TFP. In the sample of firms, the majority consists of manufacturing.} This gives $k=3.38$. 

Prior to the empirical studies, some summary statistics for the feature of the data are displayed first. Table \ref{summary1} reports the number of observations for each HS 2 sector. A small number of observations in one sector indicates that zero trade occur very frequently. It is implied that sectors are heterogeneous in international trade transactions: HS 61 and HS 62 (clothing) has the most observations, followed by HS 85 (machinery/equipment). Table \ref{summary2} and Table \ref{summary3} report the number of HS 6 products a country imports (Table \ref{summary2}) or exports (Table \ref{summary3}). From Table \ref{summary2}, the USA imports the highest number of products, followed by Italy, Netherlands and United Kingdom. From Table \ref{summary3}, China and the USA export the highest number of products, followed by Netherlands and Italy and United Kingdom. To some extent, a large exporter can also be a large importer. Finally, it is evident in Table \ref{summary2} and Table \ref{summary3} that the distribution of the number of HS 6 products exported/imported is uneven across countries. Some countries import/export fewer than 100 products. 

Another stylized fact lies in the \textit{f.o.b} price. Figure (\ref{fig:20422}) displays the relation between (log of average) \textit{f.o.b} price and (log of) GDP per capita in the destination country for the sector of 020422 (meat sheep or goats; fresh, chilled or frozen). The price to a destination market is the average \textit{f.o.b} price across all source countries exporting to that destination. Figure (\ref{fig:940169}) presents for the sector of 940169 (Seats). These two sectors are chosen since they have many importing countries. From the graph, it is implied that sectors can be heterogeneous in relation between price and destination market income. Therefore, the 'quality sorting' channel can exist only in a subset of sectors. The finding is to some extent consistent with \textcite{mz2012}, which reveals that firms charge higher prices in richer destinations within a firm-product category, using Chinese Customs Trade Statistics. It is also consistent with with \textcite{KY2016}, which uses the same dataset as \textcite{mz2012} and finds that quality sorting (competition raise quality and price) exists in a subset of HS 2 industries while some other industries have efficiency sorting (competition lowers price). 

\subsection{Quality Preference and Cost Parameters}
I recover quality preference and cost parameters based on the estimation of the gravity equation. According to (\ref{tradeflow}), the bilateral trade flow can be decomposed into origin fixed effects, destination fixed effects, bilateral trade costs and a non-linear function of $\kappa_{j}$ and $\mu_{i}$. Specifically, I estimate the following gravity equation for each HS2 sector $s$ (I suppress sector notation here):
\begin{equation}\label{reg}
\ln r_{ij} = \delta_{i} + \delta_{j} -(k+1)\ln\tau_{ij}+ \ln \left[1+(\kappa_{j}-\mu_{i})\rho_{ij}\right] + \ln \left(\frac{1}{k(k+2)}+\frac{(\kappa_{j}+\mu_{i})\rho_{ij}}{k(k+1)(k+2)}\right)
\end{equation}
where origin fixed effects $\delta_{i} = \ln(N_{i}^{E}\left(c_{M}^{i}\right)^{-k})$ and destination fixed effects $\delta_{j} = \ln\left(L^{j}\left(c_{D}^{j}\right)^{k+2}\right)$. This form is the original \textcite{MO2008} gravity equation with additional quality terms. Equation (\ref{reg}) has the advantage of explaining residuals in the original MO model. The ice-berg trade cost is proxied by several bilateral variables, so that the specification of \ref{reg} can be re-written as
\begin{center}
$\ln r_{ij} = \delta_{i} +\delta_{j} +\beta_{1} Contig +\beta_{2} Comlang +\beta_{3} Colony +\beta_{4} Comcol +\beta_{5} Curcol +\beta_{6} Smctry + \beta_{7}LnDist + \beta_{8} RTA +\beta_{9}Comcur + \epsilon_{ij}$
\end{center}
In the above specification, $Contig =1$ if the exporter ($i$) and the importer ($j$) are contiguous, and $Contig =0$ otherwise. $Comlang=1$ if $i$ and $j$ share common official language, $Colony=1$ if $i$ and $j$ have ever in colonial relationship, $Comcol=1$ if $i$ and $j$ have a common colonizer after 1945, $Curcol=1$ if $i$ and $j$ are currently in colonial relation, $Smctry=1$ if $i$ and $j$ were the same country. $LnDist$ is the (log) distance between $i$ and $j$. \footnote{This is the simple distance, which is the distance between the most populated cities of the two countries.} The data of those variables are from CEPII GeoDist. Additionally, $RTA =1$ if $i$ and $j$ have reached any trade agreements and $Comcur =1$ if $i$ and $j$ are in the same currency union. This is from \textcite{DESOUSA2012917}.

An issue with the ordinary least squares (OLS) is the presence of zero trade as is discussed before. The above gravity equation can be subject to bias due to the existence of zero bilateral trade, which to some extent suggested by Table \ref{summary1}. As is argued by \textcite{helpman2008estimating}, disregarding country pairs that do not trade with each other can cause biased estimates on the data. Additionally, it is documented in their empirical study that half of countries do not trade with each other. To address such issue, I use Poisson Pseudo Maximum Likelihood (PPML) proposed by \textcite{silva2006log} by including zero trade flows. The same estimation strategy is also used in \textcite{CGMO2011}.

Table  \ref{trade_costs} reports results from PPML estimates for selected sectors. At the first glance, the coefficients are within expectations. Specifically, distance has negative effects on trade flows. Regional trade agreements, common language and colonial relations can have positive effects on bilateral trade. Common currency can have both positive and negative effects. It is noticeable that most other sectors not reported here have similar patterns in terms of trade cost proxies.

Given the estimated equation, I am able to back parameters of quality preferences and costs from the residuals. These parameters are recovered using residuals in specification (\ref{reg}). Although BACI covers more than 200 countries, only 168 of them have data on population in the Penn World Table. Thus, for each HS2 sector, I have at most 336 parameters to recover. There are much more residuals than parameters. Therefore, the system is over-identified and hence I employed a least square procedure to hunt for the optimal solution. \footnote{Specifically, the optimal $\kappa$ and $\mu$ satisfy: $\left[\kappa,\mu\right]=\frac{1}{N} argmin \sum_{ij} \left[\ln \left[1+(\kappa_{j}-\mu_{i})\rho_{ij}\right] + \ln \left(\frac{1}{k(k+2)}+\frac{(\kappa_{j}+\mu_{i})\rho_{ij}}{k(k+1)(k+2)}\right) - \epsilon_{ij} \right]^{2}$, where $N$ is the number of observations of trade flows in each sector.}

I estimate those parameters for each sector and discovered stylized patterns from them. Table \ref{kappa} reports the summary statistics of estimates of $\kappa$'s for each sector; Table \ref{mu} reports the summary statistics of $\mu$'s. First, a comparison of Table \ref{kappa}
and Table \ref{mu} indicates that $\kappa$ is on average larger than $\mu$ and tends to be more dispersed than $\mu$. This to some extent implies that (dis) tastes for quality are more heterogeneous among countries than the technology of quality-improving. Second, a closer examination of Table \ref{kappa} indicates that sectors differ in the dispersion of $\kappa$'s across countries. For instance, HS 50 (textiles) has relatively small dispersion in $\kappa$, while HS 85 (electric motors and generators) has a very high dispersion. Thus, in some sectors countries tend to be homogeneous in preferences, while in other sectors countries can be divergent. 

Table \ref{kappa_gdp} attempts to explore the relation between the parameters and GDP per capita. Illustrations of the positive relation for 40 sectors are in the two panels of Figure \ref{fig:gdp_kappa}. \footnote{Consistent with Table \ref{kappa_gdp}, I illustrate those sectors where GDP per capita has positive and significant effect on $\kappa$}. Table \ref{kappa_gdp} presents coefficients in regressing estimated $\kappa$ on GDP per capita for 63 HS 2 sectors. It is evident that the positive correlation between $\kappa$ and GDP per capita exists in a subset of industries, while in other sectors insignificant effects exist. Comparing with Table \ref{kappa}, it is implied that sectors with medium level of dispersion tend to have $\kappa$ correlated with GDP per capita. One caveat is that the formation of quality preferences $\kappa$ is exogenous to the model. This differentiate with \textcite{dfr2014} where quality preferences are modeled as endogenous to income. According non-homothetic preferences setup in \textcite{fk2016}, quality preference of a country can also depend on income inequality inside the country. In sum, consumer preferences result from multiple socio-economic conditions and it is out of the scope of the this paper. 

In addition, I also examine the correlation between cost of quality and country income. Table \ref{mu_gdp} reports coefficients on GDP per capita in regressing $\mu$ on GDP per capita. Compared with Table \ref{kappa_gdp}, significant effects appear in fewer sectors. In addition, 3 of these sectors exhibit negative and significant results. The correlation between the marginal cost and income is less explicit compared with preferences. The explanation for this has two sides: the positive association can be accommodated by the fact that richer countries have higher labor cost while the negative association can be explained by that richer countries have a comparative advantage in producing higher quality goods since the demand for higher-quality is larger.

\subsection{Endogenous and Exogenous Competitiveness}

After parameters of preferences and cost of quality are estimated in the last subsection, the cost cutoff for each country is computed using the parameters estimated in the last step. Specifically, I use average bilateral \textit{f.o.b} price in (\ref{price}), the calculated trade cost and $\kappa_{j}$ (and $\mu_{i}$) to back out the cost threshold for each country. However, it is worth noting that price calculated using  value divided by quantity is noisy even though units are converted to tons. This potentially results in multiple cost cutoffs for one country in one sector. Thus, I modify (\ref{price}) by allowing noisy terms. To account for this, I regress price on trade costs, nonlinear terms of $\kappa$ /$\mu$ and destination country fixed effects, as in the following specification:
\begin{equation}\label{cutspecification}
\ln p_{ij} = cons + \ln \left[ \frac{2k+1}{k+1} + \frac{1}{k+1}\left(\kappa_{j}+\mu_{i}\right)\rho_{ij}\right] - \beta_{1}\ln \tau_{ij} + \psi_{j} +\epsilon_{ij}
\end{equation}
In (\ref{cutspecification}), the exponential of the coefficients on each of the destination fixed effects are the cost cutoffs. \footnote{This follows from \textcite{AllenAtkin}, where preference parameters in that paper is recovered from good-level fixed effects.} This corresponds to the endogenous competitiveness in \textcite{CGMO2011}.\footnote{In \textcite{CGMO2011}, the cost cutoffs are computed using price index of countries in each sector. The practice is infeasible here due to data coverage issues.}

With the cutoffs obtained from the empirical implementations, I back out cost upper bounds (exogenous competitiveness) $c_{M}^{i}$ (multiplied by $f_{e}$) for each country $i$ in each sector $s$. For this set of parameters, I use Equation (\ref{cutoff}) to back out exogenous competitiveness for each country/sector ($(c_{M}^{i})^{k}f_{e}$) by the following relation: 

\begin{center}
$f_{e}*c_{M}^{k} =\frac{B*L*c_{D}^{k+2}}{2\gamma(k+1)(k+2)}$
\end{center}
where $c_{M}^{k}$ is the vector of $(c_{M}^{i})^{k}$ and $c_{D}^{k+2}$ is the vector of $(c_{D}^{i})^{k+2}$, which is estimated as in Equation (\ref{cutspecification}). Matrix $B$ is defined in the prior subsection. I also implement above two steps under M.O by eliminating all quality components in the relevant equations. Table \ref{cst1} and Table \ref{cst2} report results as summaries in estimations of exogenous competitiveness across sectors.

Exogenous competitiveness displays several characteristics. First, the levels of these statistics vary to a large extent across sectors. For example, for HS 50, the figures are in $10^{12}$, while in other sectors the mean is below one. The cause in the large variation across sectors can be attributed to the different degrees of variations in the multiplication $\left[1+(\kappa_{j}-\mu_{i})\rho_{ij}\right]$ for different sectors.\footnote{I perform the similar estimation of $c_{M}^{k}$ following multi-country version of \textcite{MO2008}, typically, the means and standard deviations are smaller for almost all sectors.} The different degrees of variation results from the dispersion differences in $\kappa$ and $\mu$ across sectors, which are analyzed in the prior subsection. Comparing with preference for quality, sectors having a large variation in $\kappa$ can also have large variations in $c_{M}^{k}$. Another explanation could be that the sunk cost $f_{e}$ vary across industries.

The similar estimation strategy is also implemented without quality considerations. This follows from multi-country \textcite{MO2008} and \textcite{CGMO2011}. Subsequently, I compute $c_{D}^{k+2}$ implied by the two models. The purpose of this practice is to compare the predicted endogenous competitiveness of the two models. Table \ref{cutoff_qua} and Table \ref{cutoff_noqua} summarize cutoffs computed from the two different models across sectors. Illustrations of the distribution of cutoffs implied by both models for all sectors are displayed in the two panels of Figure \ref{cutoff_dist251}. In comparison with the two models, several implications arise. Firstly, for most sectors, the levels of cutoffs are on the same scale: the means in cutoffs do not significant differ from each other. Second, for most sectors, the distribution of cutoffs are more dispersed under quality model than under the MO model: the variances in Table \ref{cutoff_qua} is larger and for most sectors, cutoffs under quality sorting have higher maximum values and smaller minimum values. Those comparisons suggest that adding quality sorting enlarges the inequality among countries in competitiveness. 

To further compare the two models, I summarize the differences in the predictions in cutoffs across countries for each sector. Figure \ref{fig:diff_dist251}  displays the kernel density of prediction differences  across countries of various sectors. For most industries, the majority of differences lie around zero. The distribution in differences vary across sectors, which is observed from the skewness. For sectors such as HS 2 and HS 3, the difference distributions are nearly normal so the probability of overestimation and  underestimation are nearly equal. Right skewness occurs in sectors such as HS 74 and HS 82, indicating that quality model over-predicts more than it under-predicts. Left skewness happens in sectors such as HS 4 and HS 10 so that more under-estimates of quality model exist in these sectors. In general, within most sectors, predictive differences between the two models do not deviate to a large extent from zero.

\section{Counterfactual Scenario}\label{counterfactual}
Having estimated the model, one can use it to simulate the effects of trade frictions/liberalizations. This is achieved by recomputing for each sector the (quality-adjusted) trade freeness matrix $B$ while keeping the exogenous competitiveness, shape parameters and preference/costs for quality parameters constant. The resulting matrix is then used to compute new endogenous competitiveness. Specifically, the following statistic is to be computed:

\begin{center}
$\hat{c}_{D}^{j} = \frac{c_{D}^{'j}}{c_{D}^{j}}-1$
\end{center}

where $c_{D}^{'j}$ is the cutoff after trade liberalization. I estimated the above statistics under both this model and the M.O model and compare the differences between the two models to discover the underlying regularities leading to the differences. 

Furthermore, I also experiment with preference and technological shocks. The purpose of the counterfactual is to explore the effect of preferences or technological progress (in improving quality) on endogenous competitiveness across countries and sectors. This practice is infeasible under the framework of \textcite{MO2008} or other efficiency-based settings to the best of my knowledge. Thus, it is a significant contribution of this model.

\subsection{International Trade Costs}
The first exercise is to examine the effect of bilateral trade cost changes. I would like to examine the change in cutoffs across countries and sectors. I simulate a 5\% increase in international trade cost while keeping intra-national trade costs constant. This counterfactual analysis is to examine the effect of protectionism on the endogenous competitiveness. This counterfactual analysis is relevant since the temptation of protectionism in the past decade has been large and increasing for a large number of policy makers. Thus there is possibility of introducing new external tariffs or other types of trade barriers. 

As is noted above, I focus on two aspects: the changes in cutoffs and the differences in predictions implied by the two models. Table \ref{counter-factual1} summarizes the mean, standard deviation, minimum and maximum in the changes of cutoffs. Table \ref{counter-factual1_noqua} reports those statistics implied by \textcite{MO2008}. It is worth noting that in almost all sectors here, an increase in international trade cost (while keeping intra-national trade cost constant) can lead to a decrease in cost cutoff for some countries, regardless of the models. This decline in cutoff (or rise in the endogenous competitiveness) is justified by the rising entry of local producers since the relative cost of intra-national trade is cheaper than international costs. In the two-country case, consider that in MO with two countries, the cutoff is computed as 

\begin{center}
$c_{D}^{i,k+2} = \frac{\gamma\phi}{L^{i}}\frac{1-\omega^{j}}{1-\omega^{i}\omega^{j}}$
\end{center}
where $\omega_{i}=\tau_{ji}^{-k}$ and indicates the freeness of trade. If the both $\omega^{j}$ and $\omega^{i}$ increase, the change in cutoff is then 
\begin{equation}\label{DeltaCutoff}
\Delta c_{D}^{i,k+2} = \frac{\gamma\phi}{L^{i}} \left(1-\omega^{i}\omega^{j}\right)^{-2}\left[(1-\omega^{j})\omega^{j}\Delta \omega^{i}-(1-\omega^{i})\Delta \omega^{j}\right]]
\end{equation}
so that the direction of change in cutoff depends on the magnitudes of $\omega^{i}$, $\omega^{j}$ and the degrees in the changes of both. Under the current model, $\omega^{i}$ and $\omega^{j}$ can be higher than 1, since in this model it is replaced with $\left[1+\left(\kappa_{j}-\mu_{i}\right)\rho_{ij}\right]\tau_{ij}^{-k}$ and thus the sign of $\Delta c_{D}^{k+2}$ can be negative when $\Delta \omega^{j}$ and $\Delta \omega^{i}$ are negative.

Nonetheless, the aggregate cutoffs increase, for both models. A closer examination of the two sets of results implies the following regularities. First, the mean changes are all positive for both models, with the changes implied by the quality models higher than the MO model for most sectors.  This difference can be partially driven by the higher maximum value in Table \ref{counter-factual1} (for most sectors).  Second, and related to the prior section, the variance of changes is larger under this model. This is connected with the prior section in that variance under this model is higher than under \textcite{MO2008} with additional quality components. Finally, under both models, the maximum of the change is larger than 1 for all sectors. Thus, the magnitude of effect on endogenous competitiveness is larger than the degree of trade cost rise. 

Finally, it is worth exploring the underlying forces driving these changes. The focal points lie in whether such difference is systematic across countries in some (or all) sectors. If such regularities exist, then one can argue that these changes are not randomly driven. For all sectors, I attempt to correlate with quality preference of each country.\footnote{I also attempt to link them with cost of quality of each country. However, no systematic regularities are found.} In the two panels of Figure \ref{fig:diff_beta}, I tried to plot the (log) quality preferences with the prediction differences (changes predicted by this model minus changes predicted by MO).\footnote{I find similar patterns in the following sectors: HS2, HS4, HS7, HS8, HS9, HS11, HS15, HS16, HS17, HS18, HS19, HS35, HS36, HS38, HS39, HS52, HS58, HS63, HS64, HS65, HS69, HS73, HS76, HS81, HS86, HS95 and HS96.} All else being equal, the higher the quality preference of the country, the larger the predicted difference between the current model and MO model. This result indicates that in models with quality preference differentials across countries, the disparity in terms of loss from higher trade barriers is larger than in models without quality considerations. This implication is to some extent consistent with simulations on the market size by \textcite{ANTONIADES2015263}, in which it is shown that larger market size results in larger cutoff decrease in countries with higher quality preferences.

\subsection{Preference/Cost Changes}

The second counterfactual exercise is to examine the effect of a universal preference or technology shocks on endogenous competitiveness. This analysis is infeasible under models of MO or other models based on the efficiency sorting of heterogeneous firms in international trade. It is perceived to be relevant since consumers' preferences are changing over time. Although quality preferences are exogenous to this model, they can be altered in several ways. An increase in quality preference can result from higher expectations of consumers for the products they purchase. For instance, in food industry, food safety issues motivate more consumers to pursue organic food, which is perceived to be high-quality, thus raising the willingness-to-pay (\cite{foodquality}). The pervasiveness of advertisements and other multi-media can also shape consumers' preference for high quality products. Apart from that technology progress drives down the cost of quality updates. 

Essentially, the two changes are consistent. As is shown in the model, higher preference and lower marginal cost in quality raise the scope for quality differentiation. Therefore, I combine the two exercises by adding/ subtracting 0.1 units in either $\kappa_{j}$/ $\mu_{i}$ in computing $\rho_{ij}$. All else being equal, individual producers entering each market raise product quality and charge higher prices. If other conditions remain unchanged, resulting cost cutoffs are lowered. In the long run, a trade diversion effect can occur in that a subset of countries can experience a rise in their cost cutoffs. This arises since the cutoff of each country depends on both its own $\kappa$ (and $\mu$) and others' $\kappa$ ($\mu$). This can be explained in detail in Equation (\ref{DeltaCutoff}) if one replaces $\omega_{i}$ with $\left[1+\left(\kappa_{i}-\mu_{j}\right)\rho_{ji}\right]\tau_{ji}^{-k}$. Therefore, the direction of change in cutoffs depends on the magnitudes of $\kappa$ and $\mu$ across countries. 

As is expected, trade diversion occurs in a few countries. Table \ref{cf21} and \ref{cf22} summarize the percentage changes in cost cutoffs across countries. Figure \ref{fig:cutoff_change_distribution} visualizes the changes using the world map. I compute the weighted cutoff changes for each country. The weight is calculated as an industry's share of total value of imports of a country. Although the majority experience rising market competitiveness revealed in declining cost cutoffs, 25 of them have the opposite change. Overall, the magnitude of increase in cost cutoffs is lower than decrease: the decrease in cost cutoffs ranges from around 500\% to around 0.6\%. Thus positive preference or technology shocks brings more positive effects on productivity improving globally. A further plot in Figure \ref{fig:cutoff_change_distribution} reveals that the more than half of the countries have their declines in cutoffs falling less than 141.2\%. The change in endogenous competitiveness falling in the range of $\left[-141.2,0.58\right]$ occupies the most area.

Large countries generally gain more than small economies. This can be supported from the optimal quality choice in (\ref{quality}). Loss in utility which is revealed from rise in cost cutoff occur mostly in small economies. The justification of this observation can be that small economies have smaller scopes for quality differentiation, thus firms have less incentive to sell high-quality goods. Faced with the same degree of preference rise, firms are diverted to sell higher-quality goods to larger markets. I prove that by regressing gains from trade on their (log) population and other country level controls in Table (\ref{size-gain}). They reveal negative relationship between the change in cost cutoffs and population size. This exercise implies that market size is negative associated with changes in cost cutoff, i.e., positively related to changes in productivity and welfare gains. On average, 1\% rise in population size leads to 7.876\%  more gains in productivity. However, one caveat is that the aggregate gain in productivity is affected by other factors such as compositions of imports of a country. 

\section{Conclusion}\label{conclude}

This paper extends a theory on quality with endogenous markups. Theoretical framework is of multi-country type, which is a generalization of two-country model commonly used in Melitz and Ottaviano framework. Different from competition on cost, the theory identifies that in some sectors and countries, firms can also compete on quality. Tough competition featured by larger market, lower trade cost and higher preference for quality are more likely to induce firms to improve quality. The selection effect is larger when quality differentiation scope is wide.

Empirical study is undertaken to compare this theory with efficiency framework. Structural estimation is used to identify relevant parameters. These parameters are later used to compute cutoffs and average prices under quality competition. The same steps are used to compute counterparts under efficiency competition. The structural estimation implies that considering quality differentials among countries enlarges heterogeneities in competitiveness. The counterfactual study points out that in most sectors, the higher the quality preference of a country, the larger the loss from rising trade barrier, compared with Melitz and Ottaviano. Finally, positive universal preference shocks generally bring more gains to larger countries.

Though the paper addresses the importance of considering quality preference differentials across countries, it still has insufficiencies in investigating this issue. This study can be extended to examine the spatial distribution in quality preferences and technologies, using more disaggregated data, such as China Customs Trade Statistics and China Inter-Provincial Input-Output Statistics. Furthermore, one can compare the impacts of international and intra-national trade costs on competitiveness of different regions in China, under both the current model and Melitz-Ottaviano model. 
\newpage

\printbibliography
\newpage

\section*{Appendix}
\appendix
\numberwithin{equation}{section}
\makeatletter 
\newcommand{\section@cntformat}{Appendix \thesection:\ }
\makeatother
\section{Multi-Product Extension}\label{multiproduct}
\subsection{Setup}
The theoretical model is based on single product. The setup in the model assumes that each firm produces one product only. However, it can be generalized to assume that each firm produces multiple products and sells to multiple countries. The utility function in (\ref{consumer_utility}) and the production cost in (\ref{totalcost}) remain unchanged. I follow \textcite{mmo} by assuming that there exist "product ladder" with increasing customization cost for products further from the core product. Specifically, I assume that each product $m$ produced by a firm with core marginal cost $c$ incurs the marginal cost of 

\begin{equation}\label{productcost}
v(m,c) = \lambda^{-m}c
\end{equation}
with $\lambda \in \left(0, 1\right)$.

In the above setup, more peripheral products require higher customization costs. In (\ref{productcost}), $m$ is positive and represents the distance from the core product of firm $c$: $m =0$ indicates that product $m$ is the core product and higher $m$ implies that product $m$ requires higher adjustment cost. Higher $m$ reflects decreasing product appeal. Thus, if a firm in country $i$ decides to sell product $m$ to country $j$, its total cost becomes
\begin{equation}
TC_{ij} (c, m) = q(c, m)(c + \mu_{j}z(c, m)) + \delta z(c, m)^{2}  
\end{equation}

and its problem becomes:
\begin{equation}
\pi_{ij}(c, m) = \max_{q_{j}(c, m), z_{j}(c, m)} p_{ij}(c, m) q_{ij}(c, m) - TC_{ij}(c, m)
\end{equation}
Similar to the case of single-product model, one can derive the optimal quality of product $m$ as
\begin{equation}\label{productquality}
z_{ij}(c, m) = \rho_{ij}  (c_{D}^{j} - \tau_{ij}\lambda^{-m} c)
\end{equation}
where $\rho_{ij} = L^{j}(\kappa_{j} - \mu_{i})/ \left(4\delta\gamma - L^{j}(\kappa_{j}-\mu_{i})^{2}\right)$. Therefore, within a firm, more peripheral products are sold with lower quality. More productive firms provide higher-quality core products. Additionally, products are affected differently within a firm. A simple algebra in  (\ref{productquality}) implies that competitiveness of market $j$ (revealed in trade costs and market size) imposes heterogeneous effects on products of different hierarchies. To see this, the first-order derivative of $z_{ij} (c, m)$ with respect to $\tau_{ij}$ is

\begin{center}
$\frac{\partial z_{ij} (c, m)}{\partial \tau_{ij}} = \rho_{ij} \left(\frac{\partial c_{D}^{j}}{\partial \tau_{ij}}-\lambda^{-m}c\right)$
\end{center}

That derivative with respect to $m$ is then
\begin{equation}\label{productsod}
\frac{\partial^{2} z_{ij} (c, m)}{\partial \tau_{ij} \partial m} = \rho_{ij}c\lambda^{-m}\ln \lambda \leq 0
\end{equation}
since $0<\lambda <1$. From (\ref{productsod}), second order derivative implies that products closer to the core (lower $m$) are affected by market competition more significantly. 

The number of products a firm can offer to the destination market depends on the firm-level marginal cost $c$. Firms with lower marginal cost $c$ generally produce more products to a destination market. If the cost cutoff to export from $i$ to $j$ is $c_{x}^{ij}$, the number of products the firm $c \leq c_{x}^{ij}$ offers is
\begin{center}
$M_{ij} (c) = \max \left\lbrace m \mid c \leq c_{x}^{ij}\lambda^{m} \right\rbrace + 1$
\end{center}

the total profit of firm $c$ in exporting to country $j$ is the sum of profits from all products:
\begin{equation}
\Pi_{ij} (c) = \sum_{m=0}^{M_{ij}(c)} \pi_{ij} (c, m)
\end{equation}
\subsection{Equilibrium}
Similar to single product setting, a firm draws its marginal cost $c$ prior to entry to the market with sunk cost $f_{e}$. The expected profit from selling to all markets equal to the sunk cost in equilibrium. The total expected profit can then be decomposed into profits from each product sold to each destination. This can then be expressed as
\begin{equation}
\sum_{j\in J} \int_{0}^{c_{x}^{ij}} \Pi_{ij}(c) dG(c) = \sum_{j\in J}\sum_{m=0}^{\infty}\left[ \int_{0}^{\lambda^{m}c_{x}^{ij}} \pi_{ij} (c, m) d G(c)\right] = f_{e}
\end{equation}
Again, I assume Pareto distribution of $c$. Thus the expected profit of a firm in country $i$ can be expressed as 
\begin{equation}\label{productfe}
\frac{2c_{M}^{-k}}{(k+1)(k+2)}\left(1-\lambda^{k}\right)^{-1} \sum_{j\in J} \frac{L_{j}}{4\gamma}\left[1+\left(\kappa_{j}-\mu_{i}\right)\rho_{ij}\right]\tau_{ij}^{-k}\left(c_{D}^{j}\right)^{k+2} = f_{e}
\end{equation}
The above form can be re-written for $J$ countries. Thus, one can write the condition (\ref{productfe}) in the matrix form in a similar manner as in single product setting. The cutoff $c_{D}^{j}$ satisfies the following:
\begin{equation}\label{productcutoff}
\left(c_{D}^{j}\right)^{k+2} = \frac{2\gamma(k+1)(k+2)(1-\lambda^{k})f_{e}}{\mid B\mid} \frac{\sum_{i}C_{ij}\left(c_{M}^{i}\right)^{k}}{L_{j}}
\end{equation}
where $\mid B \mid$ is the determinant of matrix $B$ and $C_{ij}$ is the cofactor of $B_{ij}$. Other parameters in (\ref{productcutoff}) are defined in the similar manner as in single product setting. It is implied from the above condition that endogenous competitiveness can also depend on product flexibility $\lambda$: if $(1-\lambda^{k})^{-1}$ is large, the cutoff is small and the market is more competitive. 

Other aggregate variables are derived in the similar approach as in single product setup. It is important to notice that the product flexibility can vary across countries and sectors. The aggregate bilateral trade value from $i$ to $j$ also depends on the product flexibility and larger flexibility implies higher bilateral trade value. The expected number of entrants are also determined by the product flexibility. 

\begin{equation}\label{producttradeflow}
r_{ij} = \frac{kN_{i}^{E}(c_{m}^{i})^{-k}}{2(1-\lambda^{k})\gamma} L^{j}(\tau_{ij})^{-(k+1)} \left(c_{D}^{j}\right)^{k+2}\left[1+(\kappa_{j}-\mu_{i})\rho_{ij}\right] \left(\frac{1}{k(k+2)}+\frac{(\kappa_{j}+\mu_{i})\rho_{ij}}{k(k+1)(k+2)}\right)
\end{equation}

For computing the number of entrants, the matrix $F$ becomes
\begin{center}
$\frac{2\gamma(k+1)(1-\lambda^{k})(\alpha-c_{D}^{i})}{\eta c_{D}^{i,k+1}}$
\end{center}

\section{Proofs of Propositions}\label{apptwocountry}
\subsection{Proof of Proposition 1}
By the free entry condition in \ref{fe_matrix}, we can obtain $J$ conditions in the following form:
\begin{equation}\label{freeentry}
\left[1+(\kappa_{1}-\mu_{i})\rho_{i1}\right]L_{1}\tau_{i1}^{-k} (c_{D}^{1})^{k+2} + ... + \left[1+(\kappa_{J}-\mu_{i})\rho_{iJ}\right]L_{J}\tau_{iJ}^{-k} (c_{D}^{J})^{k+2} = 2\gamma(k+1)(k+2)f_{e}c_{M}^{k}
\end{equation}
When $\tau_{ij} = \tau_{ji}$ and when there is a bilateral trade liberalization, the first-order condition with respect to $\tau_{ij} (\tau_{ji})$ implies the following:
\begin{equation}\label{prop11}
\sum_{d\in J} B_{dd'}L_{d'}\frac{\partial (c_{D}^{d'})^{k+2}}{\partial \tau_{ij}^{-k}} =0
\end{equation}
if $d \neq i, j$

and 

\begin{subequations}\label{prop12}
\begin{align}
\sum_{d'\in J} B_{id'} L_{d'}\frac{\partial (c_{D}^{d'})^{k+2}}{\partial \tau_{ij}^{-k}} + \frac{B_{ij}}{\tau_{ij}^{-k}}L_{j}(c_{D}^{j})^{k+2} = 0\\
\sum_{d'\in J} B_{jd'} L_{d'}\frac{\partial (c_{D}^{d'})^{k+2}}{\partial \tau_{ji}^{-k}} + \frac{B_{ji}}{\tau_{ji}^{-k}}L_{i}(c_{D}^{i})^{k+2} = 0
\end{align}
\end{subequations}
for countries $i$ and $j$ respectively. 

Since $\frac{B_{ji}}{\tau_{ji}^{-k}}L_{i}c_{D}^{i} >0$ and $\frac{B_{ij}}{\tau_{ij}^{-k}}L_{j}c_{D}^{j} >0$, from (\ref{prop12}), at least one of  $\frac{\partial (c_{D}^{d'})^{k+2}}{\partial \tau_{ji}^{-k}}$ (or $\frac{\partial (c_{D}^{d'})^{k+2}}{\partial \tau_{ji}^{-k}}$, $d' \in \lbrace 1,..., J\rbrace$) is negative. However, if all of them are negative, the condition in (\ref{prop11}) cannot be satisfied. 
The conditions in (\ref{prop11}) and (\ref{prop12}) can be written in the matrix form:
\begin{equation}
\textbf{B}\textbf{L}\textbf{c}' = \textbf{F}
\end{equation}
where $\textbf{B}$ and $\textbf{L}$ is defined the same as in Section \ref{theory}. $\textbf{c}'$ is the vector with the $d$ th element being $\partial (c_{D}^{d})^{k+2}/\partial \tau_{ij}^{-k}$. On the right hand side, $\textbf{F}$ is the vector with $i$th element being $-\frac{B_{ij}}{\tau_{ij}^{-k}}L_{j}c_{D}^{j}$, $j$ th element being $-\frac{B_{ij}}{\tau_{ij}^{-k}}L_{j}c_{D}^{j}$ and other elements being 0. Thus, the sign of $\partial (c_{D}^{d})^{k+2}/\partial \tau_{ij}^{-k}$  depend on those two non-zero elements, determinant and the cofactors of matrix $\textbf{B}$:
\begin{equation}
\partial (c_{D}^{d})^{k+2}/\partial \tau_{ij}^{-k} = -\frac{1}{\mid B \mid} \left(C_{di}\frac{B_{ij}}{\tau_{ij}^{-k}}L_{j}(c_{D}^{j})^{k+2} + C_{dj}\frac{B_{ij}}{\tau_{ij}^{-k}}L_{j}(c_{D}^{j})^{k+2}\right)
\end{equation}
where $C_{di}$ is the $di$ th element in the cofactor matrix of $\textbf{B}$.
\subsection{Proof of Proposition 2}
The proof of Proposition 2 is similar to Proposition 1. Taking the derivative of Equation (\ref{freeentry}) with respect to $\kappa_{j}$, one can obtain J equations in the following form:
\begin{equation}\label{fockappa}
\sum_{h\in J} B_{ih}L_{h}\frac{\partial (c_{D}^{h})^{k+2}}{\partial \kappa_{j}} + \frac{\partial B_{ij}}{\partial \kappa_{j}}L_{j}(c_{D}^{j})^{k+2} =0
\end{equation}
In the matrix form, (\ref{fockappa}) can be written as 
\begin{equation}
\textbf{B}\textbf{L}\frac{\partial \textbf{c}_{D}^{k+2}}{\partial \kappa_{j}} = \textbf{G}
\end{equation}
where $\textbf{G}$ is the vector with with $i$th element being $\frac{\partial B_{ij}}{\partial \kappa_{j}}L_{j}(c_{D}^{j})^{k+2} >0$. 

Thus, the vector of partial derivatives can be computed as:
\begin{equation}
\frac{\partial({c}_{D}^{i})^{k+2}}{\partial \kappa_{j}} = -\frac{L_{j}(c_{D}^{j})^{k+2}}{L_{i}\mid B \mid}\sum_{h\in J} \mid C_{ih} \mid \frac{\partial B_{hj}}{\partial \kappa_{j}}
\end{equation}
The sign of $\frac{\partial({c}_{D}^{i})^{k+2}}{\partial \kappa_{j}}$ depends on the determinant of $\textbf{B}$ as well as its cofactors.
By the definition of $\textbf{B}$, $\frac{\partial B_{hj}}{\partial \kappa_{j}} > 0$, $\forall$ $h$. Therefore, $\exists$ $\textbf{B}$ such that $\frac{\partial ({c}_{D}^{i})^{k+2}}{\partial \kappa_{j}} >0$, $\forall$ $i$, $j$. 

\subsection{Proof of Proposition 3}
Taking the derivative of (\ref{fockappa}) with respect to $\tau_{ij}$ ($\tau_{ji}$), one can obtain two sets of equations, if the origin country is $i$ or $j$, the following holds:

\begin{equation}
\sum_{h\in j} B_{i(j)h} L_{h}\frac{\partial^{2} (c_{D}^{h})^{k+2}}{\partial \kappa_{j}\partial \tau_{ij}^{-k}} + L_{j}\frac{\partial B_{i(j)j}}{\partial \tau_{ij}^{-k}}\frac{\partial (c_{D}^{j})^{k+2}}{\partial \kappa_{j}} +\frac{\partial^{2} B_{i(j)j}}{\partial \kappa_{j}\partial \tau_{ij}^{-k}}L_{j}(c_{D}^{j})^{k+2} + L_{j}\frac{\partial B_{i(j)j}}{\partial \kappa_{j}} \frac{\partial (c_{D}^{j})^{k+2}}{\partial \tau_{ij}^{-k}} =0
\end{equation}

where $\frac{\partial B_{ij}}{\partial \tau_{ij}^{-k}}>0$. If the origin country is not $i$ or $j$, the following shall hold:
\begin{equation}
\sum_{h\in J} B_{dh}L_{h}\frac{\partial^{2} (c_{D}^{h})^{k+2}}{\partial \kappa_{j}\partial \tau_{ij}^{-k}} + L_{j}\frac{\partial B_{dj}}{\partial \kappa_{j}}\frac{\partial (c_{D}^{j})^{k+2}}{\partial \tau_{ij}^{-k}} =0
\end{equation}
If $\frac{\partial (c_{D}^{j})^{k+2}}{\partial \tau^{-k}_{ij}}$, $\frac{\partial (c_{D}^{j})^{k+2}}{\partial \kappa_{j}}$ $<0$, $\exists$ $\textbf{B}$ such that $\frac{\partial^{2} (c_{D}^{h})^{k+2}}{\partial \kappa_{j}\partial \tau_{ij}^{-k}} <0$. 

From the above polynomial equations, the higher $B_{dh}$, the lower $\frac{\partial^{2} (c_{D}^{h})^{k+2}}{\partial \kappa_{j}\partial \tau_{ij}^{-k}}$, i.e. more like that $\frac{\partial^{2} (c_{D}^{h})^{k+2}}{\partial \kappa_{j}\partial \tau_{ij}^{-k}} <0$.

\section{Productivity Estimation}
I construct the firm-level measures based on the ORBIS enterprise database of Bureau van Dijk (BvD). This dataset provides comprehensive information on listed and de-listed private companies around the world. I use the financial module of the database. It provides firm-level financial report items including total revenues, employment, total assets, and research and development (R\&D) expenses.%
\footnote{Data are downloaded in US dollars.}

I follow long-established methods of estimating firm productivity as a residual of Cobb-Douglas production function. In this regard, both methodologies proposed by \textcite{op1996} (OP) and \textcite{LP2003} (LP) are possible candidates given the current dataset. I choose to estimate firm productivity based on LP, because the LP approach relies on intermediate inputs as a proxy rather than on investment, whose level may be non-positive and depends on the assumption of the depreciation rate. On the other hand, it is common that firms record positive use of materials/energy so that I preserve as many observations as possible.\footnote{I also use OP as a robustness check and estimated productivities are similar.} 

I choose to download recent 10 years of financial data for each firm. The missing values exist. Thus, this is an unbalanced panel. The procedure to estimate productivity is as follow. First, gross output, capital and total inputs are proxied by total revenues, total assets, and Costs of Goods Sold (COGS), respectively. The cost of material/energy (in short, material, henceforth) is calculated by COGS minus total wage payable, by the accounting definition of COGS, if material cost is unavailable. Second, these values are deflated to obtain the quantity counterpart.\footnote{The total sales revenues are deflated by Consumer Price Index (CPI), the total assets deflated by the index of fixed asset investment deflator, and the material normalized by Producer Price Index (PPI). Currently, I use the US CPI (Total All Items) and PPI (for All Commodities), and construct the index of fixed asset investment deflator from gross fixed investment flows. They are retrieved from the US Federal Reserve Bank of St.\ Louis website, https://fred.stlouisfed.org/.}
The number of employees are directly observable from the data. Given the observations on gross output, labor, material, and capital, I estimate the production function based on the Stata program $levpet$ using as instruments current capital, lagged material, lagged labor, lagged two year material and lagged capital. Because industries can vary in their production technologies, the estimation is done separately for each 3-digit NAICS sector.

\newpage

\begin{table}
\caption{Price and Destination Country Population}
\input{stylized1.tex}

\label{style1}
\end{table}

\begin{table}
\caption{Price and Revenue}
\input{stylized2.tex}

\label{style2}
\end{table}

\begin{table}
\caption{Price and Entry}
\input{stylized3.tex}

\label{style3}
\end{table}

\begin{figure}[ht]
\centering
\begin{subfigure}[b]{0.45\textwidth}
	\centering
	\includegraphics[width=\linewidth]{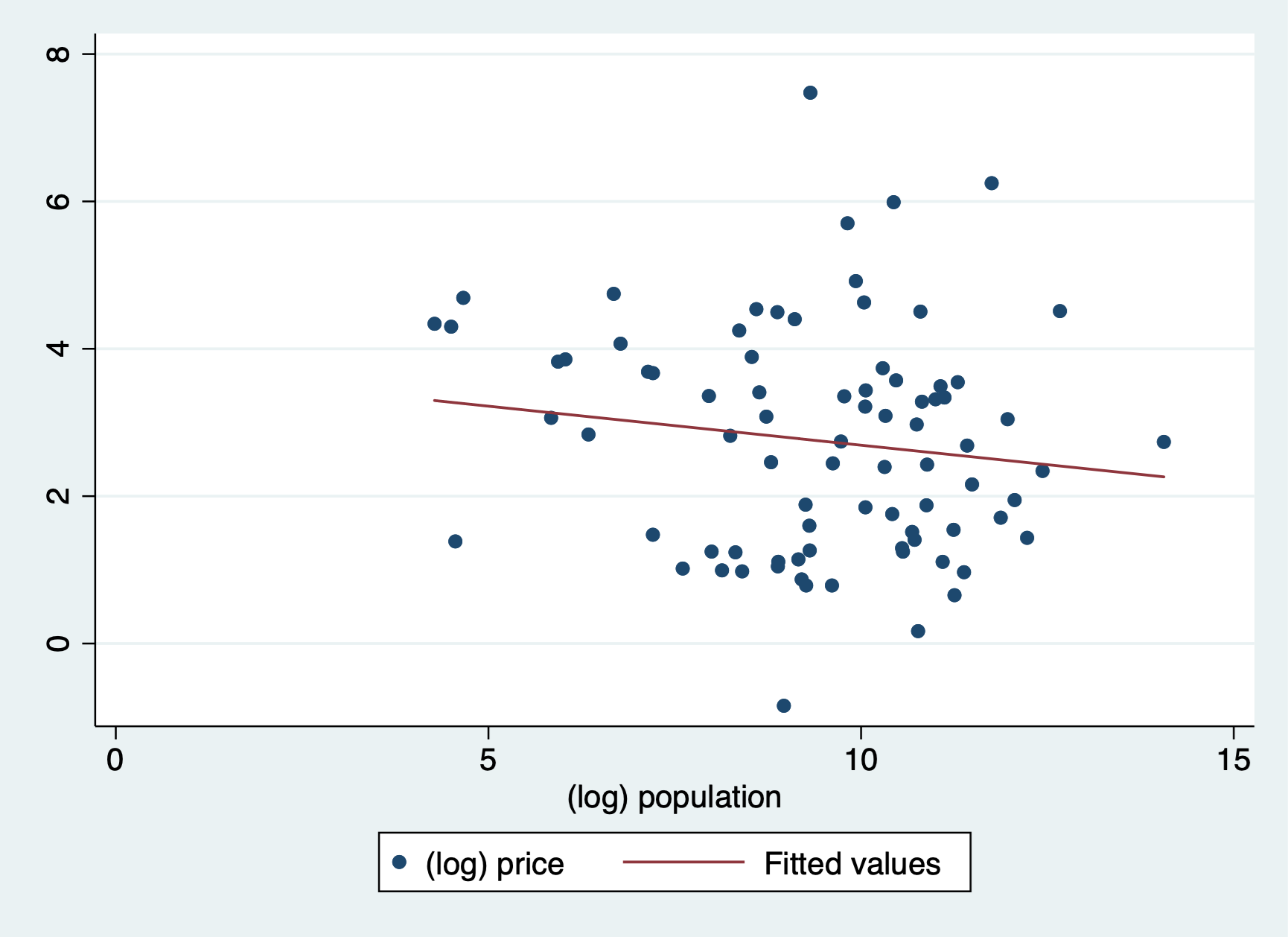}
	\caption{Price and Market Size: HS 24}
	\label{fig:stylized1_hs24}
\end{subfigure}
\hfill
\begin{subfigure}[b]{0.45\textwidth}
	\centering
	\includegraphics[width=\linewidth]{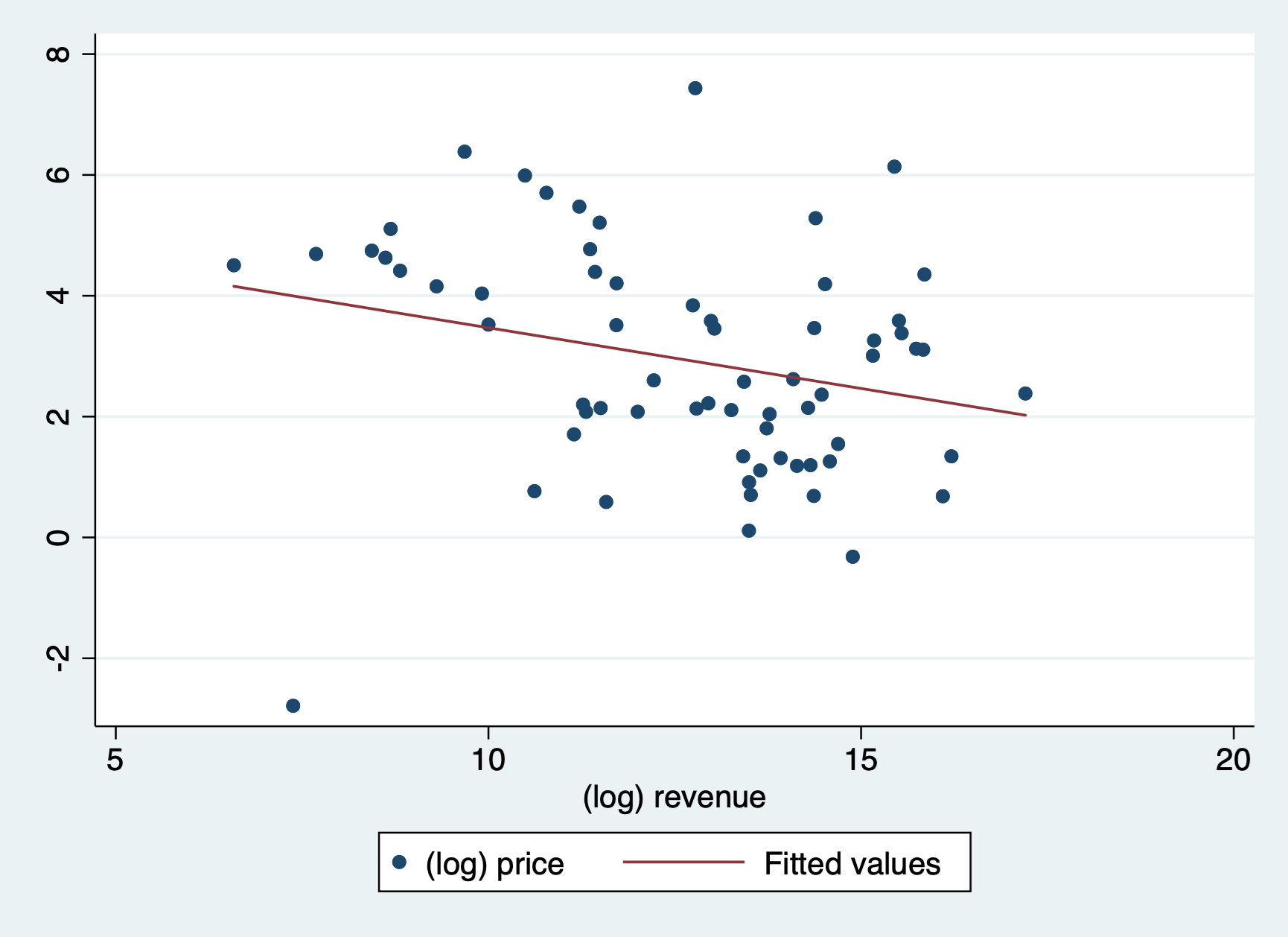}
	\caption{Price and Revenue: HS 24}
	\label{fig:stylized2_hs24}
\end{subfigure}
\hfill
\begin{subfigure}[b]{0.45\textwidth}
	\centering
	\includegraphics[width=\linewidth]{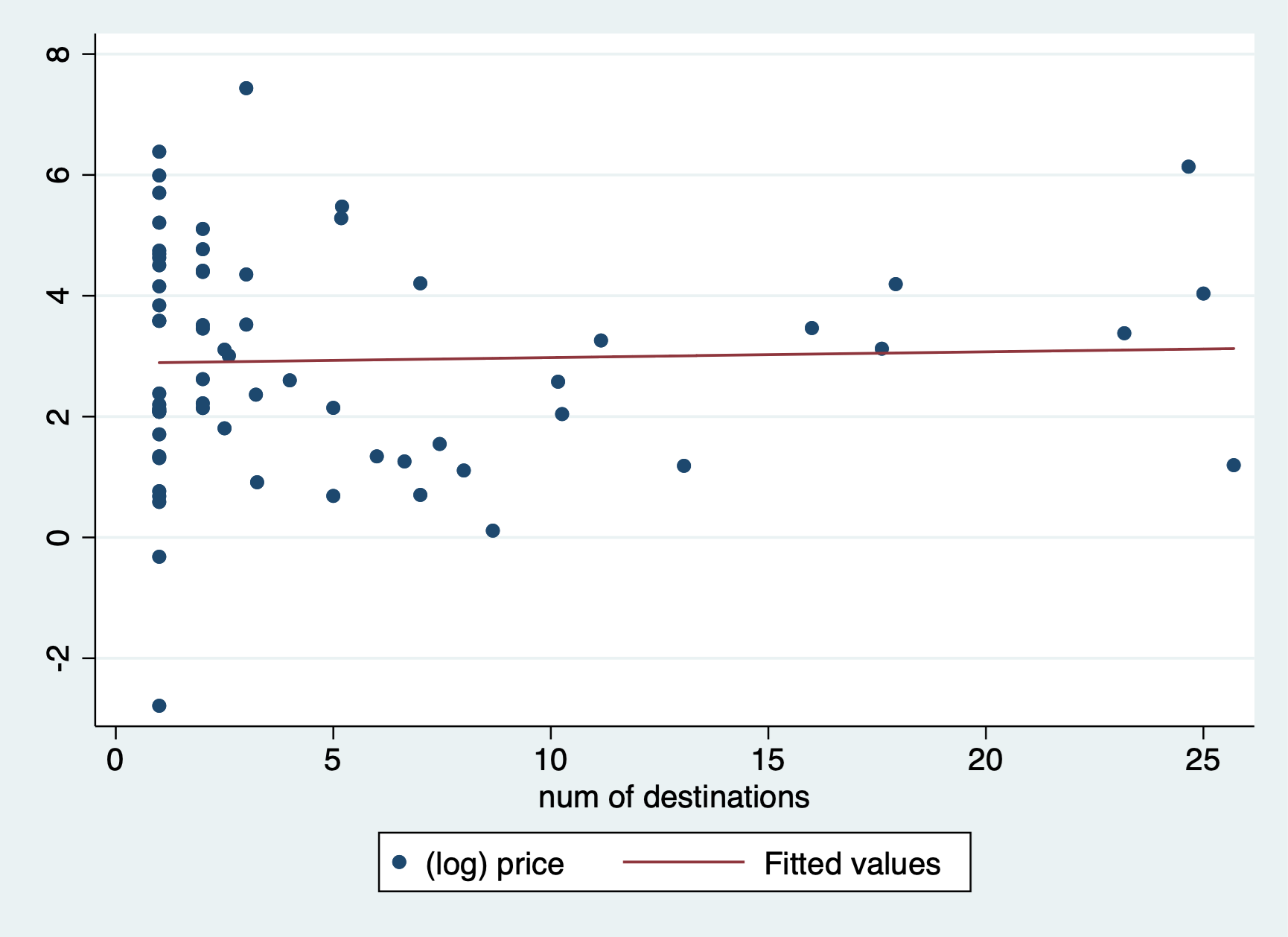}
	\caption{Price and Market Entry: HS 24}
	\label{fig:stylized3_hs24}
\end{subfigure}

\caption{Price, Revenue and Destination Market Characteristics HS 24}
\label{fig:s1}
\floatfoot{Note: The three figures show the correlation between average (log) price across firms exporting to a destination and destination market size (Panel a), average (log) price and average (log) revenue across destinations of each firm (Panel b), and average (log) price of a firm across destinations and average number of markets a firm enters (Panel c), of firms in sector HS 24 (Tobacco). Market size is proxied by population size and the data is from Penn World Table.}
\end{figure}

\begin{figure}[ht]
\centering
\begin{subfigure}[b]{0.45\textwidth}
	\centering
	\includegraphics[width=\linewidth]{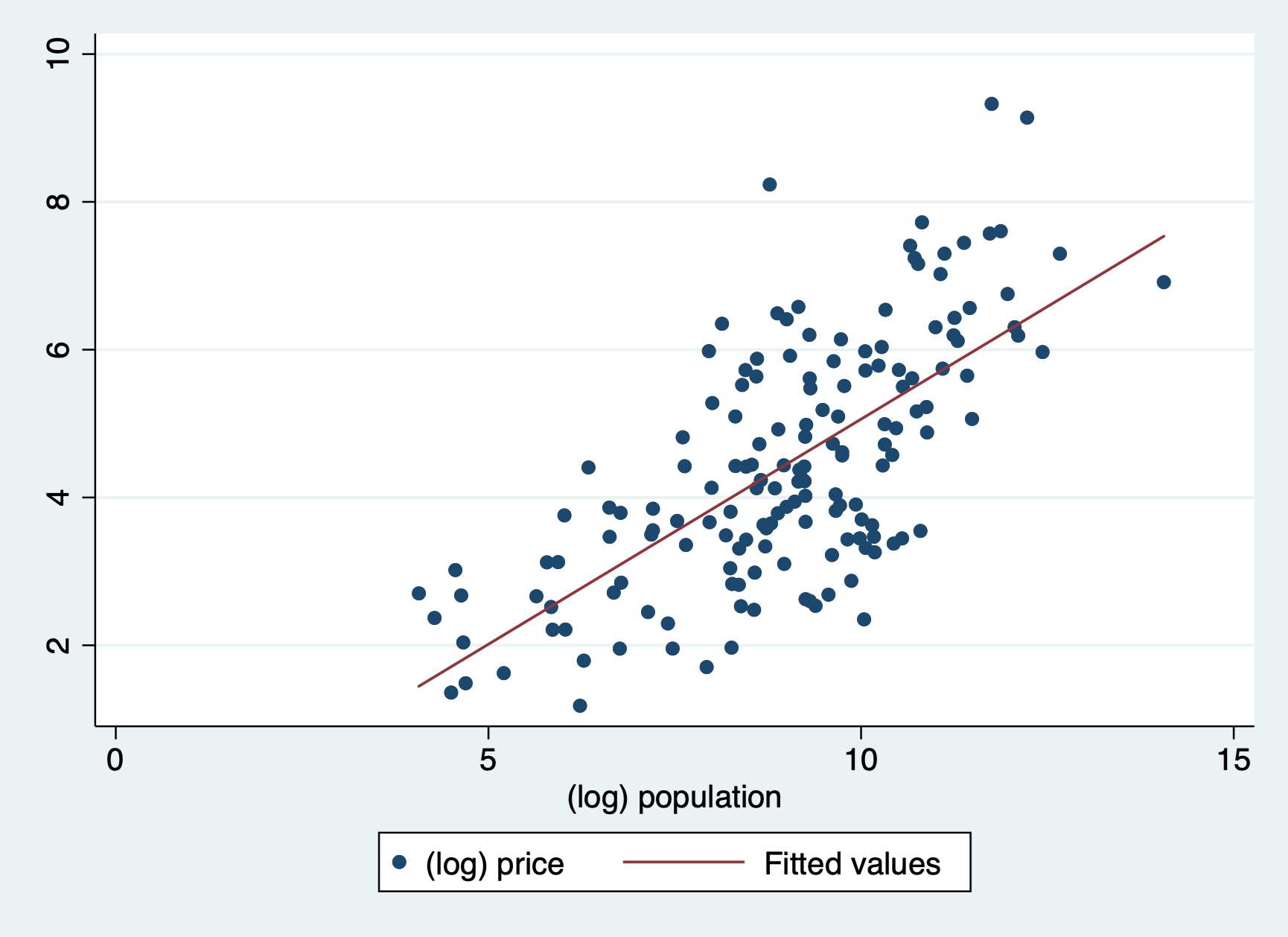}
	\caption{Price and Market Size: HS 30}
	\label{fig:stylized1_hs30}
\end{subfigure}
\hfill
\begin{subfigure}[b]{0.45\textwidth}
	\centering
	\includegraphics[width=\linewidth]{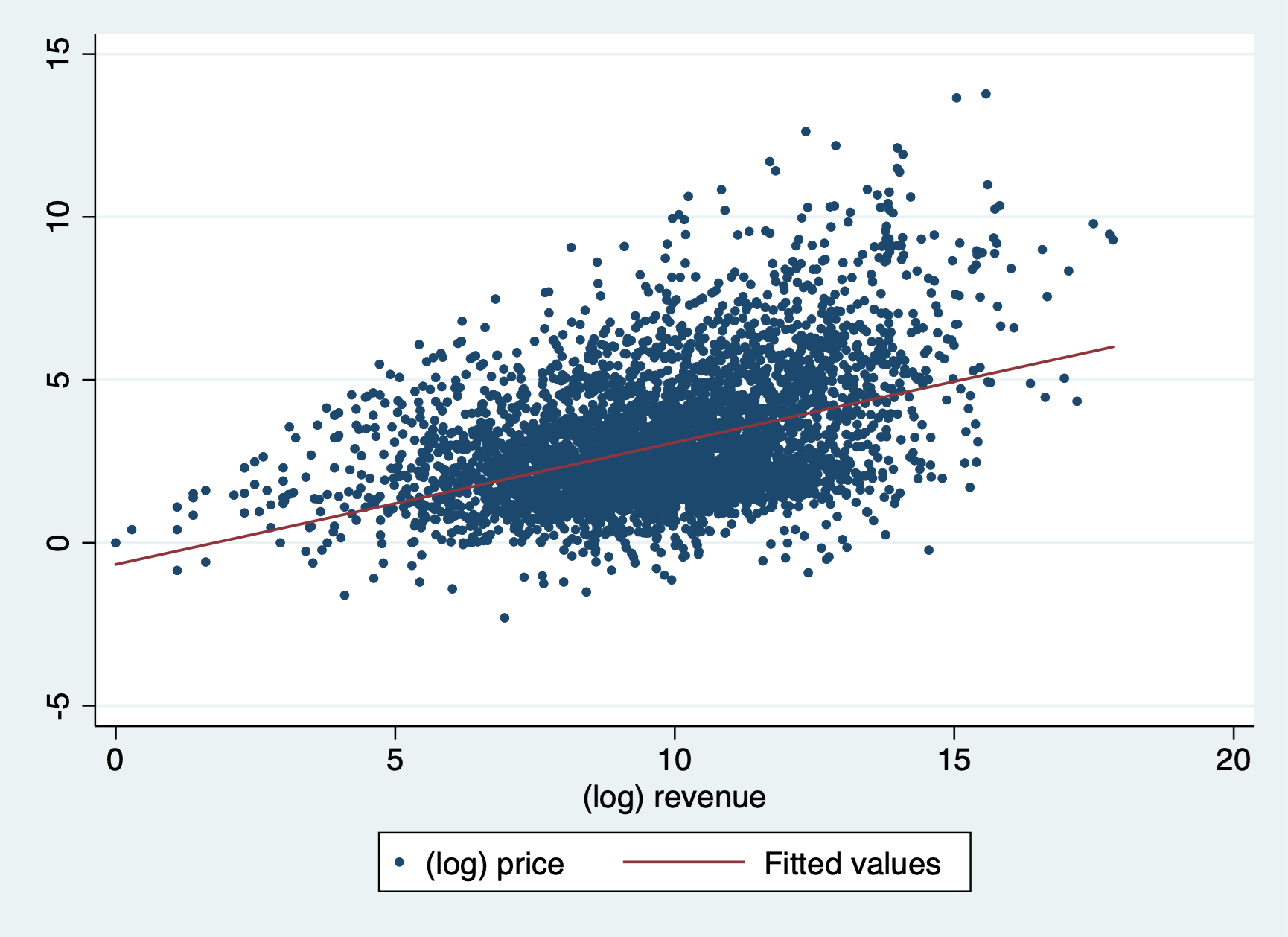}
	\caption{Price and Revenue: HS 30}
	\label{fig:stylized2_hs30}
\end{subfigure}
\hfill
\begin{subfigure}[b]{0.45\textwidth}
	\centering
	\includegraphics[width=\linewidth]{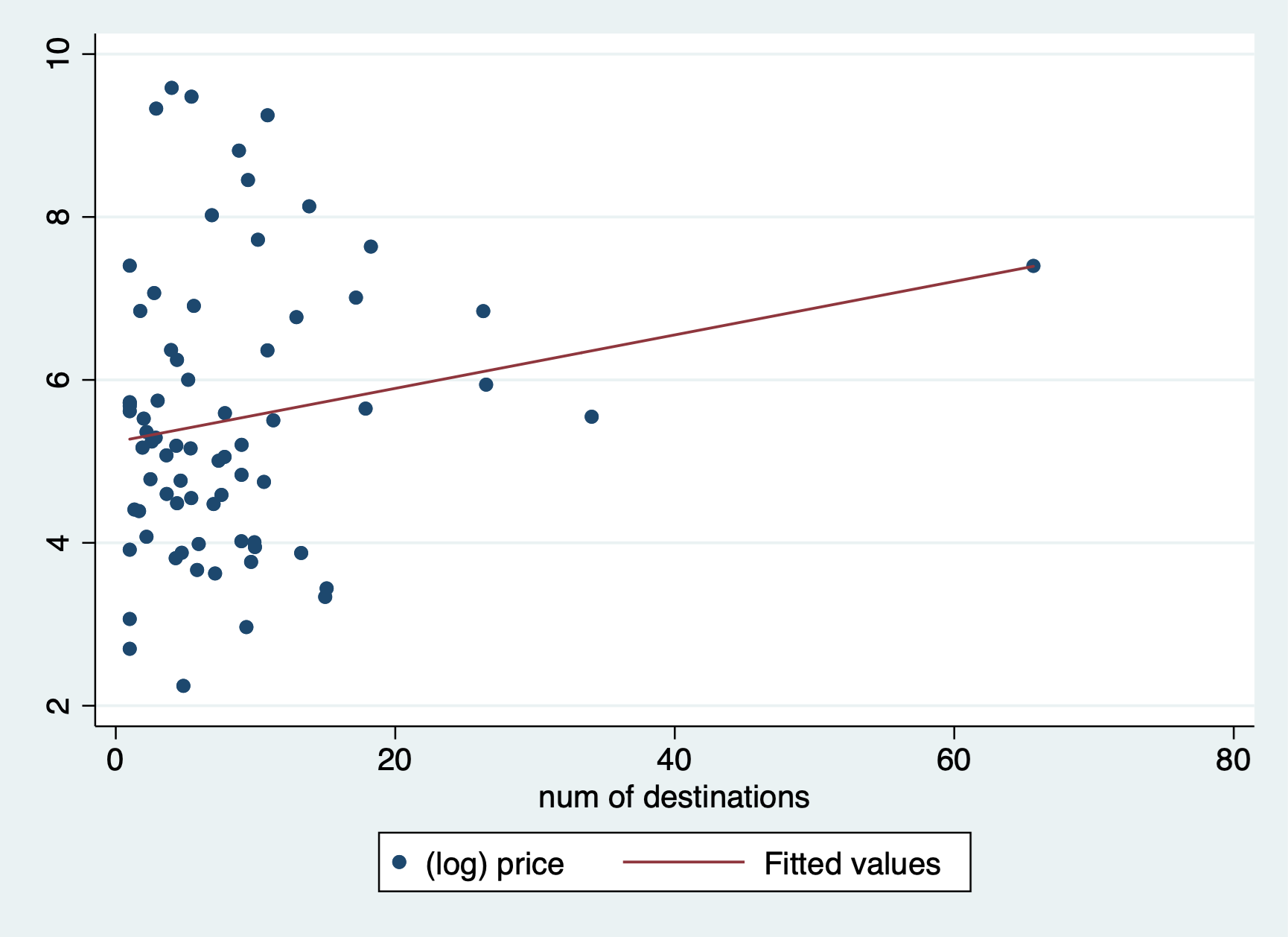}
	\caption{Price and Market Entry: HS 30}
	\label{fig:stylized3_hs30}
\end{subfigure}

\caption{Price and Destination Market Size}
\label{fig:s2}
\floatfoot{Note: The three figures show the correlation between average (log) price across firms exporting to a destination and destination market size (Panel a), average (log) price and average (log) revenue across destinations of each firm (Panel b), and average (log) price of a firm across destinations and average number of markets a firm enters (Panel c), of firms in sector HS 30 (Pharmaceutical). Market size is proxied by population size and the data is from Penn World Table.}
\end{figure}

\begin{table}
\centering
\small\addtolength{\tabcolsep}{-1pt}
\caption{Description of HS 2-digit Industries and Number of Observations}
\centering
\begin{adjustbox}{width=0.65\textwidth,center}
\input{summary1.tex}
\end{adjustbox}
\label{summary1}
\end{table}

\begin{table}[ht]
\centering
\begin{adjustbox}{width=0.8\textwidth}
\small
\caption{Summary of Importing Countries' Products}
\label{summary2}
\input{summary2.tex}

\end{adjustbox}
\end{table}

\begin{table}
\centering
\begin{adjustbox}{width=0.8\textwidth}
\small
\caption{Summary of Exporting Countries' Products}
\label{summary3}
\input{summary3.tex}

\end{adjustbox}
\end{table}

\begin{figure}[ht]
\centering
\begin{subfigure}[b]{0.65\textwidth}
	\centering
	\includegraphics[width=\linewidth]{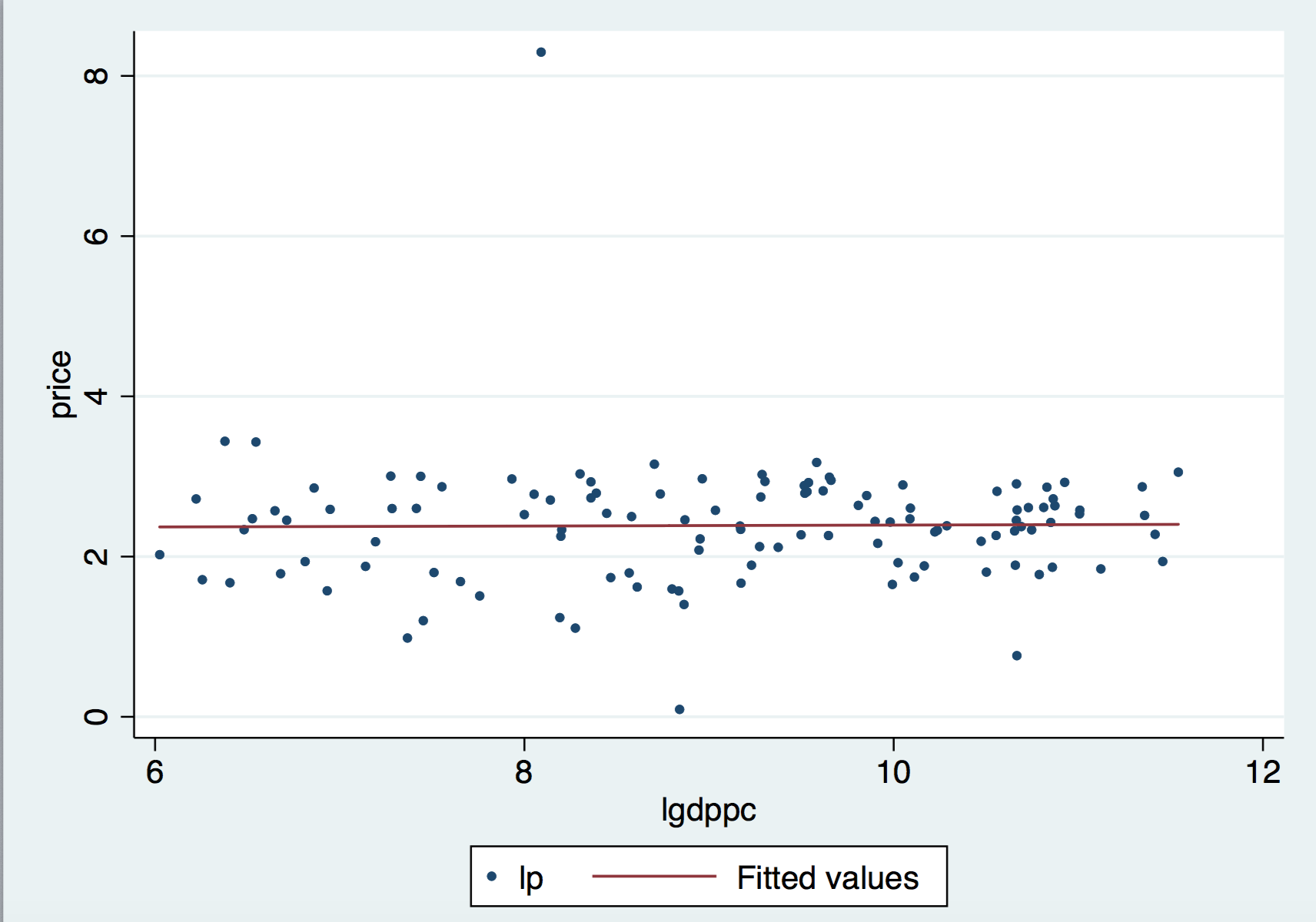}
	\caption{Unit Value and GDP Per Capita: HS 20422}
	\label{fig:20422}
\end{subfigure}
\vfill
\begin{subfigure}[b]{0.65\textwidth}
	\centering
	\includegraphics[width=\linewidth]{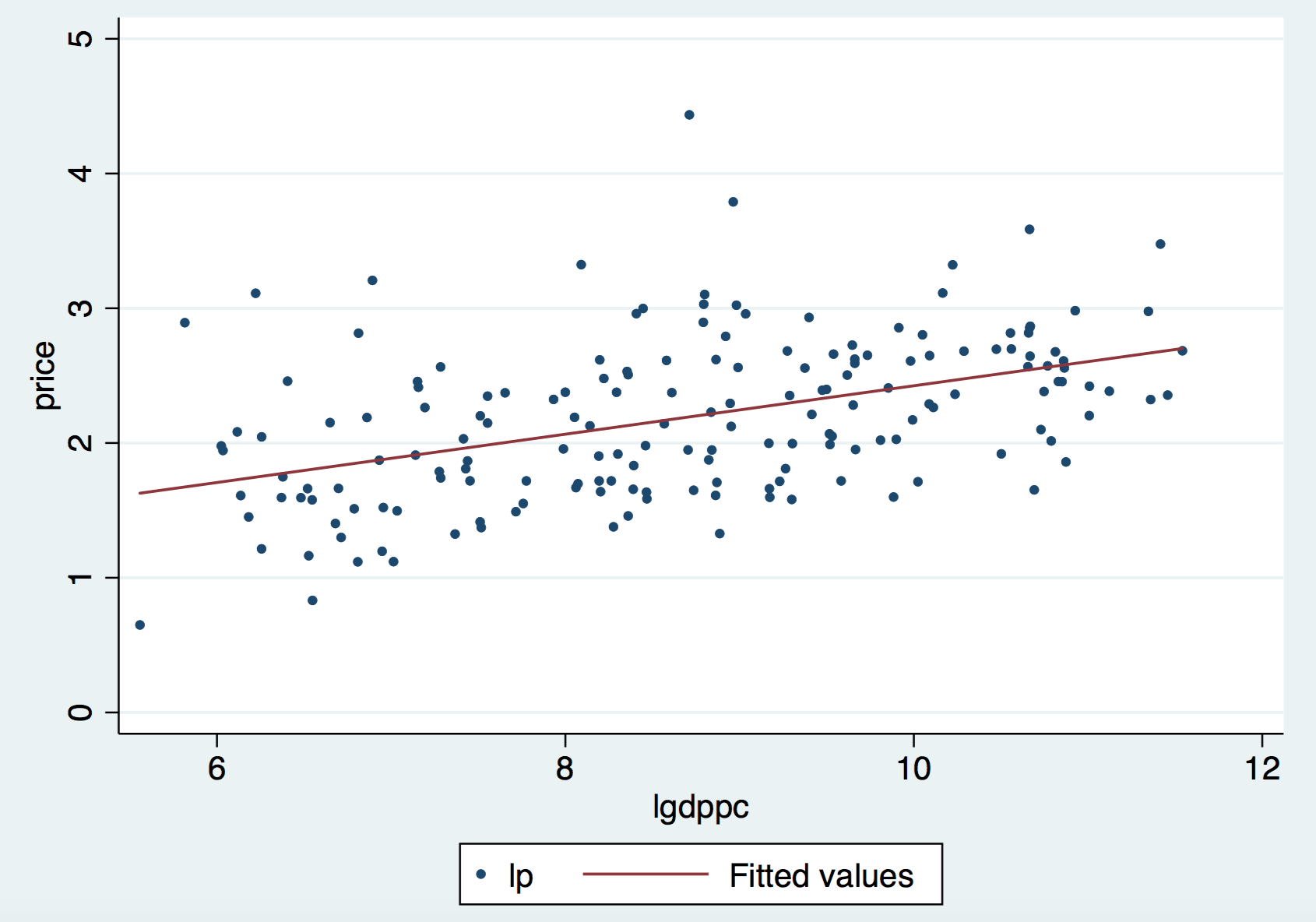}
	\caption{Unit Value and GDP Per Capita: HS 940169}
	\label{fig:940169}
\end{subfigure}

\caption{The Correlation between Price and Destination Income}
\label{fig:price_gdp}
\floatfoot{Note: The two figures show the correlation between price and destination GDP per capita for two sectors: (a) HS 20422 (Meat sheep or goats; (b) fresh, chilled or frozen) and HS 940169 (Seats)}
\end{figure}

\begin{table}[htbp]
\caption{Summary of Coefficients on Trade Cost Variables} \label{trade_costs}

\begin{subtable}{1\textwidth}
\caption{HS 2 - HS 49}
\label{trade_costs1}
\centering
\begin{adjustbox}{width=0.75\textwidth,center}
\small
\input{trade_costs1.tex}

\end{adjustbox}
\end{subtable}
\end{table}

\pagebreak

\begin{table}[htbp]
\ContinuedFloat

\begin{subtable}{1\textwidth}
\caption{HS 50 - HS 97}
\label{trade_costs2}
\centering
\begin{adjustbox}{width=0.75\textwidth,center}
\small
\input{trade_costs2.tex}

\end{adjustbox}
\end{subtable}

\end{table}

\begin{table}
\centering
\small
\caption{Estimated Preference for Quality}
\centering
\begin{adjustbox}{width=0.8\textwidth,center}
\input{parameters.tex}

\end{adjustbox}
\label{kappa}
\end{table}

\begin{table}
\centering
\small
\caption{Estimated Marginal Cost of Quality}
\centering
\begin{adjustbox}{width=0.8\textwidth,center}
\input{parameters1.tex}

\end{adjustbox}
\label{mu}
\end{table}

\begin{table}
\centering
\small
\caption{Preference for Quality and GDP Per Capita}
\centering
\begin{adjustbox}{width=0.9\textwidth,center}
\input{kappa_gdp.tex}

\end{adjustbox}
\label{kappa_gdp}
\end{table}

\begin{figure}
\centering
\caption{Preference for Quality and Income, HS 2 - HS 51}
\label{fig:gdp_kappa}
\begin{subfigure}[b]{1.4\textwidth}
	\includegraphics[scale=0.62]{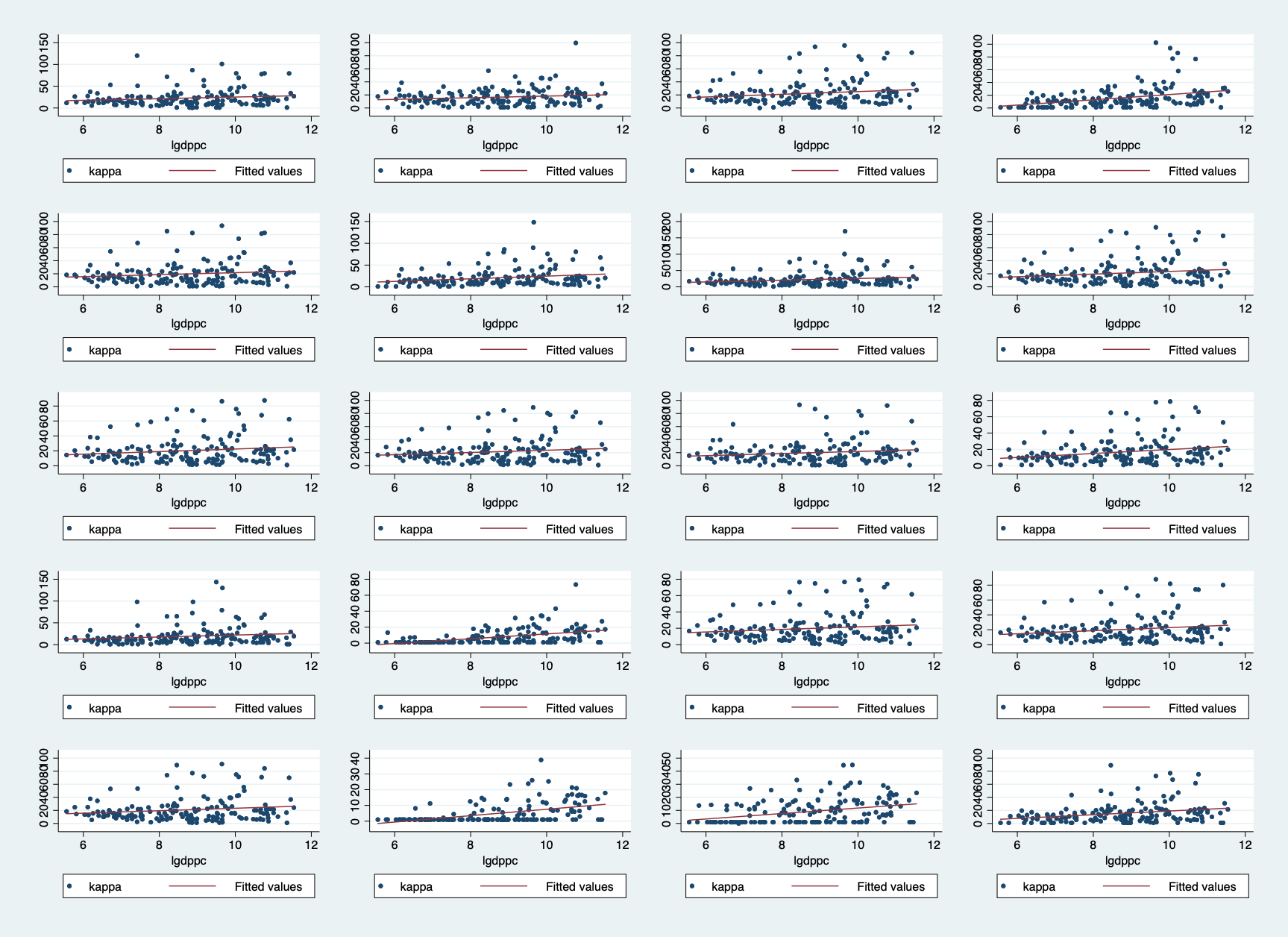}
    \floatfoot{Note: The figures show positive correlations between income and quality preferences for sectors: HS 2, HS 3, HS 4, HS 6,\\ HS 8, HS 11, HS 15, HS 16, HS 19, HS 21, HS 24, HS 32, HS 36, HS 37, HS 40, HS 42, HS 49, HS 50, HS 51}
\end{subfigure}
\end{figure}

\pagebreak

\begin{figure}\ContinuedFloat
\caption{Preference for Quality and Income, HS 52 - HS 97}
\label{fig:gdp_kappa5496}
\begin{subfigure}[b]{1.4\textwidth}
	\includegraphics[scale=0.62]{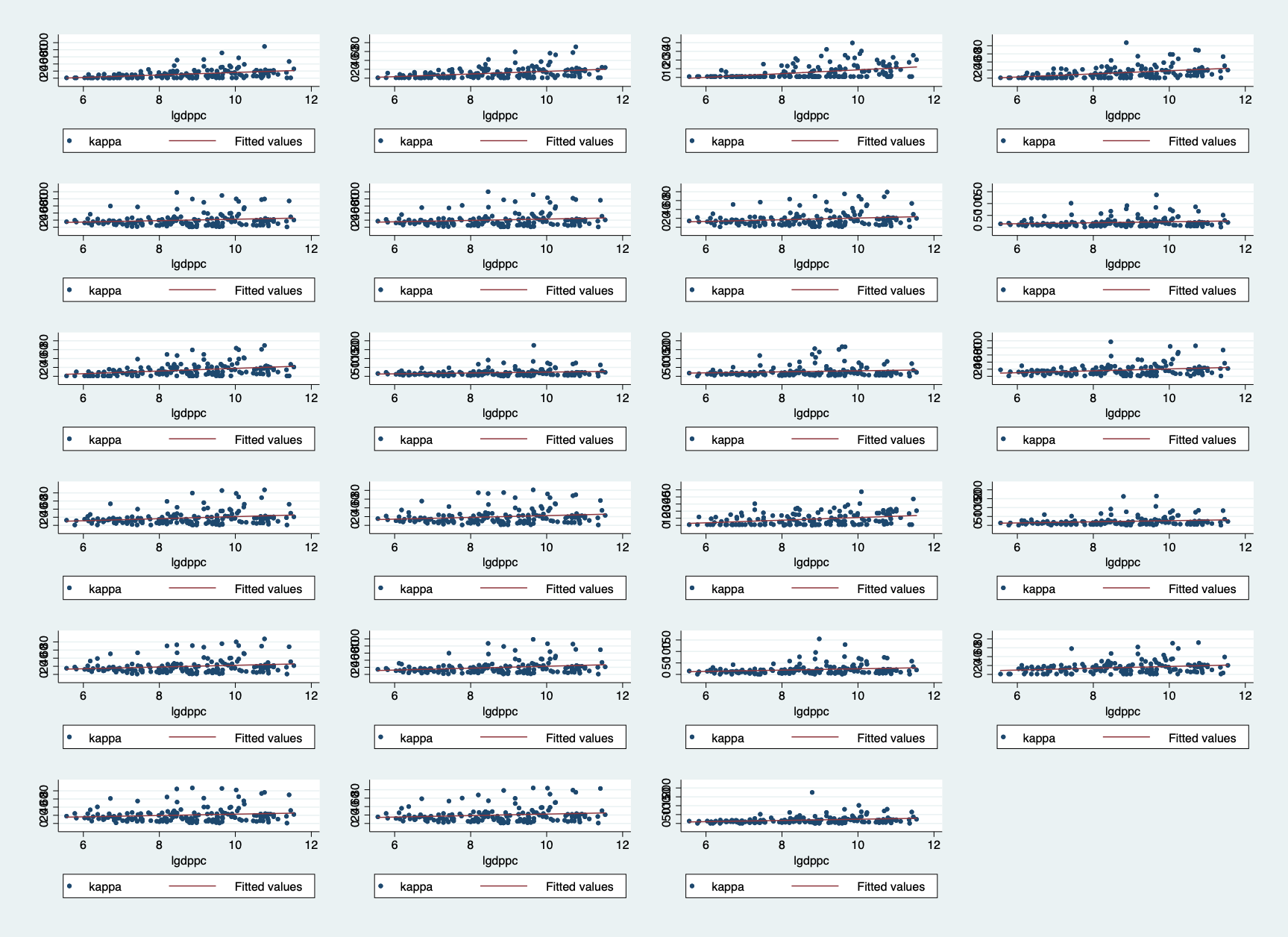}
	
    \floatfoot{Note: The figures show positive correlations between income and quality preferences for sectors: HS 54, HS 55, HS 58, \\HS 59, HS 61, HS 62, HS 65, HS 66, HS 68, HS 69, HS 70, HS 71, HS 74, HS 76, HS 87, HS 88, HS 89, HS 90, HS 91, \\HS 92, HS 93, HS 94, HS 96}
\end{subfigure}

\end{figure}

\begin{table}
\centering
\small
\caption{Cost of Quality and GDP Per Capita}
\centering
\begin{adjustbox}{width=0.9\textwidth,center}
\input{mu_gdp.tex}

\end{adjustbox}
\label{mu_gdp}
\end{table}

\begin{table}[htbp]
\caption{Summary of Exogenous Competitiveness} \label{2-51}

\begin{subtable}{1\textwidth}
\caption{HS 2 - HS 51}
\label{cst1}
\centering
\begin{adjustbox}{width=0.63\textwidth,center}
\small
\input{summary_cm1.tex}

\end{adjustbox}
\end{subtable}
\end{table}

\pagebreak

\begin{table}[htbp]
\ContinuedFloat

\begin{subtable}{1\textwidth}
\caption{HS 52 - HS 96}
\label{cst2}
\centering
\small
\input{summary_cm2.tex}

\end{subtable}

\end{table}

\begin{table}
\centering
\caption{Summary of Cutoffs --- Quality Model}
\centering
\begin{adjustbox}{width=0.85\textwidth,center}
\small
\input{cutoff_quality.tex}

\end{adjustbox}
\label{cutoff_qua}

\end{table}

\begin{table}
\centering
\caption{Summary of Cutoffs ---Melitz and Ottaviano Model}
\begin{adjustbox}{width=0.85\textwidth,center}
\small
\input{cutoff_noqua.tex}

\end{adjustbox}
\label{cutoff_noqua}
\end{table}

\begin{figure}
\caption{Distribution of Cutoffs, HS 2 - HS 51}
	\label{cutoff_dist251}

\begin{subfigure}[b]{1\textwidth}
\includegraphics[width=1\textwidth,height=1.4\textheight,keepaspectratio]{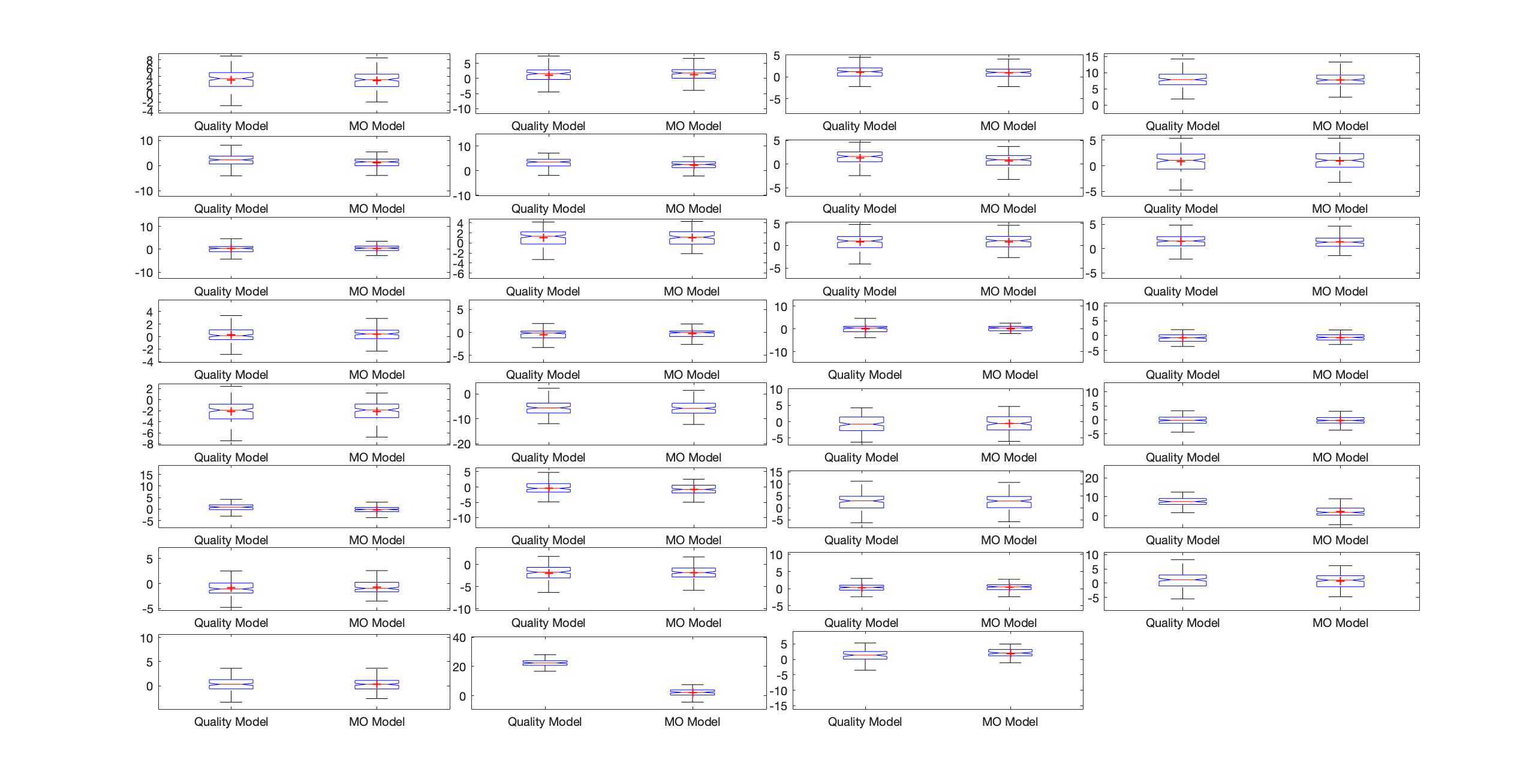}
    \footnotesize{Note: The figures show differences in distribution of cutoffs between the model in this paper (right) and Melitz and Ottaviano (left): HS 2, HS 3, HS 4, HS 6, HS 7, HS 8, HS 9, HS 10, HS 11, HS 15, HS 16, HS 17, HS 18, HS 19, HS 20, HS 21, HS 23, HS 24, HS 30, HS 32, HS 34, HS 35, HS 36, HS 37, HS 38, HS 39, HS 40, HS 42, HS 49, HS 50, HS 51}
\end{subfigure} 
\end{figure}

\pagebreak

\begin{figure}\ContinuedFloat
\caption{Distribution of Cutoffs, HS 52 - HS 97}
	\label{cutoff_dist5296}
    \begin{subfigure}[b]{1\textwidth}
\includegraphics[width=1\textwidth,height=1.4\textheight,keepaspectratio]{cutoff251}
    \footnotesize{Note: The figures show differences in distribution of cutoffs between the model in this paper (right) and Melitz and Ottaviano (left): HS 52, HS 54, HS 55, HS 58, HS 59, HS 61, HS 62, HS 63, HS 64, HS 65, HS 66, HS 67, HS 68, HS 69, HS 70, HS 71, HS 73, HS 74, HS 76, HS 82, HS 87, HS 88, HS 89, HS 90, HS 91, HS 92, HS 93, HS 94, HS 95, HS 96}
\end{subfigure}
    \end{figure}

\begin{figure}
\centering
\caption{Distribution of of Differences Cutoffs, HS 2 - HS 51}
\label{fig:diff_dist251}
\begin{subfigure}[b]{1\textwidth}	\includegraphics[width=1\textwidth,height=1.4\textheight,keepaspectratio]{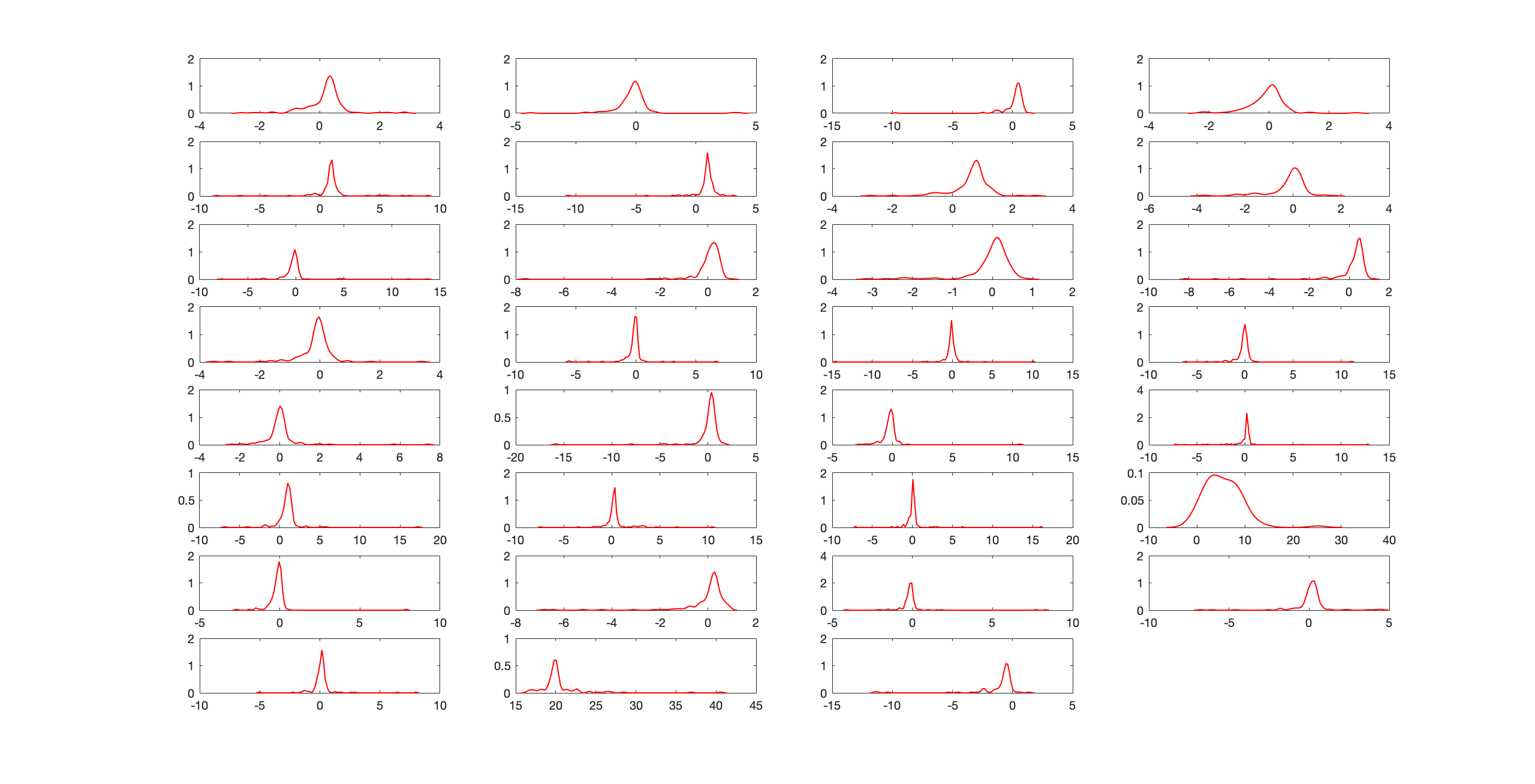}
    \footnotesize{Note: The figures show distribution of differences in cutoffs predicted by model in this paper and Melitz and Ottaviano for sectors: HS 2, HS 3, HS 4, HS 6, HS 7, HS 8, HS 9, HS 10, HS 11, HS 15, HS 16, HS 17, HS 18, HS 19, HS 20, HS 21, HS 23, HS 24, HS 30, HS 32, HS 34, HS 35, HS 36, HS 37, HS 38, HS 39, HS 40, HS 42, HS 49, HS 50, HS 51}
\end{subfigure}
\end{figure}

\pagebreak

\begin{figure}\ContinuedFloat
\caption{Distribution of Differences in Cutoffs, HS 52 - HS 97}
\label{fig:diff_dist5296}
\begin{subfigure}[b]{1\textwidth}
\includegraphics[width=1\textwidth,height=2\textheight,keepaspectratio]{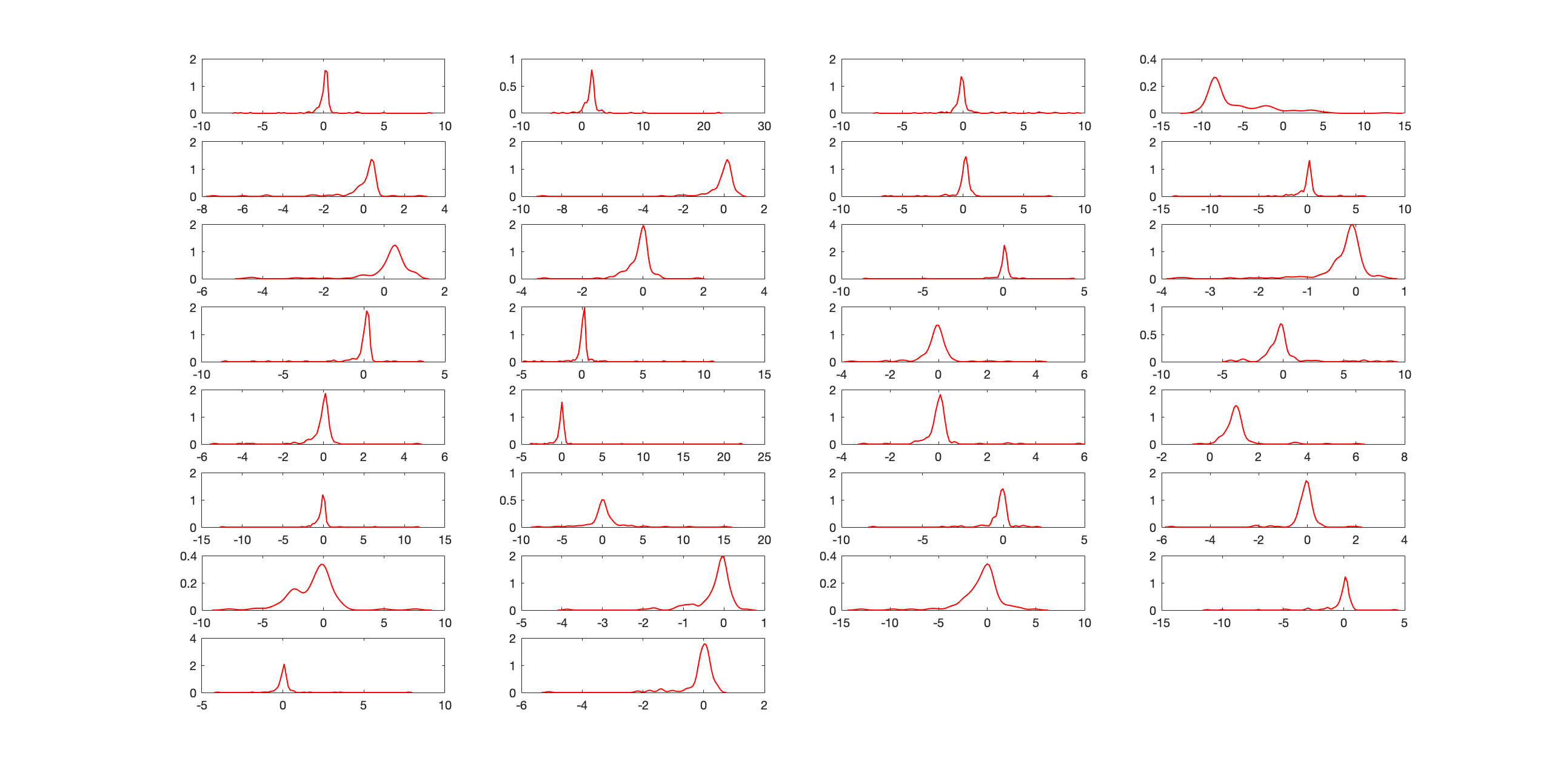}
    \floatfoot{Note: The figures show distribution of differences in cutoffs predicted by model in this paper and Melitz and Ottaviano for sectors: HS 52, HS 54, HS 55, HS 58, HS 59, HS 61, HS 62, HS 63, HS 64, HS 65, HS 66, HS 67, HS 68, HS 69, HS 70, HS 71, HS 73, HS 74, HS 76, HS 81, HS 86, HS 87, HS 88, HS 89, HS 90, HS 91, HS 92, HS 93, HS 94, HS 95}
\end{subfigure}
\end{figure}

\begin{table}
\centering
\small
\caption{International Trade Cost Increase}
\label{counter-factual1}
\begin{adjustbox}{width=0.85\textwidth,center}
\centering
\input{cf1.tex}

\end{adjustbox}
\end{table}

\begin{table}
\centering
\small
\caption{International Trade Cost Increase--MO}
\label{counter-factual1_noqua}
\begin{adjustbox}{width=0.85\textwidth,center}
\input{cf1_noqua.tex}

\end{adjustbox}
\end{table}

\begin{figure}
\centering
\caption{Change in Cutoffs and Quality Preferences, HS 2 - HS 51}
\label{fig:diff_beta}
\begin{subfigure}[b]{1\textwidth}	\includegraphics[width=\textwidth,height=1.4\textheight,keepaspectratio]{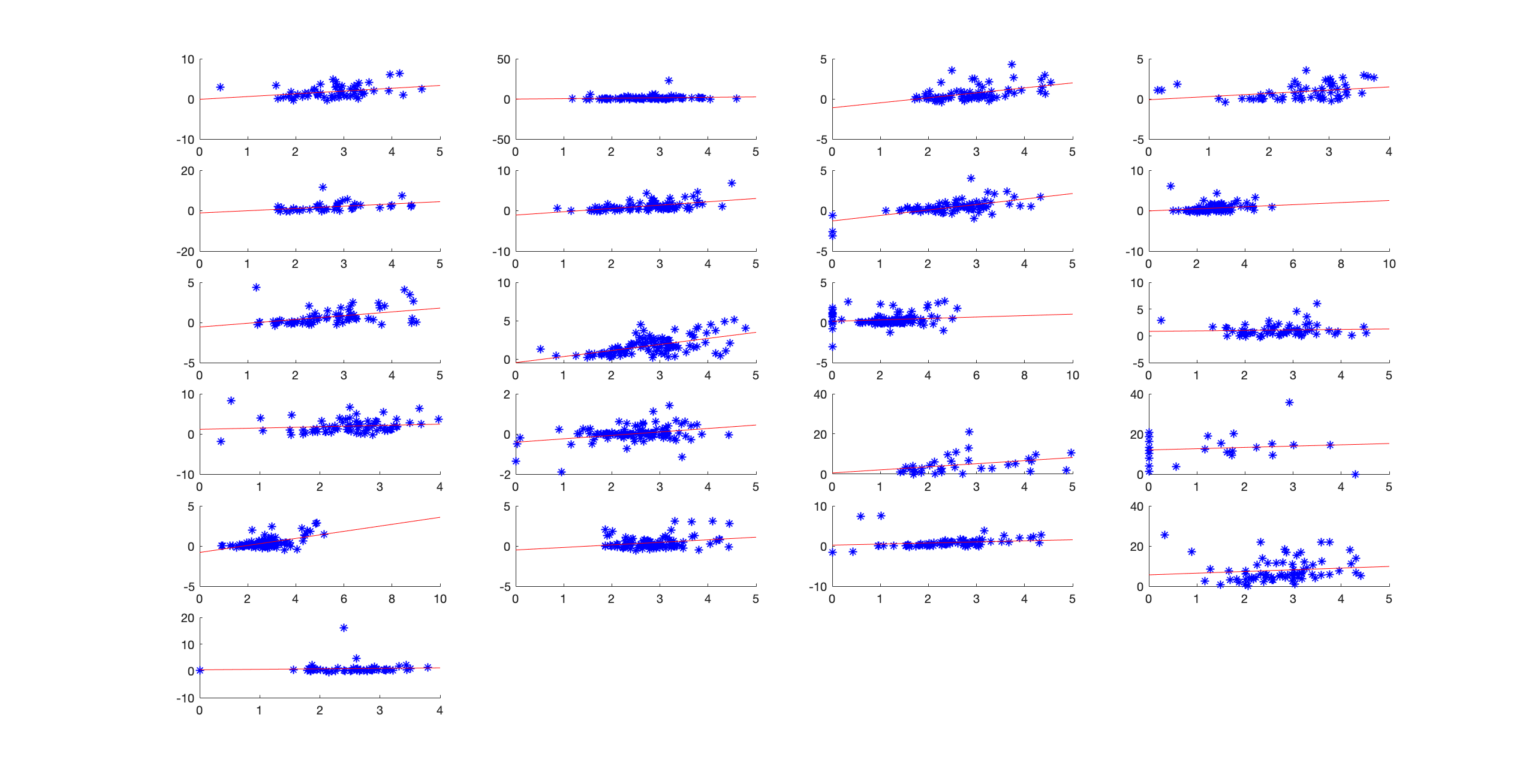}
    \footnotesize{Note: The figures show positive correlations between quality preferences and difference in loss from trade barrier between the model in this paper and Melitz and Ottaviano for sectors: HS 2, HS 3, HS 4, HS 7, HS 8, HS 9, HS 11, HS 15, HS 16, HS 17, HS 23, HS 24, HS 30, HS 35, HS 36, HS 37, HS 38, HS 39, HS 40, HS 42, HS 51}
\end{subfigure}
\end{figure}

\pagebreak

\begin{figure}\ContinuedFloat
\caption{Change in Cutoffs and Quality Preferences, HS 52 - HS 97}
\label{fig:diff_beta5296}
\begin{subfigure}[b]{1\textwidth}
\includegraphics[width=\textwidth,height=2\textheight,keepaspectratio]{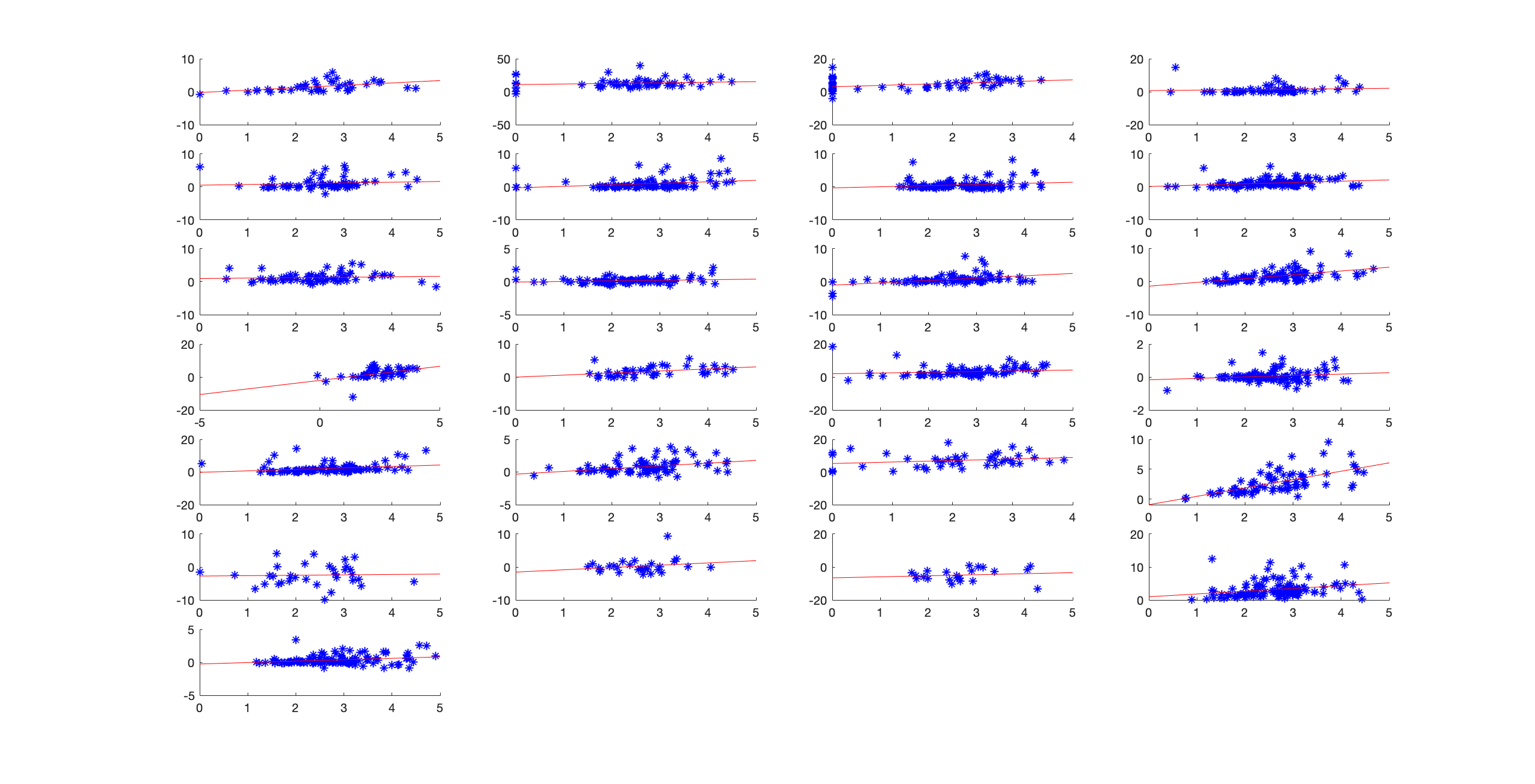}	
    \floatfoot{Note: The figures show positive correlations between quality preferences and difference in loss from trade barrier between the model in this paper and Melitz and Ottaviano: HS 52, HS 54, HS 58, HS 61, HS 62, HS 63, HS 64, HS 65, HS 66, HS 68, HS 69, HS 70, HS 71, HS 73, HS 74, HS 76, HS 82, HS 87, HS 88, HS 90, HS 91, HS 92, HS 93, HS 95, HS 96}
\end{subfigure}
\end{figure}

\begin{table}[htbp]
\caption{Summary of Cutoff Changes} \label{cf2}

\begin{subtable}{1\textwidth}
\caption{Summary of Cutoff Changes}
\label{cf21}
\centering
\begin{adjustbox}{width=0.8\textwidth,center}
\small
\input{cf2_1.tex}

\end{adjustbox}
\end{subtable}
\end{table}

\pagebreak

\begin{table}[htbp]
\ContinuedFloat

\begin{subtable}{1\textwidth}
\caption{Summary of Cutoff Changes (continue)}
\label{cf22}
\centering
\begin{adjustbox}{width=0.8\textwidth,center}
\small
\input{cf2_2.tex}

\end{adjustbox}
\end{subtable}

\end{table}

\begin{sidewaysfigure}
\includegraphics[scale=0.5,height=10cm]{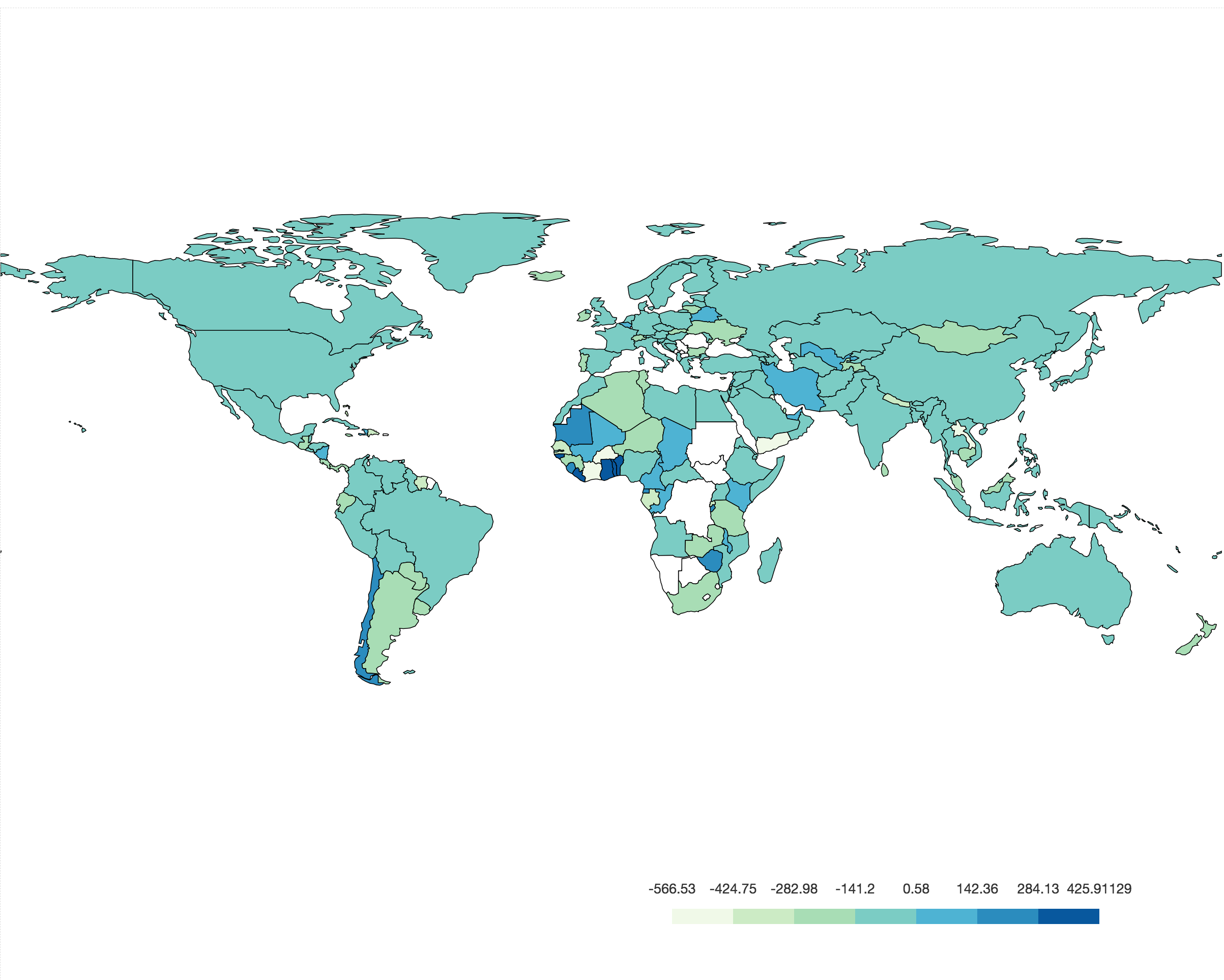}
\caption{Changes in Cutoffs Across Countries}
\label{fig:cutoff_change_distribution}
\floatfoot{The figure illustrates the distribution of rise/fall in endogenous competitiveness across countries}
\end{sidewaysfigure}

\begin{table}
\caption{Gains from Preference Shock and Country Characteristics}
\input{size-gain.tex}

\label{size-gain}

\end{table}

\end{document}

%% file: stylized1.tex
{\tabcolsep=1.3mm
    \begin{tabular}{l||cc|cc}
    \toprule
    \toprule
          & \multicolumn{4}{c}{Dependent Variable: $\ln (price)$} \\
\cmidrule{2-5}          & \multicolumn{2}{c}{$\ln(p_{fhc})$} & \multicolumn{2}{c}{$\ln(p_{hc})$} \\
\cmidrule{2-5}          & (1)   & (2)   & (3)   & (4) \\
    \midrule
    (log) Population   & 0.035*** & 0.051*** & 0.208*** & 0.245*** \\
          & (0.000) & (0.000) & (0.001) & (0.001) \\
    \midrule
    Country-level Controls & No    & Yes   & No    & Yes \\
    Firm FE & Yes   & Yes   & No    & No \\
    Product FE & Yes   & Yes   & Yes   & Yes \\
    Observations & 11,224,467 & 11,038,879 & 449,135 & 440,783 \\
    R-Squred & 0.697 & 0.698 & 0.763 & 0.780 \\
    \bottomrule
    \bottomrule
     \multicolumn{5}{l}{\shortstack[l]{\parbox{12cm}{\vspace{0.1cm} \scriptsize
Notes: *** p < 0.01, ** p < 0.05, * p < 0.1. Robust standard errors are in parentheses. The dependent variable in specifications (1)-(2) is the (log) price at the firm-HS8-country level, and in specifications (3)-(4) is the (log) price at the HS8-country level. Country-level other controls include GDP per capita and distance. All regressions include a constant term.
}}
}
    \end{tabular}}%

%% file: stylized2.tex
{\tabcolsep=1.3mm
    \begin{tabular}{l||cc|cc}
    \toprule
    \toprule
          & \multicolumn{4}{c}{Dependent Variable: $\ln (price)$} \\
\cmidrule{2-5}          & \multicolumn{2}{c}{$\ln(p_{fhc})$} & \multicolumn{2}{c}{$\ln(p_{fh})$} \\
\cmidrule{2-5}          & (1)   & (2)   & (3)   & (4) \\
    \midrule
    $\ln(Revenue)$ & 0.154*** & 0.155*** & 0.206*** & 0.210*** \\
          & (0.000) & (0.000) & (0.000) & (0.000) \\
    $\ln(Revenue)*RD\_Intensity$ &       & 0.048*** &       & 0.009 \\
          &       & (0.005) &       & (0.009) \\
    \midrule
    Firm FE & Yes   & Yes   & Yes   & Yes \\
    Product FE & Yes   & Yes   & Yes   & Yes \\
    Observations & 12,476,096 & 11,070,256 & 4,146,176 & 3,660,845 \\
    R-Squred & 0.696 & 0.695 & 0.722 & 0.725 \\
    \bottomrule
    \bottomrule
    
    \multicolumn{5}{l}{\shortstack[l]{\parbox{13cm}{\vspace{0.1cm} \scriptsize
Notes: *** p < 0.01, ** p < 0.05, * p < 0.1. Robust standard errors are in parentheses. The dependent variable in specifications (1)-(2) is the (log) price at the firm-HS8-country level, and in specifications (3)-(4) is the (log) price at the firm-HS8 level. $RD\_Intensity$ is compiled by \textcite{KROSZNER2007187} at ISIC level, which can be converted to HS 6 codes. All regressions include a constant term.
}}
}
    \end{tabular}}%

%% file: stylized3.tex
{\tabcolsep=1.3mm
    \begin{tabular}{l||c|c|cc}
    \toprule
    \toprule
          & \multicolumn{4}{c}{Dependent Variable: $\ln (price)$} \\
\cmidrule{2-5}          & \multicolumn{2}{c}{$\ln(p_{fh})$} & \multicolumn{2}{c}{$std.(\ln(p_{fh}))$} \\
\cmidrule{2-5}          & (1)   & (2)   & (3)   & (4) \\
    \midrule
    Num\_Destinations & 0.007*** & 0.006*** & 0.006*** & 0.005*** \\
          & (0.000) & (0.000) & (0.000) & (0.000) \\
    Num \_Destinations*RD\_Intensity &       & 0.009*** &       & 0.015*** \\
          &       & (0.001) &       & (0.000) \\
    \midrule
    Firm FE & Yes   & Yes   & Yes   & Yes \\
    Product FE & Yes   & Yes   & Yes   & Yes \\
    Observations & 11,622,117 & 10,314,550 & 9,069,247 & 8,065,339 \\
    R-Squred & 0.681 & 0.680 & 0.459 & 0.463 \\
    \bottomrule
    \bottomrule
    
    \multicolumn{5}{l}{\shortstack[l]{\parbox{15cm}{\vspace{0.1cm} \scriptsize
Notes: *** p < 0.01, ** p < 0.05, * p < 0.1. Robust standard errors are in parentheses. The dependent variable in specifications (1)-(2) is the (log) price at the firm-HS8 level, and in specifications (3)-(4) is the standard deviation of (log) prices at the firm-HS8 level. $RD\_Intensity$ is compiled by \textcite{KROSZNER2007187} at ISIC level, which can be converted to HS 6 codes. All regressions include a constant term.
}}
}
    \end{tabular}}%

%% file: summary1.tex
{\tabcolsep=1.3mm
    \begin{tabular}{llrllrllrll}
    \toprule
    HS 2  & Obs   &       & HS 2  & Obs   &       & HS 2  & Obs   &       & HS 2  & Obs \\
    \midrule
    2     & 30,707 &       & 24    & 6,503 &       & 51    & 1,780 &       & 73    & 27,812 \\
    3     & 58,406 &       & 30    & 28,032 &       & 52    & 3,047 &       & 74    & 6,489 \\
    4     & 27,854 &       & 32    & 3,420 &       & 54    & 900   &       & 76    & 3,604 \\
    6     & 2,998 &       & 33    & 58,712 &       & 55    & 1,928 &       & 82    & 31,914 \\
    7     & 53,205 &       & 34    & 30,956 &       & 57    & 29,694 &       & 83    & 6,406 \\
    8     & 53,451 &       & 35    & 3,713 &       & 58    & 500   &       & 84    & 47,596 \\
    9     & 36,534 &       & 36    & 1,750 &       & 59    & 1,035 &       & 85    & 91,988 \\
    10    & 3,848 &       & 37    & 902   &       & 61    & 234,282 &       & 87    & 21,361 \\
    11    & 4,079 &       & 38    & 15,284 &       & 62    & 265,844 &       & 88    & 1,515 \\
    15    & 13,055 &       & 39    & 31,740 &       & 63    & 74,382 &       & 89    & 8,254 \\
    16    & 29,784 &       & 40    & 16,135 &       & 64    & 66,311 &       & 90    & 31,572 \\
    17    & 11,371 &       & 42    & 59,404 &       & 65    & 14,244 &       & 91    & 29,964 \\
    18    & 12,387 &       & 43    & 3,802 &       & 66    & 6,506 &       & 92    & 16,893 \\
    19    & 39,318 &       & 44    & 12,643 &       & 67    & 4,578 &       & 93    & 3,776 \\
    20    & 76,252 &       & 46    & 6,690 &       & 68    & 1,477 &       & 94    & 61,268 \\
    21    & 34,332 &       & 48    & 46,325 &       & 69    & 16,443 &       & 95    & 52,143 \\
    22    & 47,661 &       & 49    & 31,867 &       & 70    & 19,589 &       & 96    & 46,185 \\
    23    & 2,307 &       & 50    & 395   &       & 71    & 22,075 &       & 97    & 11,678 \\
    \bottomrule
    \multicolumn{11}{l}{\shortstack[l]{\parbox{10.5cm}{\vspace{0.1cm} \scriptsize
Note: This table summarizes the number of bilateral trade transaction observations  within each HS 2 sector. Raw Data source is from CEPII
}}
}

    \end{tabular}}%

%% file: summary2.tex
    \begin{tabular}{llllllllllllll}
    \hline
    Country & \multicolumn{1}{l}{Obs} &       & \multicolumn{1}{l}{Country} & \multicolumn{1}{l}{Obs} &       & \multicolumn{1}{l}{Country} & \multicolumn{1}{l}{Obs} &       & \multicolumn{1}{l}{Country} & \multicolumn{1}{l}{Obs} &       & \multicolumn{1}{l}{Country} & \multicolumn{1}{l}{Obs} \\
    \hline
    AFG   & 902   &       & \multicolumn{1}{l}{CCK} & 86    &       & \multicolumn{1}{l}{GUY} & 935   &       & \multicolumn{1}{l}{MAR} & 1,194 &       & \multicolumn{1}{l}{STP} & 687 \\
    ALB   & 1,063 &       & \multicolumn{1}{l}{COL} & 1,103 &       & \multicolumn{1}{l}{HTI} & 780   &       & \multicolumn{1}{l}{MOZ} & 1,123 &       & \multicolumn{1}{l}{SAU} & 1,152 \\
    DZA   & 1,094 &       & \multicolumn{1}{l}{COM} & 797   &       & \multicolumn{1}{l}{HND} & 966   &       & \multicolumn{1}{l}{OMN} & 1,136 &       & \multicolumn{1}{l}{SEN} & 1,045 \\
    ASM   & 250   &       & \multicolumn{1}{l}{COG} & 1,064 &       & \multicolumn{1}{l}{HKG} & 1,195 &       & \multicolumn{1}{l}{NRU} & 364   &       & \multicolumn{1}{l}{SYC} & 868 \\
    AND   & 1,030 &       & \multicolumn{1}{l}{ZAR} & 998   &       & \multicolumn{1}{l}{HUN} & 1,171 &       & \multicolumn{1}{l}{NPL} & 1,010 &       & \multicolumn{1}{l}{SLE} & 701 \\
    AGO   & 1,127 &       & \multicolumn{1}{l}{COK} & 692   &       & \multicolumn{1}{l}{ISL} & 1,119 &       & \multicolumn{1}{l}{NLD} & 1,201 &       & \multicolumn{1}{l}{IND} & 1,107 \\
    ATG   & 992   &       & \multicolumn{1}{l}{CRI} & 1,108 &       & \multicolumn{1}{l}{IDN} & 1,148 &       & \multicolumn{1}{l}{ABW} & 1,085 &       & \multicolumn{1}{l}{SGP} & 1,184 \\
    AZE   & 1,096 &       & \multicolumn{1}{l}{HRV} & 1,157 &       & \multicolumn{1}{l}{IRN} & 945   &       & \multicolumn{1}{l}{NCL} & 1,094 &       & \multicolumn{1}{l}{SVK} & 1,177 \\
    ARG   & 1,036 &       & \multicolumn{1}{l}{CUB} & 863   &       & \multicolumn{1}{l}{IRQ} & 1,041 &       & \multicolumn{1}{l}{VUT} & 720   &       & \multicolumn{1}{l}{VNM} & 1,145 \\
    AUS   & 1,172 &       & \multicolumn{1}{l}{CYP} & 1,156 &       & \multicolumn{1}{l}{IRL} & 1,182 &       & \multicolumn{1}{l}{NZL} & 1,142 &       & \multicolumn{1}{l}{SVN} & 1,160 \\
    AUT   & 1,186 &       & \multicolumn{1}{l}{CZE} & 1,181 &       & \multicolumn{1}{l}{ISR} & 1,139 &       & \multicolumn{1}{l}{NIC} & 1,015 &       & \multicolumn{1}{l}{SOM} & 611 \\
    BHS   & 1,123 &       & \multicolumn{1}{l}{BEN} & 882   &       & \multicolumn{1}{l}{ITA} & 1,203 &       & \multicolumn{1}{l}{NER} & 821   &       & \multicolumn{1}{l}{ZAF} & 1,164 \\
    BHR   & 1,155 &       & \multicolumn{1}{l}{DNK} & 1,192 &       & \multicolumn{1}{l}{CIV} & 1,093 &       & \multicolumn{1}{l}{NGA} & 1,130 &       & \multicolumn{1}{l}{ZWE} & 1,054 \\
    BGD   & 931   &       & \multicolumn{1}{l}{DMA} & 488   &       & \multicolumn{1}{l}{JAM} & 1,082 &       & \multicolumn{1}{l}{NIU} & 299   &       & \multicolumn{1}{l}{ESP} & 1,209 \\
    ARM   & 1,019 &       & \multicolumn{1}{l}{DOM} & 1,132 &       & \multicolumn{1}{l}{JPN} & 1,214 &       & \multicolumn{1}{l}{NFK} & 289   &       & \multicolumn{1}{l}{SUR} & 958 \\
    BRB   & 1,141 &       & \multicolumn{1}{l}{ECU} & 1,016 &       & \multicolumn{1}{l}{KAZ} & 1,157 &       & \multicolumn{1}{l}{NOR} & 1,171 &       & \multicolumn{1}{l}{SWE} & 1,187 \\
    BEL   & 1,207 &       & \multicolumn{1}{l}{SLV} & 1,062 &       & \multicolumn{1}{l}{JOR} & 1,125 &       & \multicolumn{1}{l}{MNP} & 294   &       & \multicolumn{1}{l}{CHE} & 1,187 \\
    BMU   & 1,104 &       & \multicolumn{1}{l}{GNQ} & 827   &       & \multicolumn{1}{l}{KEN} & 1,085 &       & \multicolumn{1}{l}{FSM} & 596   &       & \multicolumn{1}{l}{SYR} & 823 \\
    BTN   & 246   &       & \multicolumn{1}{l}{ETH} & 1,020 &       & \multicolumn{1}{l}{PRK} & 747   &       & \multicolumn{1}{l}{MHL} & 306   &       & \multicolumn{1}{l}{TJK} & 743 \\
    BOL   & 969   &       & \multicolumn{1}{l}{ERI} & 302   &       & \multicolumn{1}{l}{KOR} & 1,180 &       & \multicolumn{1}{l}{PLW} & 707   &       & \multicolumn{1}{l}{THA} & 1,180 \\
    BIH   & 1,092 &       & \multicolumn{1}{l}{EST} & 1,163 &       & \multicolumn{1}{l}{KWT} & 1,149 &       & \multicolumn{1}{l}{PAK} & 1,014 &       & \multicolumn{1}{l}{TGO} & 896 \\
    BRA   & 1,128 &       & \multicolumn{1}{l}{FLK} & 382   &       & \multicolumn{1}{l}{KGZ} & 983   &       & \multicolumn{1}{l}{PAN} & 1,143 &       & \multicolumn{1}{l}{TKL} & 95 \\
    BLZ   & 897   &       & \multicolumn{1}{l}{FJI} & 1,038 &       & \multicolumn{1}{l}{LAO} & 733   &       & \multicolumn{1}{l}{PNG} & 948   &       & \multicolumn{1}{l}{TON} & 732 \\
    IOT   & 32    &       & \multicolumn{1}{l}{FIN} & 1,171 &       & \multicolumn{1}{l}{LBN} & 1,143 &       & \multicolumn{1}{l}{PRY} & 992   &       & \multicolumn{1}{l}{TTO} & 989 \\
    SLB   & 531   &       & \multicolumn{1}{l}{FRA} & 1,223 &       & \multicolumn{1}{l}{LVA} & 1,164 &       & \multicolumn{1}{l}{PER} & 1,071 &       & \multicolumn{1}{l}{ARE} & 1,187 \\
    VGB   & 630   &       & \multicolumn{1}{l}{PYF} & 1,066 &       & \multicolumn{1}{l}{LBR} & 741   &       & \multicolumn{1}{l}{PHL} & 1,213 &       & \multicolumn{1}{l}{TUN} & 1,076 \\
    BRN   & 1,076 &       & \multicolumn{1}{l}{ATF} & 164   &       & \multicolumn{1}{l}{LBY} & 1,021 &       & \multicolumn{1}{l}{PCN} & 21    &       & \multicolumn{1}{l}{TUR} & 1,143 \\
    BGR   & 1,165 &       & \multicolumn{1}{l}{DJI} & 796   &       & \multicolumn{1}{l}{LTU} & 1,160 &       & \multicolumn{1}{l}{POL} & 1,181 &       & \multicolumn{1}{l}{TKM} & 888 \\
    MMR   & 936   &       & \multicolumn{1}{l}{GAB} & 939   &       & \multicolumn{1}{l}{MAC} & 1,047 &       & \multicolumn{1}{l}{PRT} & 1,192 &       & \multicolumn{1}{l}{TCA} & 519 \\
    BDI   & 698   &       & \multicolumn{1}{l}{GEO} & 1,093 &       & \multicolumn{1}{l}{MDG} & 898   &       & \multicolumn{1}{l}{GNB} & 462   &       & \multicolumn{1}{l}{TUV} & 231 \\
    BLR   & 1,132 &       & \multicolumn{1}{l}{GMB} & 720   &       & \multicolumn{1}{l}{MWI} & 952   &       & \multicolumn{1}{l}{TMP} & 802   &       & \multicolumn{1}{l}{UGA} & 984 \\
    KHM   & 968   &       & \multicolumn{1}{l}{PAL} & 804   &       & \multicolumn{1}{l}{MYS} & 1,184 &       & \multicolumn{1}{l}{QAT} & 1,159 &       & \multicolumn{1}{l}{UKR} & 1,158 \\
    CMR   & 894   &       & \multicolumn{1}{l}{DEU} & 1,208 &       & \multicolumn{1}{l}{MDV} & 1,046 &       & \multicolumn{1}{l}{ROM} & 1,178 &       & \multicolumn{1}{l}{EGY} & 1,112 \\
    CAN   & 1,198 &       & \multicolumn{1}{l}{GHA} & 1,112 &       & \multicolumn{1}{l}{MLI} & 762   &       & \multicolumn{1}{l}{RUS} & 1,187 &       & \multicolumn{1}{l}{GBR} & 1,205 \\
    CPV   & 919   &       & \multicolumn{1}{l}{GIB} & 970   &       & \multicolumn{1}{l}{MLT} & 1,161 &       & \multicolumn{1}{l}{RWA} & 884   &       & \multicolumn{1}{l}{TZA} & 1,035 \\
    CYM   & 751   &       & \multicolumn{1}{l}{KIR} & 549   &       & \multicolumn{1}{l}{MRT} & 843   &       & \multicolumn{1}{l}{SHN} & 354   &       & \multicolumn{1}{l}{USA} & 1,238 \\
    CAF   & 343   &       & \multicolumn{1}{l}{GRC} & 1,184 &       & \multicolumn{1}{l}{MUS} & 1,120 &       & \multicolumn{1}{l}{KNA} & 501   &       & \multicolumn{1}{l}{BFA} & 854 \\
    LKA   & 1,047 &       & \multicolumn{1}{l}{GRL} & 1,000 &       & \multicolumn{1}{l}{MEX} & 1,168 &       & \multicolumn{1}{l}{AIA} & 314   &       & \multicolumn{1}{l}{URY} & 1,065 \\
    TCD   & 589   &       & \multicolumn{1}{l}{GRD} & 538   &       & \multicolumn{1}{l}{TWN} & 1,167 &       & \multicolumn{1}{l}{LCA} & 679   &       & \multicolumn{1}{l}{UZB} & 874 \\
    CHL   & 1,113 &       & \multicolumn{1}{l}{GUM} & 583   &       & \multicolumn{1}{l}{MNG} & 1,007 &       & \multicolumn{1}{l}{SPM} & 538   &       & \multicolumn{1}{l}{VEN} & 1,100 \\
    CHN   & 1,182 &       & \multicolumn{1}{l}{GTM} & 1,103 &       & \multicolumn{1}{l}{MDA} & 1,078 &       & \multicolumn{1}{l}{VCT} & 499   &       & \multicolumn{1}{l}{WLF} & 451 \\
    CXR   & 261   &       & \multicolumn{1}{l}{GIN} & 869   &       & \multicolumn{1}{l}{MSR} & 164   &       & \multicolumn{1}{l}{SMR} & 389   &       & \multicolumn{1}{l}{WSM} & 806 \\
    YEM   & 950   &       &  \multicolumn{1}{l}{ZMB}     &  1,063     &       &       &       &       &       &       &       &       &  \\

    \hline
    
        \multicolumn{14}{l}{\shortstack[l]{\parbox{17.5cm}{\vspace{0.1cm} \scriptsize
Note: This table summarizes the number of import transaction observations  for each country. Raw Data source is from CEPII
}}
}
    
    \end{tabular}%

%% file: summary3.tex

    \begin{tabular}{llllllllllllll}
    \hline
    Country & \multicolumn{1}{l}{Obs} &       & \multicolumn{1}{l}{Country} & \multicolumn{1}{l}{Obs} &       & \multicolumn{1}{l}{Country} & \multicolumn{1}{l}{Obs} &       & \multicolumn{1}{l}{Country} & \multicolumn{1}{l}{Obs} &       & \multicolumn{1}{l}{Country} & \multicolumn{1}{l}{Obs} \\
    \hline
    AFG   & 277   &       & \multicolumn{1}{l}{CCK} & 34    &       & \multicolumn{1}{l}{GUY} & 280   &       & \multicolumn{1}{l}{MAR} & 1,024 &       & \multicolumn{1}{l}{STP} & 53 \\
    ALB   & 622   &       & \multicolumn{1}{l}{COL} & 973   &       & \multicolumn{1}{l}{HTI} & 239   &       & \multicolumn{1}{l}{MOZ} & 360   &       & \multicolumn{1}{l}{SAU} & 888 \\
    DZA   & 296   &       & \multicolumn{1}{l}{COM} & 118   &       & \multicolumn{1}{l}{HND} & 635   &       & \multicolumn{1}{l}{OMN} & 706   &       & \multicolumn{1}{l}{SEN} & 588 \\
    ASM   & 128   &       & \multicolumn{1}{l}{COG} & 178   &       & \multicolumn{1}{l}{HKG} & 1,163 &       & \multicolumn{1}{l}{NRU} & 80    &       & \multicolumn{1}{l}{SYC} & 208 \\
    AND   & 387   &       & \multicolumn{1}{l}{ZAR} & 240   &       & \multicolumn{1}{l}{HUN} & 1,124 &       & \multicolumn{1}{l}{NPL} & 571   &       & \multicolumn{1}{l}{SLE} & 241 \\
    AGO   & 158   &       & \multicolumn{1}{l}{COK} & 67    &       & \multicolumn{1}{l}{ISL} & 626   &       & \multicolumn{1}{l}{NLD} & 1,212 &       & \multicolumn{1}{l}{IND} & 1,167 \\
    ATG   & 219   &       & \multicolumn{1}{l}{CRI} & 881   &       & \multicolumn{1}{l}{IDN} & 1,135 &       & \multicolumn{1}{l}{ABW} & 201   &       & \multicolumn{1}{l}{SGP} & 1,180 \\
    AZE   & 414   &       & \multicolumn{1}{l}{HRV} & 1,029 &       & \multicolumn{1}{l}{IRN} & 737   &       & \multicolumn{1}{l}{NCL} & 377   &       & \multicolumn{1}{l}{SVK} & 1,095 \\
    ARG   & 1,027 &       & \multicolumn{1}{l}{CUB} & 186   &       & \multicolumn{1}{l}{IRQ} & 198   &       & \multicolumn{1}{l}{VUT} & 106   &       & \multicolumn{1}{l}{VNM} & 1,105 \\
    AUS   & 1,190 &       & \multicolumn{1}{l}{CYP} & 923   &       & \multicolumn{1}{l}{IRL} & 1,150 &       & \multicolumn{1}{l}{NZL} & 1,145 &       & \multicolumn{1}{l}{SVN} & 1,092 \\
    AUT   & 1,168 &       & \multicolumn{1}{l}{CZE} & 1,173 &       & \multicolumn{1}{l}{ISR} & 1,004 &       & \multicolumn{1}{l}{NIC} & 497   &       & \multicolumn{1}{l}{SOM} & 75 \\
    BHS   & 301   &       & \multicolumn{1}{l}{BEN} & 230   &       & \multicolumn{1}{l}{ITA} & 1,231 &       & \multicolumn{1}{l}{NER} & 242   &       & \multicolumn{1}{l}{ZAF} & 1,186 \\
    BHR   & 839   &       & \multicolumn{1}{l}{DNK} & 1,185 &       & \multicolumn{1}{l}{CIV} & 564   &       & \multicolumn{1}{l}{NGA} & 608   &       & \multicolumn{1}{l}{ZWE} & 466 \\
    BGD   & 755   &       & \multicolumn{1}{l}{DMA} & 152   &       & \multicolumn{1}{l}{JAM} & 537   &       & \multicolumn{1}{l}{NIU} & 25    &       & \multicolumn{1}{l}{ESP} & 1,242 \\
    ARM   & 509   &       & \multicolumn{1}{l}{DOM} & 854   &       & \multicolumn{1}{l}{JPN} & 1,172 &       & \multicolumn{1}{l}{NFK} & 7     &       & \multicolumn{1}{l}{SUR} & 331 \\
    BRB   & 505   &       & \multicolumn{1}{l}{ECU} & 786   &       & \multicolumn{1}{l}{KAZ} & 743   &       & \multicolumn{1}{l}{NOR} & 1,090 &       & \multicolumn{1}{l}{SWE} & 1,167 \\
    BEL   & 1,204 &       & \multicolumn{1}{l}{SLV} & 765   &       & \multicolumn{1}{l}{JOR} & 804   &       & \multicolumn{1}{l}{MNP} & 110   &       & \multicolumn{1}{l}{CHE} & 1,120 \\
    BMU   & 130   &       & \multicolumn{1}{l}{GNQ} & 25    &       & \multicolumn{1}{l}{KEN} & 937   &       & \multicolumn{1}{l}{FSM} & 44    &       & \multicolumn{1}{l}{SYR} & 683 \\
    BTN   & 44    &       & \multicolumn{1}{l}{ETH} & 508   &       & \multicolumn{1}{l}{PRK} & 432   &       & \multicolumn{1}{l}{MHL} & 62    &       & \multicolumn{1}{l}{TJK} & 230 \\
    BOL   & 443   &       & \multicolumn{1}{l}{ERI} & 50    &       & \multicolumn{1}{l}{KOR} & 1,152 &       & \multicolumn{1}{l}{PLW} & 18    &       & \multicolumn{1}{l}{THA} & 1,179 \\
    BIH   & 801   &       & \multicolumn{1}{l}{EST} & 1,095 &       & \multicolumn{1}{l}{KWT} & 814   &       & \multicolumn{1}{l}{PAK} & 1,014 &       & \multicolumn{1}{l}{TGO} & 478 \\
    BRA   & 1,078 &       & \multicolumn{1}{l}{FLK} & 34    &       & \multicolumn{1}{l}{KGZ} & 492   &       & \multicolumn{1}{l}{PAN} & 960   &       & \multicolumn{1}{l}{TKL} & 146 \\
    BLZ   & 241   &       & \multicolumn{1}{l}{FJI} & 804   &       & \multicolumn{1}{l}{LAO} & 317   &       & \multicolumn{1}{l}{PNG} & 148   &       & \multicolumn{1}{l}{TON} & 84 \\
    IOT   & 33    &       & \multicolumn{1}{l}{FIN} & 1,086 &       & \multicolumn{1}{l}{LBN} & 1,002 &       & \multicolumn{1}{l}{PRY} & 417   &       & \multicolumn{1}{l}{TTO} & 507 \\
    SLB   & 47    &       & \multicolumn{1}{l}{FRA} & 1,254 &       & \multicolumn{1}{l}{LVA} & 1,119 &       & \multicolumn{1}{l}{PER} & 968   &       & \multicolumn{1}{l}{ARE} & 1,173 \\
    VGB   & 198   &       & \multicolumn{1}{l}{PYF} & 297   &       & \multicolumn{1}{l}{LBR} & 71    &       & \multicolumn{1}{l}{PHL} & 1,076 &       & \multicolumn{1}{l}{TUN} & 903 \\
    BRN   & 427   &       & \multicolumn{1}{l}{ATF} & 16    &       & \multicolumn{1}{l}{LBY} & 142   &       & \multicolumn{1}{l}{PCN} & 26    &       & \multicolumn{1}{l}{TUR} & 1,152 \\
    BGR   & 1,109 &       & \multicolumn{1}{l}{DJI} & 72    &       & \multicolumn{1}{l}{LTU} & 1,138 &       & \multicolumn{1}{l}{POL} & 1,177 &       & \multicolumn{1}{l}{TKM} & 141 \\
    MMR   & 508   &       & \multicolumn{1}{l}{GAB} & 213   &       & \multicolumn{1}{l}{MAC} & 633   &       & \multicolumn{1}{l}{PRT} & 1,195 &       & \multicolumn{1}{l}{TCA} & 129 \\
    BDI   & 115   &       & \multicolumn{1}{l}{GEO} & 670   &       & \multicolumn{1}{l}{MDG} & 613   &       & \multicolumn{1}{l}{GNB} & 33    &       & \multicolumn{1}{l}{TUV} & 27 \\
    BLR   & 949   &       & \multicolumn{1}{l}{GMB} & 163   &       & \multicolumn{1}{l}{MWI} & 274   &       & \multicolumn{1}{l}{TMP} & 69    &       & \multicolumn{1}{l}{UGA} & 627 \\
    KHM   & 616   &       & \multicolumn{1}{l}{PAL} & 428   &       & \multicolumn{1}{l}{MYS} & 1,156 &       & \multicolumn{1}{l}{QAT} & 484   &       & \multicolumn{1}{l}{UKR} & 998 \\
    CMR   & 421   &       & \multicolumn{1}{l}{DEU} & 1,226 &       & \multicolumn{1}{l}{MDV} & 153   &       & \multicolumn{1}{l}{ROM} & 1,096 &       & \multicolumn{1}{l}{EGY} & 1,015 \\
    CAN   & 1,186 &       & \multicolumn{1}{l}{GHA} & 645   &       & \multicolumn{1}{l}{MLI} & 252   &       & \multicolumn{1}{l}{RUS} & 1,147 &       & \multicolumn{1}{l}{GBR} & 1,218 \\
    CPV   & 141   &       & \multicolumn{1}{l}{GIB} & 85    &       & \multicolumn{1}{l}{MLT} & 660   &       & \multicolumn{1}{l}{RWA} & 303   &       & \multicolumn{1}{l}{TZA} & 640 \\
    CYM   & 76    &       & \multicolumn{1}{l}{KIR} & 34    &       & \multicolumn{1}{l}{MRT} & 167   &       & \multicolumn{1}{l}{SHN} & 40    &       & \multicolumn{1}{l}{USA} & 1,265 \\
    CAF   & 46    &       & \multicolumn{1}{l}{GRC} & 1,153 &       & \multicolumn{1}{l}{MUS} & 801   &       & \multicolumn{1}{l}{KNA} & 46    &       & \multicolumn{1}{l}{BFA} & 306 \\
    LKA   & 971   &       & \multicolumn{1}{l}{GRL} & 95    &       & \multicolumn{1}{l}{MEX} & 1,131 &       & \multicolumn{1}{l}{AIA} & 36    &       & \multicolumn{1}{l}{URY} & 656 \\
    TCD   & 33    &       & \multicolumn{1}{l}{GRD} & 80    &       & \multicolumn{1}{l}{TWN} & 1,106 &       & \multicolumn{1}{l}{LCA} & 88    &       & \multicolumn{1}{l}{UZB} & 371 \\
    CHL   & 1,036 &       & \multicolumn{1}{l}{GUM} & 353   &       & \multicolumn{1}{l}{MNG} & 292   &       & \multicolumn{1}{l}{SPM} & 81    &       & \multicolumn{1}{l}{VEN} & 573 \\
    CHN   & 1,267 &       & \multicolumn{1}{l}{GTM} & 933   &       & \multicolumn{1}{l}{MDA} & 650   &       & \multicolumn{1}{l}{VCT} & 69    &       & \multicolumn{1}{l}{WLF} & 10 \\
    CXR   & 43    &       & \multicolumn{1}{l}{GIN} & 216   &       & \multicolumn{1}{l}{MSR} & 17    &       & \multicolumn{1}{l}{SMR} & 184   &       & \multicolumn{1}{l}{WSM} & 176 \\
    YEM   & 323   &       &  \multicolumn{1}{l}{ZMB}     &   487    &       &       &       &       &       &       &       &       &  \\

    \hline
    
     \multicolumn{14}{l}{\shortstack[l]{\parbox{17.5cm}{\vspace{0.1cm} \scriptsize
Note: This table summarizes the number of export transaction observations  for each country. Raw Data source is from CEPII
}}
}
    \end{tabular}%

%% file: trade_costs1.tex
    \begin{tabular}{llllllllll}
    \toprule
    HS    & \multicolumn{1}{l}{Contig} & \multicolumn{1}{l}{Comlang} & \multicolumn{1}{l}{Colony} & \multicolumn{1}{l}{Comcol} & \multicolumn{1}{l}{Curcol} & \multicolumn{1}{l}{Smctry} & \multicolumn{1}{l}{LnDist} & \multicolumn{1}{l}{RTA} & \multicolumn{1}{l}{Comcur} \\
    \midrule
    2     & 0.137*** & 0.050*** & 0.073*** & 0.113*** & 0.057 & 0.080** & -0.143*** & 0.121*** & 0.075*** \\
    3     & 0.065*** & 0.044*** & 0.119*** & 0.115*** & 0.003 & 0.009 & -0.173*** & 0.063*** & 0.008 \\
    4     & 0.125*** & 0.075*** & 0.122*** & 0.127*** & 0.226*** & 0.045 & -0.173*** & 0.142*** & 0.083*** \\
    6     & 0.287*** & 0.079** & 0.223*** & -0.038 & 0.340 & -0.145 & -0.205*** & 0.198*** & 0.002 \\
    7     & 0.068*** & 0.061*** & 0.179*** & 0.124*** & 0.090 & -0.011 & -0.207*** & 0.180*** & 0.028 \\
    8     & 0.071*** & 0.071*** & 0.162*** & 0.157*** & 0.008 & -0.022 & -0.190*** & 0.121*** & -0.006 \\
    9     & 0.091*** & 0.116*** & 0.189*** & 0.055** & 0.289* & 0.094** & -0.186*** & 0.114*** & 0.024 \\
    10    & 0.204*** & -0.006 & 0.111** & 0.133*** & 0.257 & 0.034 & -0.175*** & 0.187*** & 0.201*** \\
    11    & 0.202*** & 0.067* & 0.181*** & 0.304*** & 0.154 & -0.063 & -0.217*** & 0.131*** & 0.000 \\
    15    & 0.117*** & 0.093*** & 0.165*** & 0.098*** & 0.083 & 0.044 & -0.207*** & 0.139*** & 0.097*** \\
    16    & 0.105*** & 0.093*** & 0.185*** & 0.165*** & 0.368*** & 0.027 & -0.148*** & 0.119*** & 0.009 \\
    17    & -0.010 & 0.065*** & 0.186*** & 0.142*** & 0.070 & -0.047 & -0.247*** & 0.133*** & 0.131*** \\
    18    & 0.002 & 0.076*** & 0.247*** & 0.163*** & -0.126 & -0.015 & -0.264*** & 0.136*** & 0.049 \\
    19    & 0.027 & 0.145*** & 0.176*** & 0.162*** & 0.078 & 0.027 & -0.249*** & 0.109*** & -0.016 \\
    20    & 0.056*** & 0.120*** & 0.191*** & 0.194*** & -0.044 & 0.033 & -0.181*** & 0.125*** & -0.068*** \\
    21    & 0.019 & 0.123*** & 0.164*** & 0.118*** & 0.219 & 0.043 & -0.216*** & 0.118*** & -0.009 \\
    22    & 0.077*** & 0.114*** & 0.124*** & 0.228*** & 0.152* & 0.082*** & -0.177*** & 0.160*** & -0.067*** \\
    23    & 0.076** & 0.100*** & 0.078** & 0.056 & 0.288 & -0.070 & -0.206*** & 0.107*** & -0.066** \\
    24    & 0.001 & 0.036 & 0.073** & 0.093*** & 0.039 & -0.001 & -0.196*** & 0.190*** & 0.130*** \\
    30    & -0.037** & 0.160*** & 0.097*** & 0.175*** & -0.022 & -0.007 & -0.161*** & 0.095*** & -0.011 \\
    32    & 0.235*** & 0.058* & 0.197*** & 0.201*** & 0.269 & -0.043 & -0.199*** & 0.056** & -0.014 \\
    33    & -0.006 & 0.138*** & 0.129*** & 0.139*** & 0.035 & 0.027 & -0.230*** & 0.128*** & -0.006 \\
    34    & 0.007 & 0.141*** & 0.178*** & 0.119*** & 0.037 & 0.052 & -0.283*** & 0.143***& 0.050* \\
    35    & 0.044 & 0.084*** & 0.225*** & 0.274*** & 0.363** & -0.075 & -0.248*** & 0.112*** & 0.052 \\
    36    & 0.199*** & 0.031 & -0.040 & 0.210*** & 0.235 & 0.197*** & -0.051** & 0.085** & 0.156** \\
    37    & 0.051 & 0.035 & 0.144** & -0.091 & 0.513 & 0.213** & -0.054* & 0.195*** & 0.013 \\
    38    & 0.036 & 0.054*** & 0.093*** & 0.094*** & 0.317** & -0.019 & -0.193*** & 0.093*** & 0.061** \\
    39    & 0.045** & 0.136*** & 0.137*** & 0.142*** & 0.119 & 0.068* & -0.224*** & 0.131*** & -0.036 \\
    40    & 0.085*** & 0.093*** & 0.210*** & 0.067** & 0.386** & -0.025 & -0.197*** & 0.048 & -0.023 \\
    42    & 0.037 & 0.095*** & 0.178*** & 0.017 & 0.173 & 0.065 & -0.206*** & 0.077*** & -0.064*** \\
    43    & 0.129*** & 0.114*** & 0.099** & 0.073 & 0.671*** & -0.159* & -0.161*** & -0.004 & -0.031 \\
    44    & 0.110*** & 0.143*** & 0.216*** & 0.061 & 0.315* & -0.002 & -0.203*** & 0.060*** & -0.008 \\
    46    & 0.139*** & 0.136*** & 0.128*** & 0.025 & 0.538** & 0.007 & -0.230*** & 0.036 & 0.072* \\
    48    & 0.016 & 0.146*** & 0.106*** & 0.175*** & -0.029 & -0.021 & -0.270*** & 0.131*** & -0.014 \\
    49    & -0.010 & 0.317*** & 0.215*** & 0.151*** & 0.085 & 0.071* & -0.247*** & 0.115*** & -0.025 \\
    \bottomrule
    \multicolumn{10.5}{l}{\shortstack[l]{\parbox{16.5cm}{\vspace{0.1cm} \scriptsize
Note: This table summarizes the estimated coefficients on proxy variables of trade costs, for sectors HS 2 to HS 49. Contig = 1 of the both are contiguous; Comlang =1 if both share the same language; Colony =1 if ever had colonial relation; Comcol =1 if having common colonizer; Curcol =1 if currently in colonial relation; Smctry =1 if were/are the same country; RTA =1 if having regional trade agreement; Comcur =1 if using the same currency. ***, **, * denotes significance at 1\%, 5\% and 10\% respectively.
}}
}
    
    \end{tabular}%

%% file: trade_costs2.tex
    \begin{tabular}{llllllllll}
    \toprule
    HS    & \multicolumn{1}{l}{Contig} & \multicolumn{1}{l}{Comlang} & \multicolumn{1}{l}{Colony} & \multicolumn{1}{l}{Comcol} & \multicolumn{1}{l}{Curcol} & \multicolumn{1}{l}{Smctry} & \multicolumn{1}{l}{LnDist} & \multicolumn{1}{l}{RTA} & \multicolumn{1}{l}{Comcur} \\
    \midrule
    50    & 0.020 & 0.120 & 0.074 & -0.026 & -0.053 & 0.188 & -0.163*** & -0.032 & 0.171 \\
    51    & 0.134* & 0.082 & 0.128** & 0.392** & -0.298 & -0.047 & -0.155*** & 0.125*** & 0.080 \\
    52    & 0.133*** & 0.001 & 0.082 & -0.048 & -0.184 & -0.037 & -0.214*** & 0.075** & 0.112* \\
    54    & 0.160** & 0.056 & 0.114 & 0.188 &       & -0.202 & -0.173 & -0.052*** & -0.007 \\
    55    & 0.132*** & 0.154*** & -0.032 & 0.026 & -0.498* & 0.107 & -0.181*** & 0.113** & 0.009 \\
    57    & 0.050* & 0.097*** & 0.155*** & 0.123*** & -0.053 & -0.061 & -0.216*** & 0.108*** & -0.013 \\
    58    & 0.104 & 0.157 & 0.164 & -0.248 & -0.227 & 0.089 & -0.131*** & -0.028 & -0.002 \\
    59    & 0.059 & 0.146** & 0.050 & 0.004 & -0.622* & -0.017 & -0.143*** & 0.071 & 0.087 \\
    61    & 0.046** & 0.119*** & 0.144*** & 0.059*** & 0.120 & 0.016 & -0.197*** & 0.098*** & -0.003 \\
    62    & 0.043* & 0.136*** & 0.160*** & 0.041** & 0.135 & 0.010 & -0.200*** & 0.081*** & -0.029 \\
    63    & 0.063*** & 0.135*** & 0.177*** & 0.084*** & 0.056 & 0.022 & -0.219*** & 0.128*** & -0.002 \\
    64    & 0.088*** & 0.110*** & 0.150*** & 0.102*** & 0.125 & -0.014 & -0.217*** & 0.087*** & 0.022 \\
    65    & 0.077*** & 0.111*** & 0.232*** & 0.110*** & 0.405** & -0.048 & -0.205*** & 0.096*** & 0.027 \\
    66    & 0.129*** & 0.097*** & 0.185*** & 0.078 & 0.451** & 0.001 & -0.228*** & 0.109*** & 0.063* \\
    68    & 0.132** & 0.077 & -0.052 & -0.100 & 1.140** & -0.076 & -0.178*** & 0.017 & 0.129* \\
    69    & 0.110*** & 0.122*** & 0.186*** & 0.124*** & 0.531** & -0.006 & -0.184*** & 0.107*** & -0.079*** \\
    70    & 0.090*** & 0.119*** & 0.110*** & 0.153*** & 0.346 & -0.049 & -0.193*** & 0.098*** & 0.012 \\
    71    & 0.011 & 0.140*** & 0.242*** & 0.108*** & 0.070 & 0.084 & -0.176*** & 0.109*** & 0.001 \\
    74    & 0.066** & 0.101*** & 0.208*** & 0.273*** & -0.168 & 0.047 & -0.223*** & 0.130*** & -0.084** \\
    76    & 0.149*** & 0.074*** & 0.232*** & 0.182*** & 0.242 & -0.016 & -0.235*** & 0.144*** & 0.047 \\
    82    & 0.051* & 0.122*** & 0.166*** & 0.060* & 0.104 & 0.034 & -0.196*** & 0.094*** & 0.011 \\
    83    & 0.139*** & 0.120*** & 0.269*** & -0.038 & 0.155 & -0.081 & -0.187*** & 0.079*** & -0.044 \\
    84    & 0.044** & 0.094*** & 0.111*** & 0.122*** & 0.359* & 0.055 & -0.193*** & 0.104*** & -0.017 \\
    85    & 0.019 & 0.135*** & 0.133*** & 0.135*** & 0.094 & 0.013 & -0.185*** & 0.109*** & 0.016 \\
    87    & 0.161*** & 0.070*** & 0.143*** & 0.159*** & 0.457*** & 0.005 & -0.142*** & 0.126*** & 0.026 \\
    88    & 0.051 & 0.079* & -0.147*** & -0.106 & -1.882*** & 0.124* & -0.046** & 0.057 & -0.049 \\
    89    & 0.091*** & 0.025 & 0.116*** & 0.203*** & -0.330** & 0.029 & -0.133*** & 0.067*** & -0.052* \\
    90    & 0.058** & 0.065*** & 0.137*** & 0.141*** & 0.223 & -0.009 & -0.135*** & 0.075*** & -0.044* \\
    91    & 0.036 & 0.086*** & 0.180*** & 0.177*** & 0.330*** & 0.038 & -0.170*** & 0.029 & 0.089 ***\\
    92    & 0.089*** & 0.068*** & 0.091*** & 0.206*** & 0.721*** & 0.064 & -0.137*** & 0.049** & 0.016 \\
    93    & 0.060* & 0.099*** & 0.063* & 0.342*** & -0.180 & 0.127* & -0.077*** & 0.045 & 0.031 \\
    94    & 0.034 & 0.167*** & 0.127*** & 0.078*** & 0.139 & 0.062 & -0.231*** & 0.098*** & -0.074*** \\
    95    & 0.074*** & 0.110*** & 0.193*** & 0.059** & 0.480*** & -0.025 & -0.176*** & 0.065*** & -0.013 \\
    96    & 0.064** & 0.101*** & 0.186*** & 0.092*** & 0.660*** & 0.032 & -0.206*** & 0.104*** & -0.027 \\
    97    & 0.040 & 0.110*** & 0.099*** & 0.026 & 0.358** & 0.127** & -0.102*** & 0.056*** & -0.102*** \\
    \bottomrule
    
    \multicolumn{10.5}{l}{\shortstack[l]{\parbox{16.5cm}{\vspace{0.1cm} \scriptsize
Note: This table summarizes the estimated coefficients on proxy variables of trade costs, for sectors HS 50 to HS 97. Contig = 1 of the both are contiguous; Comlang =1 if both share the same language; Colony =1 if ever had colonial relation; Comcol =1 if having common colonizer; Curcol =1 if currently in colonial relation; Smctry =1 if were/are the same country; RTA =1 if having regional trade agreement; Comcur =1 if using the same currency. ***, **, * denotes significance at 1\%, 5\% and 10\% respectively.
}}
}
    \end{tabular}%

%% file: parameters.tex
{\tabcolsep=1.3mm
    \begin{tabular}{llllll|rrrrrr}
    \toprule
    HS    & Obs   & Mean  & Std.Dev & Min   & Max   & \multicolumn{1}{l}{HS} & \multicolumn{1}{l}{Obs} & \multicolumn{1}{l}{Mean} & \multicolumn{1}{l}{Std.Dev} & \multicolumn{1}{l}{Min} & \multicolumn{1}{l}{Max} \\
    \midrule
    2     & 165   & 22.246 & 19.613 & 0.467 & 120.217 & \multicolumn{1}{l}{54} & \multicolumn{1}{l}{165} & \multicolumn{1}{l}{11.385} & \multicolumn{1}{l}{13.524} & \multicolumn{1}{l}{1.000} & \multicolumn{1}{l}{89.390} \\
    3     & 165   & 16.230 & 12.954 & 0.642 & 99.572 & \multicolumn{1}{l}{55} & \multicolumn{1}{l}{165} & \multicolumn{1}{l}{11.407} & \multicolumn{1}{l}{12.325} & \multicolumn{1}{l}{1.000} & \multicolumn{1}{l}{70.688} \\
    4     & 165   & 21.944 & 19.275 & 0.023 & 95.676 & \multicolumn{1}{l}{58} & \multicolumn{1}{l}{165} & \multicolumn{1}{l}{5.814} & \multicolumn{1}{l}{7.687} & \multicolumn{1}{l}{1.000} & \multicolumn{1}{l}{39.734} \\
    6     & 165   & 15.513 & 16.725 & 1.000 & 102.454 & \multicolumn{1}{l}{59} & \multicolumn{1}{l}{165} & \multicolumn{1}{l}{13.130} & \multicolumn{1}{l}{14.915} & \multicolumn{1}{l}{1.000} & \multicolumn{1}{l}{87.704} \\
    7     & 165   & 16.686 & 13.018 & 1.000 & 91.702 & \multicolumn{1}{l}{61} & \multicolumn{1}{l}{165} & \multicolumn{1}{l}{19.649} & \multicolumn{1}{l}{18.534} & \multicolumn{1}{l}{1.000} & \multicolumn{1}{l}{98.075} \\
    8     & 165   & 19.338 & 17.456 & 1.000 & 93.673 & \multicolumn{1}{l}{62} & \multicolumn{1}{l}{165} & \multicolumn{1}{l}{20.201} & \multicolumn{1}{l}{19.074} & \multicolumn{1}{l}{1.000} & \multicolumn{1}{l}{100.413} \\
    9     & 165   & 21.288 & 19.457 & 1.000 & 96.457 & \multicolumn{1}{l}{63} & \multicolumn{1}{l}{165} & \multicolumn{1}{l}{21.464} & \multicolumn{1}{l}{18.519} & \multicolumn{1}{l}{1.000} & \multicolumn{1}{l}{90.493} \\
    10    & 165   & 21.508 & 20.186 & 1.000 & 132.176 & \multicolumn{1}{l}{64} & \multicolumn{1}{l}{165} & \multicolumn{1}{l}{17.749} & \multicolumn{1}{l}{13.834} & \multicolumn{1}{l}{1.000} & \multicolumn{1}{l}{77.016} \\
    11    & 165   & 20.349 & 20.691 & 1.000 & 148.433 & \multicolumn{1}{l}{65} & \multicolumn{1}{l}{165} & \multicolumn{1}{l}{17.903} & \multicolumn{1}{l}{15.941} & \multicolumn{1}{l}{1.000} & \multicolumn{1}{l}{79.300} \\
    15    & 165   & 21.821 & 21.960 & 1.000 & 170.171 & \multicolumn{1}{l}{66} & \multicolumn{1}{l}{165} & \multicolumn{1}{l}{19.995} & \multicolumn{1}{l}{20.808} & \multicolumn{1}{l}{1.000} & \multicolumn{1}{l}{136.851} \\
    16    & 165   & 20.542 & 18.605 & 1.000 & 91.212 & \multicolumn{1}{l}{67} & \multicolumn{1}{l}{165} & \multicolumn{1}{l}{15.871} & \multicolumn{1}{l}{14.654} & \multicolumn{1}{l}{1.000} & \multicolumn{1}{l}{79.738} \\
    17    & 165   & 21.330 & 20.083 & 1.000 & 120.062 & \multicolumn{1}{l}{68} & \multicolumn{1}{l}{165} & \multicolumn{1}{l}{13.762} & \multicolumn{1}{l}{13.674} & \multicolumn{1}{l}{1.000} & \multicolumn{1}{l}{69.466} \\
    18    & 165   & 23.452 & 24.309 & 1.000 & 154.248 & \multicolumn{1}{l}{69} & \multicolumn{1}{l}{165} & \multicolumn{1}{l}{19.893} & \multicolumn{1}{l}{21.115} & \multicolumn{1}{l}{1.000} & \multicolumn{1}{l}{174.533} \\
    19    & 165   & 19.730 & 17.928 & 1.000 & 87.576 & \multicolumn{1}{l}{70} & \multicolumn{1}{l}{165} & \multicolumn{1}{l}{26.733} & \multicolumn{1}{l}{31.433} & \multicolumn{1}{l}{1.000} & \multicolumn{1}{l}{166.876} \\
    20    & 165   & 13.773 & 9.373 & 1.000 & 42.538 & \multicolumn{1}{l}{71} & \multicolumn{1}{l}{165} & \multicolumn{1}{l}{17.061} & \multicolumn{1}{l}{16.126} & \multicolumn{1}{l}{0.876} & \multicolumn{1}{l}{97.138} \\
    21    & 165   & 21.046 & 18.647 & 1.000 & 89.231 & \multicolumn{1}{l}{73} & \multicolumn{1}{l}{165} & \multicolumn{1}{l}{20.877} & \multicolumn{1}{l}{18.926} & \multicolumn{1}{l}{1.000} & \multicolumn{1}{l}{115.750} \\
    23    & 165   & 22.211 & 39.912 & 1.000 & 417.558 & \multicolumn{1}{l}{74} & \multicolumn{1}{l}{165} & \multicolumn{1}{l}{17.933} & \multicolumn{1}{l}{16.758} & \multicolumn{1}{l}{1.000} & \multicolumn{1}{l}{87.612} \\
    24    & 165   & 19.376 & 17.441 & 1.000 & 93.222 & \multicolumn{1}{l}{77} & \multicolumn{1}{l}{165} & \multicolumn{1}{l}{19.266} & \multicolumn{1}{l}{16.757} & \multicolumn{1}{l}{1.000} & \multicolumn{1}{l}{81.233} \\
    30    & 165   & 17.782 & 15.580 & 1.000 & 88.500 & \multicolumn{1}{l}{82} & \multicolumn{1}{l}{165} & \multicolumn{1}{l}{21.521} & \multicolumn{1}{l}{21.301} & \multicolumn{1}{l}{1.000} & \multicolumn{1}{l}{153.153} \\
    32    & 165   & 16.370 & 15.530 & 1.000 & 78.624 & \multicolumn{1}{l}{85} & \multicolumn{1}{l}{165} & \multicolumn{1}{l}{9.549} & \multicolumn{1}{l}{56.190} & \multicolumn{1}{l}{0.978} & \multicolumn{1}{l}{590.516} \\
    34    & 165   & 15.013 & 11.013 & 0.000 & 54.480 & \multicolumn{1}{l}{87} & \multicolumn{1}{l}{165} & \multicolumn{1}{l}{21.242} & \multicolumn{1}{l}{19.659} & \multicolumn{1}{l}{1.000} & \multicolumn{1}{l}{121.543} \\
    35    & 165   & 16.151 & 23.239 & 1.000 & 252.413 & \multicolumn{1}{l}{88} & \multicolumn{1}{l}{165} & \multicolumn{1}{l}{8.442} & \multicolumn{1}{l}{8.623} & \multicolumn{1}{l}{0.000} & \multicolumn{1}{l}{47.360} \\
    36    & 165   & 18.633 & 21.963 & 1.000 & 143.656 & \multicolumn{1}{l}{89} & \multicolumn{1}{l}{165} & \multicolumn{1}{l}{21.767} & \multicolumn{1}{l}{25.082} & \multicolumn{1}{l}{1.000} & \multicolumn{1}{l}{164.926} \\
    37    & 165   & 7.529 & 9.940 & 1.000 & 73.404 & \multicolumn{1}{l}{90} & \multicolumn{1}{l}{165} & \multicolumn{1}{l}{19.339} & \multicolumn{1}{l}{17.718} & \multicolumn{1}{l}{1.000} & \multicolumn{1}{l}{87.486} \\
    38    & 165   & 25.404 & 29.797 & 1.000 & 178.241 & \multicolumn{1}{l}{91} & \multicolumn{1}{l}{165} & \multicolumn{1}{l}{19.110} & \multicolumn{1}{l}{18.261} & \multicolumn{1}{l}{1.000} & \multicolumn{1}{l}{98.730} \\
    39    & 165   & 21.551 & 19.905 & 1.000 & 149.762 & \multicolumn{1}{l}{92} & \multicolumn{1}{l}{165} & \multicolumn{1}{l}{20.431} & \multicolumn{1}{l}{22.115} & \multicolumn{1}{l}{1.000} & \multicolumn{1}{l}{154.758} \\
    40    & 165   & 19.242 & 17.163 & 0.609 & 79.498 & \multicolumn{1}{l}{93} & \multicolumn{1}{l}{165} & \multicolumn{1}{l}{14.949} & \multicolumn{1}{l}{13.256} & \multicolumn{1}{l}{0.000} & \multicolumn{1}{l}{71.716} \\
    42    & 165   & 19.754 & 17.691 & 1.000 & 87.856 & \multicolumn{1}{l}{94} & \multicolumn{1}{l}{165} & \multicolumn{1}{l}{20.294} & \multicolumn{1}{l}{18.134} & \multicolumn{1}{l}{1.000} & \multicolumn{1}{l}{87.038} \\
    49    & 165   & 20.422 & 18.574 & 1.000 & 90.914 & \multicolumn{1}{l}{95} & \multicolumn{1}{l}{165} & \multicolumn{1}{l}{19.526} & \multicolumn{1}{l}{17.907} & \multicolumn{1}{l}{1.000} & \multicolumn{1}{l}{84.246} \\
    50    & 165   & 4.876 & 6.636 & 1.000 & 38.863 & \multicolumn{1}{l}{96} & \multicolumn{1}{l}{165} & \multicolumn{1}{l}{23.772} & \multicolumn{1}{l}{26.979} & \multicolumn{1}{l}{1.000} & \multicolumn{1}{l}{191.981} \\
    51    & 165   & 8.913 & 9.538 & 0.000 & 44.815 & \multicolumn{1}{l}{97} & \multicolumn{1}{l}{165} & \multicolumn{1}{l}{18.996} & \multicolumn{1}{l}{20.996} & \multicolumn{1}{l}{1.000} & \multicolumn{1}{l}{174.981} \\
    52    & 165   & 15.120 & 15.745 & 1.000 & 89.140 &       &       &       &       &       &  \\
    \bottomrule
     \multicolumn{12}{l}{\shortstack[l]{\parbox{14cm}{\vspace{0.1cm} \scriptsize
Note: This table summarizes estimated preference for quality ($\kappa_{i,s}$) across all countries $i$ within each sector $s$. 
}}
}
    \end{tabular}}%

%% file: parameters1.tex
{\tabcolsep=1.3mm

    \begin{tabular}{llllll|rrrrrl}
    \hline
    HS    & Obs   & Mean  & Std.Dev & Min   & Max   & \multicolumn{1}{l}{HS} & \multicolumn{1}{l}{Obs} & \multicolumn{1}{l}{Mean} & \multicolumn{1}{l}{Std.Dev} & \multicolumn{1}{l}{Min} & \multicolumn{1}{l}{Max} \\
    \hline
    2     & 165   & 5.098 & 3.600 & 0.000 & 18.030 & \multicolumn{1}{l}{54} & 165   & 1.948 & 1.847 & 0.000 & 13.054 \\
    3     & 165   & 3.650 & 2.453 & 0.000 & 24.217 & \multicolumn{1}{l}{55} & 165   & 1.637 & 2.557 & 0.000 & 31.159 \\
    4     & 165   & 5.912 & 6.072 & 0.000 & 67.456 & \multicolumn{1}{l}{58} & 165   & 1.385 & 1.097 & 0.000 & 9.063 \\
    6     & 165   & 3.938 & 8.218 & 0.000 & 102.771 & \multicolumn{1}{l}{59} & 165   & 2.252 & 2.422 & 0.000 & 12.221 \\
    7     & 165   & 3.871 & 1.716 & 0.000 & 8.806 & \multicolumn{1}{l}{61} & 165   & 2.742 & 1.486 & 0.000 & 10.675 \\
    8     & 165   & 4.242 & 2.156 & 0.000 & 16.841 & \multicolumn{1}{l}{62} & 165   & 3.051 & 1.714 & 0.000 & 14.742 \\
    9     & 165   & 4.271 & 3.949 & 0.000 & 47.431 & \multicolumn{1}{l}{63} & 165   & 5.839 & 3.079 & 0.000 & 21.517 \\
    10    & 165   & 4.413 & 3.849 & 0.000 & 31.479 & \multicolumn{1}{l}{64} & 165   & 4.525 & 3.065 & 0.000 & 30.028 \\
    11    & 165   & 3.126 & 3.263 & 0.000 & 27.200 & \multicolumn{1}{l}{65} & 165   & 2.413 & 1.688 & 0.000 & 9.500 \\
    15    & 165   & 3.524 & 2.351 & 0.000 & 13.085 & \multicolumn{1}{l}{66} & 165   & 2.196 & 2.542 & 0.000 & 24.453 \\
    16    & 165   & 3.553 & 2.603 & 0.000 & 21.460 & \multicolumn{1}{l}{67} & 165   & 1.988 & 1.576 & 0.000 & 10.410 \\
    17    & 165   & 3.284 & 2.201 & 0.000 & 16.038 & \multicolumn{1}{l}{68} & 165   & 1.816 & 1.683 & 0.000 & 9.698 \\
    18    & 165   & 3.512 & 2.380 & 0.000 & 12.900 & \multicolumn{1}{l}{69} & 165   & 2.756 & 1.681 & 0.000 & 10.819 \\
    19    & 165   & 2.469 & 1.931 & 0.000 & 15.552 & \multicolumn{1}{l}{70} & 165   & 3.908 & 2.803 & 0.000 & 14.830 \\
    20    & 165   & 3.956 & 1.884 & 0.000 & 12.288 & \multicolumn{1}{l}{71} & 165   & 3.625 & 2.377 & 0.000 & 17.847 \\
    21    & 165   & 4.053 & 2.107 & 0.000 & 17.662 & \multicolumn{1}{l}{73} & 165   & 3.714 & 2.075 & 0.000 & 11.556 \\
    23    & 165   & 1.843 & 3.430 & 0.000 & 38.749 & \multicolumn{1}{l}{74} & 165   & 2.301 & 1.510 & 0.000 & 7.564 \\
    24    & 165   & 4.121 & 2.994 & 0.000 & 18.264 & \multicolumn{1}{l}{77} & 165   & 2.951 & 2.109 & 0.000 & 16.693 \\
    30    & 165   & 3.648 & 1.906 & 0.000 & 9.800 & \multicolumn{1}{l}{82} & 165   & 3.261 & 1.908 & 0.000 & 9.653 \\
    32    & 165   & 1.966 & 1.857 & 0.000 & 11.108 & \multicolumn{1}{l}{85} & 165   & 5.292 & 45.482 & 0.247 & 585.700 \\
    34    & 165   & 4.615 & 2.607 & 0.000 & 16.672 & \multicolumn{1}{l}{87} & 165   & 3.965 & 2.265 & 0.000 & 10.860 \\
    35    & 165   & 3.204 & 3.066 & 0.000 & 21.819 & \multicolumn{1}{l}{88} & 165   & 1.664 & 1.817 & 0.000 & 20.027 \\
    36    & 165   & 1.765 & 1.771 & 0.000 & 10.764 & \multicolumn{1}{l}{89} & 165   & 3.404 & 2.598 & 0.000 & 18.004 \\
    37    & 165   & 1.275 & 0.879 & 0.000 & 6.082 & \multicolumn{1}{l}{90} & 165   & 3.089 & 2.180 & 0.000 & 16.907 \\
    38    & 165   & 3.278 & 4.996 & 0.000 & 52.698 & \multicolumn{1}{l}{91} & 165   & 2.967 & 2.330 & 0.000 & 14.631 \\
    39    & 165   & 5.994 & 2.513 & 0.000 & 12.972 & \multicolumn{1}{l}{92} & 165   & 2.533 & 1.790 & 0.000 & 10.225 \\
    40    & 165   & 2.759 & 2.512 & 0.000 & 24.370 & \multicolumn{1}{l}{93} & 165   & 2.431 & 2.326 & 0.000 & 12.901 \\
    42    & 165   & 3.930 & 1.835 & 0.000 & 13.093 & \multicolumn{1}{l}{94} & 165   & 3.932 & 1.809 & 0.000 & 11.029 \\
    49    & 165   & 3.972 & 2.692 & 0.000 & 24.601 & \multicolumn{1}{l}{95} & 165   & 3.248 & 2.012 & 0.000 & 20.826 \\
    50    & 165   & 1.134 & 0.630 & 0.000 & 4.231 & \multicolumn{1}{l}{96} & 165   & 3.588 & 2.238 & 0.000 & 17.202 \\
    51    & 165   & 2.859 & 2.968 & 0.000 & 18.538 & \multicolumn{1}{l}{97} & 165   & 3.853 & 1.848 & 0.000 & 11.709 \\
    52    & 165   & 1.764 & 1.545 & 0.000 & 9.746 &       &       &       &       &       &  \\
    \hline
    \multicolumn{12}{l}{\shortstack[l]{\parbox{14cm}{\vspace{0.1cm} \scriptsize
Note: This table summarizes estimated cost for quality ($\mu_{i,s}$) across all countries $i$ within each sector $s$. 
}}
}
    \end{tabular}}%

%% file: kappa_gdp.tex

    \begin{tabular}{ll|ll|llrr}
    \toprule
    HS    & Coefficients & HS    & \multicolumn{1}{l}{Coefficients} & HS    & Coefficients & \multicolumn{1}{l}{HS} & \multicolumn{1}{l}{Coefficients} \\
    \midrule
    2     & 1.879$^{*}$ & 23    & 1.438 & 54    & \multicolumn{1}{l|}{3.506$^{***}$} & \multicolumn{1}{l}{74} & \multicolumn{1}{l}{2.561$^{***}$} \\
    3     & 1.238$^{*}$ & 24    & 1.613$^{*}$ & 55    & \multicolumn{1}{l|}{2.989$^{***}$} & \multicolumn{1}{l}{76} & \multicolumn{1}{l}{2.044$^{**}$} \\
    4     & 2.067$^{**}$ & 30    & 0.867 & 58    & \multicolumn{1}{l|}{2.163$^{***}$} & \multicolumn{1}{l}{82} & \multicolumn{1}{l}{1.692} \\
    6     & 4.128$^{***}$ & 32    & 2.431$^{***}$ & 59    & \multicolumn{1}{l|}{3.903$^{***}$} & \multicolumn{1}{l}{85} & \multicolumn{1}{l}{1.231} \\
    7     & 0.062 & 34    & 0.040 & 61    & \multicolumn{1}{l|}{2.009$^{**}$} & \multicolumn{1}{l}{87} & \multicolumn{1}{l}{1.319} \\
    8     & 1.518$^{*}$ & 35    & -0.207 & 62    & \multicolumn{1}{l|}{1.960$^{**}$} & \multicolumn{1}{l}{88} & \multicolumn{1}{l}{1.852$^{***}$} \\
    9     & 0.985 & 36    & 2.228$^{*}$ & 63    & \multicolumn{1}{l|}{1.314} & \multicolumn{1}{l}{89} & \multicolumn{1}{l}{3.152$^{**}$} \\
    10    & 1.451 & 37    & 3.016$^{***}$ & 64    & \multicolumn{1}{l|}{0.240} & \multicolumn{1}{l}{90} & \multicolumn{1}{l}{2.128$^{**}$} \\
    11    & 3.033$^{***}$ & 38    & 2.144 & 65    & \multicolumn{1}{l|}{1.917$^{**}$} & \multicolumn{1}{l}{91} & \multicolumn{1}{l}{2.698$^{***}$} \\
    15    & 2.623$^{**}$ & 39    & 0.328 & 66    & \multicolumn{1}{l|}{2.082$^{*}$} & \multicolumn{1}{l}{92} & \multicolumn{1}{l}{2.889$^{**}$} \\
    16    & 2.029$^{**}$ & 40    & 1.568$^{*}$ & 67    & \multicolumn{1}{l|}{1.955$^{**}$} & \multicolumn{1}{l}{93} & \multicolumn{1}{l}{2.025$^{***}$} \\
    17    & 1.656 & 42    & 2.062$^{**}$ & 68    & \multicolumn{1}{l|}{2.993$^{***}$} & \multicolumn{1}{l}{94} & \multicolumn{1}{l}{1.649$^{*}$} \\
    18    & 2.106 & 49    & 1.891$^{*}$ & 69    & \multicolumn{1}{l|}{2.422$^{**}$} & \multicolumn{1}{l}{95} & \multicolumn{1}{l}{1.800$^{*}$} \\
    19    & 1.733$^{*}$ & 50    & 2.032$^{***}$ & 70    & \multicolumn{1}{l|}{2.857$^{*}$} & \multicolumn{1}{l}{96} & \multicolumn{1}{l}{2.146} \\
    20    & 0.186 & 51    & 2.090$^{***}$ & 71    & 2.679$^{***}$ & \multicolumn{1}{l}{97} & \multicolumn{1}{l}{3.422$^{***}$} \\
    21    & 1.692$^{*}$ & 52    & 2.829$^{***}$ & 73    & 1.580 &       &  \\
    \bottomrule
    
    \multicolumn{8}{l}{\shortstack[l]{\parbox{12cm}{\vspace{0.1cm} \scriptsize
Note: This table the estimated coefficients on (log) GDP per capita with dependent variable being the $\kappa_{i}$ for each sector $s$. ${}^{*}$, ${}^{**}$, and ${}^{***}$ indicate statistical significance at the 10\%, 5\%, and 1\% level, respectively
}}
}
    \end{tabular}%

%% file: mu_gdp.tex

    \begin{tabular}{ll|ll|llrr}
    \toprule
    HS    & \multicolumn{1}{l}{Coefficients} & HS    & Coefficients & HS    & Coefficients & \multicolumn{1}{l}{HS} & \multicolumn{1}{l}{Coefficients} \\
    \midrule
    2     & 0.494$^{***}$ & 23    & 0.264 & 54    & \multicolumn{1}{l|}{0.395$^{***}$} & \multicolumn{1}{l}{74} & \multicolumn{1}{l}{0.246$^{***}$} \\
    3     & 0.219$^{*}$ & 24    & 0.210 & 55    & \multicolumn{1}{l|}{0.278$^{**}$} & \multicolumn{1}{l}{77} & \multicolumn{1}{l}{0.115} \\
    4     & 0.521 & 30    & -0.028 & 58    & \multicolumn{1}{l|}{0.186$^{***}$} & \multicolumn{1}{l}{82} & \multicolumn{1}{l}{0.209$^{**}$} \\
    6     & 0.338 & 32    & 0.169$^{*}$ & 59    & \multicolumn{1}{l|}{0.715$^{***}$} & \multicolumn{1}{l}{85} & \multicolumn{1}{l}{1.208} \\
    7     & 0.044 & 34    & 0.151 & 61    & \multicolumn{1}{l|}{-0.205$^{***}$} & \multicolumn{1}{l}{87} & \multicolumn{1}{l}{-0.247$^{**}$} \\
    8     & -0.146 & 35    & 0.404$^{***}$ & 62    & \multicolumn{1}{l|}{-0.076} & \multicolumn{1}{l}{88} & \multicolumn{1}{l}{0.425$^{***}$} \\
    9     & -0.055 & 36    & -0.076 & 63    & \multicolumn{1}{l|}{0.279$^{*}$} & \multicolumn{1}{l}{89} & \multicolumn{1}{l}{0.166} \\
    10    & 0.192 & 37    & 0.163$^{***}$ & 64    & \multicolumn{1}{l|}{-0.199} & \multicolumn{1}{l}{90} & \multicolumn{1}{l}{0.060} \\
    11    & 0.337$^{*}$ & 38    & -0.184 & 65    & \multicolumn{1}{l|}{-0.063} & \multicolumn{1}{l}{91} & \multicolumn{1}{l}{-0.139} \\
    15    & 0.312$^{**}$ & 39    & 0.453$^{***}$ & 66    & \multicolumn{1}{l|}{0.157} & \multicolumn{1}{l}{92} & \multicolumn{1}{l}{0.120} \\
    16    & 0.185 & 40    & 0.187 & 67    & \multicolumn{1}{l|}{0.070} & \multicolumn{1}{l}{93} & \multicolumn{1}{l}{0.584$^{***}$} \\
    17    & 0.053 & 42    & 0.121 & 68    & \multicolumn{1}{l|}{0.427$^{***}$} & \multicolumn{1}{l}{94} & \multicolumn{1}{l}{0.046} \\
    18    & 0.328$^{***}$ & 49    & 0.147 & 69    & \multicolumn{1}{l|}{-0.150} & \multicolumn{1}{l}{95} & \multicolumn{1}{l}{0.077} \\
    19    & -0.176$^{*}$ & 50    & 0.053 & 70    & \multicolumn{1}{l|}{0.156} & \multicolumn{1}{l}{96} & \multicolumn{1}{l}{0.208$^{*}$} \\
    20    & 0.110 & 51    & 0.767$^{***}$ & 71    & \multicolumn{1}{l|}{-0.105} & \multicolumn{1}{l}{97} & \multicolumn{1}{l}{-0.128} \\
    21    & -0.113 & 52    & 0.191$^{**}$ & 73    & 0.155 &       &  \\
    \bottomrule
    \multicolumn{8}{l}{\shortstack[l]{\parbox{12cm}{\vspace{0.1cm} \scriptsize
Note: This table the estimated coefficients on (log) GDP per capita with dependent variable being the $\mu_{i}$ for each sector $s$. ${}^{*}$, ${}^{**}$, and ${}^{***}$ indicate statistical significance at the 10\%, 5\%, and 1\% level, respectively
}}
}
    \end{tabular}%

%% file: summary_cm1.tex
    \begin{tabular}{llllll}
    \toprule
    \multicolumn{1}{l}{HS} & \multicolumn{1}{l}{Obs} & \multicolumn{1}{l}{Mean} & \multicolumn{1}{l}{Std.Dev} & \multicolumn{1}{l}{Min} & \multicolumn{1}{l}{Max} \\
    \midrule
    2     & 150   & 24.900 & 30.751 & 10.451 & 308.543 \\
    3     & 138   & 1.609 & 4.517 & 0.674 & 49.741 \\
    4     & 148   & 0.577 & 1.377 & 0.221 & 16.706 \\
    6     & 129   & 5.67E+03 & 5.83E+03 & 1.55E+03 & 3.96E+04 \\
    7     & 155   & 1.989 & 1.036 & 1.290 & 8.073 \\
    8     & 148   & 136.358 & 64.449 & 71.293 & 398.349 \\
    9     & 150   & 0.529 & 0.369 & 0.336 & 3.782 \\
    10    & 152   & 1.442 & 1.383 & 0.395 & 4.362 \\
    11    & 157   & 0.197 & 0.057 & 0.121 & 0.581 \\
    15    & 154   & 0.824 & 0.822 & 0.251 & 3.487 \\
    16    & 151   & 0.685 & 0.463 & 0.367 & 3.840 \\
    17    & 156   & 0.361 & 0.272 & 0.197 & 2.481 \\
    18    & 158   & 0.050 & 0.021 & 0.028 & 0.117 \\
    19    & 163   & 0.060 & 0.025 & 0.037 & 0.226 \\
    20    & 142   & 0.429 & 0.242 & 0.297 & 2.184 \\
    21    & 155   & 0.057 & 0.034 & 0.038 & 0.337 \\
    23    & 160   & 0.011 & 0.002 & 0.007 & 0.023 \\
    24    & 147   & 0.004 & 0.001 & 0.002 & 0.008 \\
    30    & 158   & 0.325 & 0.625 & 0.168 & 7.216 \\
    32    & 159   & 9.104 & 2.486 & 7.390 & 33.174 \\
    34    & 145   & 6251.746 & 3636.260 & 4089.427 & 40193.020 \\
    35    & 152   & 0.060 & 0.024 & 0.029 & 0.183 \\
    36    & 153   & 436.416 & 30.690 & 404.460 & 637.010 \\
    37    & 143   & 3.28E+07 & 2.19E+06 & 3.10E+07 & 4.92E+07 \\
    38    & 159   & 0.060 & 0.063 & 0.029 & 0.437 \\
    39    & 147   & 0.015 & 0.061 & 0.004 & 0.662 \\
    40    & 158   & 0.361 & 0.475 & 0.161 & 5.763 \\
    42    & 158   & 12.000 & 64.131 & 4.345 & 792.150 \\
    49    & 164   & 0.117 & 0.159 & 0.048 & 0.682 \\
    50    & 107   & 4.01E+12 & 7.98E+11 & 3.32E+12 & 1.13E+13 \\
    51    & 95    & 0.442 & 0.305 & 0.151 & 1.777 \\
    \bottomrule
    
\multicolumn{6}{l}{\shortstack[l]{\parbox{10.5cm}{\vspace{0.1cm} \scriptsize
Note: This table summarizes the estimated exogenous competitiveness from quality model for industries from HS 2 to HS 51
}}
}
    \end{tabular}%

%% file: summary_cm2.tex

    \begin{tabular}{llllll}
    \toprule
    \multicolumn{1}{l}{HS} & \multicolumn{1}{l}{Obs} & \multicolumn{1}{l}{Mean} & \multicolumn{1}{l}{Std.Dev} & \multicolumn{1}{l}{Min} & \multicolumn{1}{l}{Max} \\
    \midrule
    52    & 154   & 0.354 & 0.245 & 0.229 & 3.183 \\
    54    & 122   & 9051.833 & 1944.386 & 7409.059 & 27034.280 \\
    55    & 149   & 0.474 & 0.070 & 0.361 & 0.904 \\
    58    & 149   & 1.407 & 0.154 & 1.155 & 2.506 \\
    59    & 116   & 0.781 & 0.261 & 0.345 & 1.080 \\
    61    & 156   & 3.372 & 1.820 & 2.151 & 19.650 \\
    62    & 155   & 1.013 & 0.408 & 0.679 & 3.317 \\
    63    & 157   & 1.537 & 1.890 & 0.846 & 21.568 \\
    64    & 148   & 0.550 & 0.372 & 0.323 & 2.306 \\
    65    & 153   & 4.423 & 5.596 & 2.232 & 38.725 \\
    66    & 157   & 80.861 & 25.629 & 50.781 & 334.302 \\
    67    & 148   & 6.701 & 9.744 & 2.679 & 75.814 \\
    68    & 143   & 10.399 & 3.691 & 6.252 & 41.790 \\
    69    & 161   & 107.424 & 1266.542 & 5.332 & 16078.190 \\
    70    & 158   & 11.084 & 10.113 & 4.999 & 117.903 \\
    71    & 146   & 1.06E+05 & 3.95E+05 & 2.86E+03 & 1.85E+06 \\
    73    & 159   & 0.384 & 0.244 & 0.240 & 2.430 \\
    74    & 154   & 59.894 & 20.527 & 43.466 & 234.731 \\
    76    & 158   & 0.077 & 0.029 & 0.042 & 0.223 \\
    82    & 159   & 24.072 & 15.373 & 14.861 & 171.604 \\
    85    & 162   & 5242.618 & 2099.089 & 4312.667 & 27509.740 \\
    87    & 158   & 2.802 & 0.958 & 2.006 & 6.494 \\
    88    & 133   & 524.241 & 19.825 & 501.511 & 644.742 \\
    89    & 150   & 14.529 & 10.022 & 7.053 & 87.736 \\
    90    & 161   & 3.463 & 25.910 & 0.802 & 329.955 \\
    91    & 159   & 9.40E+04 & 2.33E+04 & 7.56E+04 & 2.71E+05 \\
    92    & 150   & 5.985 & 2.880 & 3.919 & 21.810 \\
    93    & 135   & 2769.903 & 6592.229 & 1639.503 & 78574.520 \\
    94    & 164   & 0.218 & 0.081 & 0.153 & 0.619 \\
    95    & 154   & 3.116 & 3.515 & 1.331 & 37.424 \\
    96    & 157   & 0.197 & 0.083 & 0.127 & 0.558 \\
    \bottomrule
    
    \multicolumn{6}{l}{\shortstack[l]{\parbox{10.5cm}{\vspace{0.1cm} \scriptsize
Note: Note: This table summarizes the estimated exogenous competitiveness from quality model for industries HS 52 to HS 96
}}
}
    \end{tabular}%

%% file: cutoff_quality.tex

    \begin{tabular}{llllllrllllll}
    \toprule
    HS & Obs   & Mean  & Std.Dev & Min   & Max   &       & HS & Obs   & Mean  & Std.Dev & Min   & Max \\
\cmidrule{1-6}\cmidrule{8-13}    2     & 150   & 3.193 & 2.258 & -3.773 & 8.843 &       &       &       &       &       &       &  \\
    3     & 138   & 1.097 & 2.869 & -10.654 & 7.451 &       & 52    & 154   & 0.527 & 2.504 & -6.717 & 10.367 \\
    4     & 148   & 1.008 & 1.617 & -7.744 & 4.466 &       & 54    & 104   & 2.720 & 2.819 & -4.052 & 17.634 \\
    6     & 128   & 7.627 & 2.840 & -1.662 & 14.552 &       & 55    & 148   & 0.321 & 3.007 & -7.431 & 7.327 \\
    7     & 154   & 2.160 & 2.769 & -8.700 & 10.649 &       & 58    & 144   & -4.662 & 3.801 & -12.224 & 14.796 \\
    8     & 147   & 3.079 & 2.453 & -9.172 & 13.662 &       & 59    & 116   & 1.582 & 1.895 & -6.410 & 4.840 \\
    9     & 150   & 1.271 & 1.890 & -6.190 & 4.611 &       & 61    & 154   & 0.670 & 3.494 & -7.308 & 9.192 \\
    10    & 152   & 0.793 & 2.024 & -5.359 & 5.373 &       & 62    & 155   & -0.568 & 3.570 & -8.134 & 7.563 \\
    11    & 157   & 0.293 & 2.377 & -11.553 & 12.827 &       & 63    & 157   & 1.753 & 3.012 & -8.383 & 10.598 \\
    15    & 154   & 1.069 & 1.622 & -6.561 & 4.172 &       & 64    & 148   & 0.235 & 3.075 & -7.296 & 7.520 \\
    16    & 151   & 0.788 & 1.743 & -6.752 & 4.771 &       & 65    & 153   & 1.883 & 2.421 & -3.978 & 9.374 \\
    17    & 156   & 1.478 & 1.650 & -5.644 & 5.916 &       & 66    & 157   & 1.921 & 2.542 & -3.859 & 14.613 \\
    18    & 158   & 0.264 & 1.434 & -3.614 & 5.453 &       & 67    & 141   & 0.923 & 3.630 & -9.452 & 7.234 \\
    19    & 163   & -0.456 & 1.475 & -5.850 & 6.520 &       & 68    & 141   & 3.829 & 2.277 & -5.093 & 7.577 \\
    20    & 142   & -0.012 & 2.088 & -13.366 & 11.607 &       & 69    & 161   & 2.516 & 2.881 & -5.794 & 13.437 \\
    21    & 155   & -0.772 & 1.703 & -7.983 & 10.165 &       & 70    & 158   & 2.477 & 2.091 & -3.990 & 7.993 \\
    23    & 160   & -2.093 & 1.907 & -7.440 & 2.399 &       & 71    & 122   & 6.911 & 3.253 & -0.570 & 15.437 \\
    24    & 146   & -5.573 & 3.395 & -19.292 & 3.481 &       & 73    & 159   & 1.360 & 1.467 & -3.445 & 4.580 \\
    30    & 156   & -0.648 & 2.535 & -6.231 & 9.380 &       & 74    & 153   & 0.526 & 2.202 & -5.759 & 18.850 \\
    32    & 158   & -0.210 & 2.195 & -7.951 & 12.375 &       & 76    & 158   & 0.590 & 1.368 & -4.148 & 5.923 \\
    34    & 138   & 0.856 & 2.276 & -6.754 & 17.849 &       & 82    & 159   & 2.959 & 2.190 & -2.797 & 8.213 \\
    35    & 152   & -0.364 & 2.131 & -12.268 & 5.465 &       & 87    & 156   & 1.713 & 2.887 & -7.842 & 13.780 \\
    36    & 141   & 2.766 & 3.501 & -4.758 & 14.432 &       & 88    & 123   & 0.098 & 3.029 & -6.952 & 15.822 \\
    37    & 84    & 7.881 & 2.917 & 1.732 & 25.094 &       & 89    & 149   & 2.490 & 2.154 & -4.532 & 8.526 \\
    38    & 159   & -0.886 & 1.577 & -4.756 & 6.580 &       & 90    & 156   & 0.835 & 1.440 & -2.976 & 4.344 \\
    39    & 147   & -1.963 & 2.057 & -9.644 & 3.091 &       & 91    & 98    & 5.958 & 3.803 & -2.000 & 25.677 \\
    40    & 158   & 0.287 & 1.675 & -5.473 & 9.903 &       & 92    & 146   & 1.197 & 3.382 & -7.977 & 8.698 \\
    42    & 154   & 0.914 & 2.770 & -8.523 & 8.218 &       & 93    & 105   & 6.577 & 2.748 & -3.537 & 18.500 \\
    49    & 163   & 0.449 & 1.914 & -4.101 & 9.980 &       & 94    & 164   & 0.662 & 2.157 & -10.345 & 8.008 \\
    50    & 106   & 22.518 & 3.090 & 15.606 & 38.232 &       & 95    & 153   & 0.347 & 2.618 & -7.854 & 11.375 \\
    51    & 94    & 0.590 & 3.265 & -14.942 & 5.291 &       & 96    & 157   & -0.042 & 2.060 & -5.586 & 4.974 \\
    \bottomrule
    \multicolumn{13}{l}{\shortstack[l]{\parbox{17cm}{\vspace{0.1cm} \scriptsize
Note: This table summarizes the estimated exogenous competitiveness $c_{M}^{i}$ across countries within each sector $s$ from quality model
}}
}
    \end{tabular}%

%% file: cutoff_noqua.tex

    \begin{tabular}{lllllllllllll}
    \toprule
    HS & Obs   & Mean  & Std.Dev & Min   & Max   &       & HS & Obs   & Mean  & Std.Dev & Min   & Max \\
    \midrule
    2     & 150   & 3.011 & 2.214 & -3.992 & 8.409 &       &       &       &       &       &       &  \\
    3     & 138   & 1.249 & 2.645 & -8.513 & 6.687 &       & 52    & 154   & 0.495 & 2.501 & -6.882 & 7.866 \\
    4     & 148   & 0.868 & 1.228 & -2.975 & 4.239 &       & 54    & 104   & 1.041 & 2.077 & -4.777 & 5.708 \\
    6     & 128   & 7.710 & 2.764 & -0.780 & 15.091 &       & 55    & 148   & 0.065 & 2.562 & -6.670 & 4.982 \\
    7     & 154   & 1.216 & 1.979 & -4.815 & 7.150 &       & 58    & 144   & 1.049 & 2.572 & -5.458 & 13.692 \\
    8     & 147   & 2.261 & 2.126 & -2.952 & 12.880 &       & 59    & 116   & 1.700 & 1.351 & -2.378 & 4.444 \\
    9     & 150   & 0.654 & 1.684 & -4.379 & 3.718 &       & 61    & 154   & 0.792 & 3.299 & -6.633 & 10.001 \\
    10    & 152   & 0.997 & 1.848 & -5.317 & 5.391 &       & 62    & 155   & -0.611 & 3.498 & -8.208 & 8.613 \\
    11    & 157   & 0.436 & 1.352 & -4.034 & 3.497 &       & 63    & 157   & 1.879 & 2.569 & -4.476 & 7.188 \\
    15    & 154   & 1.047 & 1.439 & -2.104 & 4.277 &       & 64    & 148   & 0.124 & 2.744 & -6.358 & 8.317 \\
    16    & 151   & 0.849 & 1.620 & -6.824 & 4.562 &       & 65    & 153   & 2.020 & 2.333 & -3.684 & 9.270 \\
    17    & 156   & 1.341 & 1.236 & -2.432 & 5.477 &       & 66    & 157   & 1.943 & 2.580 & -4.759 & 14.359 \\
    18    & 158   & 0.404 & 1.234 & -2.701 & 4.046 &       & 67    & 141   & 1.217 & 3.488 & -6.993 & 7.586 \\
    19    & 163   & -0.273 & 1.013 & -2.728 & 2.819 &       & 68    & 141   & 3.971 & 1.908 & -1.216 & 7.532 \\
    20    & 142   & 0.173 & 1.198 & -2.071 & 2.534 &       & 69    & 161   & 2.326 & 2.653 & -5.742 & 8.726 \\
    21    & 155   & -0.625 & 1.104 & -2.981 & 1.894 &       & 70    & 158   & 2.570 & 2.003 & -3.190 & 8.366 \\
    23    & 160   & -2.125 & 1.851 & -7.661 & 1.200 &       & 71    & 122   & 6.987 & 3.883 & -5.886 & 15.516 \\
    24    & 146   & -5.581 & 3.025 & -15.509 & 2.437 &       & 73    & 159   & 1.500 & 1.253 & -2.480 & 4.756 \\
    30    & 156   & -0.443 & 2.427 & -5.975 & 4.664 &       & 74    & 153   & 0.488 & 1.452 & -4.534 & 6.413 \\
    32    & 158   & -0.225 & 1.452 & -4.578 & 3.050 &       & 76    & 158   & 0.557 & 1.131 & -4.030 & 4.030 \\
    34    & 138   & -0.095 & 1.354 & -5.192 & 3.013 &       & 82    & 159   & 1.858 & 2.116 & -3.206 & 5.999 \\
    35    & 152   & -0.751 & 1.718 & -6.521 & 2.556 &       & 87    & 156   & 1.902 & 2.518 & -5.229 & 7.500 \\
    36    & 141   & 2.621 & 3.361 & -7.049 & 10.561 &       & 88    & 123   & -0.144 & 1.899 & -5.253 & 5.816 \\
    37    & 84    & 2.459 & 2.584 & -4.488 & 7.855 &       & 89    & 149   & 2.773 & 2.055 & -2.797 & 10.367 \\
    38    & 159   & -0.736 & 1.397 & -3.502 & 6.090 &       & 90    & 156   & 0.969 & 1.361 & -2.965 & 4.320 \\
    39    & 147   & -1.879 & 1.561 & -7.064 & 3.002 &       & 91    & 98    & 6.833 & 3.035 & 2.149 & 21.273 \\
    40    & 158   & 0.401 & 1.295 & -5.337 & 5.052 &       & 92    & 146   & 1.443 & 3.264 & -6.880 & 8.791 \\
    42    & 154   & 0.815 & 2.659 & -7.365 & 9.969 &       & 93    & 105   & 7.317 & 2.357 & -2.368 & 17.926 \\
    49    & 163   & 0.303 & 1.433 & -3.599 & 5.719 &       & 94    & 164   & 1.006 & 1.677 & -4.750 & 5.649 \\
    50    & 106   & 1.955 & 2.781 & -6.967 & 7.527 &       & 95    & 153   & 0.268 & 2.484 & -7.657 & 5.498 \\
    51    & 94    & 1.768 & 2.238 & -5.315 & 7.984 &       & 96    & 157   & 0.107 & 2.070 & -4.670 & 5.063 \\
    \bottomrule
     \multicolumn{13}{l}{\shortstack[l]{\parbox{17cm}{\vspace{0.1cm} \scriptsize
Note: This table summarizes the estimated exogenous competitiveness $c_{M}^{i}$ across countries within each sector $s$ from Melitz and Ottaviano (2008) model
}}
}
    \end{tabular}%

%% file: cf1.tex

    \begin{tabular}{lllllrlllll}
    \toprule
    HS    & Mean  & Std.Dev & Min   & Max   &       & HS    & Mean  & Std.Dev & Min   & Max \\
    \midrule
    2     & 2.801 & 2.284 & -0.003 & 9.085 &       &       &       &       &       &  \\
    3     & 1.983 & 3.815 & -0.327 & 30.816 &       & 52    & 1.935 & 1.768 & -0.085 & 6.716 \\
    4     & 0.944 & 1.178 & -0.302 & 4.733 &       & 54    & 13.115 & 6.858 & 0.000 & 39.165 \\
    6     & 3.914 & 4.133 & -0.038 & 24.256 &       & 55    & 1.217 & 1.876 & -0.353 & 15.144 \\
    7     & 1.248 & 1.236 & -0.383 & 3.983 &       & 58    & 5.617 & 3.677 & -0.053 & 15.358 \\
    8     & 3.591 & 3.468 & -0.348 & 15.524 &       & 59    & 0.752 & 1.322 & -0.602 & 8.314 \\
    9     & 1.411 & 1.341 & -0.200 & 7.070 &       & 61    & 3.032 & 4.738 & -0.220 & 18.739 \\
    10    & 3.196 & 3.037 & -0.001 & 24.547 &       & 62    & 2.988 & 6.235 & -0.179 & 45.542 \\
    11    & 0.731 & 0.837 & -0.244 & 5.103 &       & 63    & 1.520 & 2.268 & -0.452 & 12.386 \\
    15    & 1.002 & 1.428 & -0.433 & 8.380 &       & 64    & 1.173 & 2.023 & -0.294 & 9.771 \\
    16    & 1.277 & 1.521 & -0.255 & 6.770 &       & 65    & 1.856 & 1.703 & -0.095 & 10.980 \\
    17    & 1.973 & 1.369 & 0.212 & 6.722 &       & 66    & 3.150 & 2.721 & -0.032 & 14.317 \\
    18    & 1.362 & 1.105 & -0.296 & 5.531 &       & 67    & 3.587 & 5.030 & -0.466 & 23.328 \\
    19    & 0.986 & 1.816 & -0.398 & 12.198 &       & 68    & 0.839 & 1.219 & -0.439 & 6.521 \\
    20    & 1.298 & 1.960 & -0.505 & 13.862 &       & 69    & 2.108 & 2.579 & -0.254 & 13.567 \\
    21    & 0.583 & 1.015 & -0.247 & 8.883 &       & 70    & 2.675 & 2.609 & -0.302 & 12.418 \\
    23    & 0.900 & 1.212 & -0.261 & 5.901 &       & 71    & 5.430 & 3.946 & 0.033 & 15.622 \\
    24    & 2.055 & 2.091 & -0.006 & 14.665 &       & 73    & 2.160 & 1.785 & -0.249 & 7.655 \\
    30    & 2.829 & 2.432 & -0.263 & 12.214 &       & 74    & 3.756 & 3.876 & -0.023 & 22.587 \\
    32    & 4.943 & 4.259 & -0.022 & 30.004 &       & 76    & 0.148 & 0.388 & -0.505 & 1.933 \\
    34    & 28.687 & 13.970 & -0.074 & 71.330 &       & 82    & 2.988 & 3.571 & -0.027 & 17.884 \\
    35    & 0.276 & 0.407 & -0.184 & 2.016 &       & 87    & 1.815 & 1.789 & -0.712 & 7.665 \\
    36    & 12.454 & 9.820 & 0.216 & 45.854 &       & 88    & 8.191 & 4.952 & -0.054 & 22.213 \\
    37    & 14.389 & 5.979 & -0.053 & 36.342 &       & 89    & 1.434 & 1.475 & -0.113 & 8.867 \\
    38    & 0.662 & 0.893 & -0.174 & 3.920 &       & 90    & 3.195 & 2.180 & 0.030 & 10.248 \\
    39    & 0.596 & 0.981 & -0.526 & 4.460 &       & 91    & 8.818 & 6.094 & -0.065 & 24.584 \\
    40    & 1.112 & 1.552 & -0.368 & 8.873 &       & 92    & 2.605 & 3.745 & -0.320 & 22.862 \\
    42    & 8.844 & 6.304 & 0.023 & 28.939 &       & 93    & 3.810 & 2.954 & -0.278 & 10.813 \\
    49    & 0.311 & 0.429 & -0.114 & 3.005 &       & 94    & 1.109 & 2.610 & -0.709 & 17.924 \\
    50    & 7.222 & 5.252 & -0.017 & 23.817 &       & 95    & 3.633 & 3.020 & 0.040 & 16.968 \\
    51    & 1.071 & 2.407 & -0.309 & 16.746 &       & 96    & 1.074 & 1.305 & -0.088 & 5.768 \\
    \bottomrule
     \multicolumn{11}{l}{\shortstack[l]{\parbox{14cm}{\vspace{0.1cm} \scriptsize
Note: This table summarizes the effects on cutoffs (endogenous competitiveness) of a 5\% increase in international trade costs under quality model
}}
}
    \end{tabular}%

%% file: cf1_noqua.tex

    \begin{tabular}{lrrrrrlrrrr}
    \toprule
    HS    & \multicolumn{1}{l}{Mean} & \multicolumn{1}{l}{Std. Dev.} & \multicolumn{1}{l}{Min} & \multicolumn{1}{l}{Max} &       & HS    & \multicolumn{1}{l}{Mean} & \multicolumn{1}{l}{Std. Dev.} & \multicolumn{1}{l}{Min} & \multicolumn{1}{l}{Max} \\
    \midrule
    2     & 1.047 & 1.272 & -0.293 & 7.608 &       &       &       &       &       &  \\
    3     & 0.586 & 1.051 & -0.158 & 7.937 &       & 52    & 0.607 & 0.931 & -0.010 & 6.390 \\
    4     & 0.446 & 0.787 & -0.661 & 5.415 &       & 54    & 0.646 & 1.266 & -0.399 & 9.755 \\
    6     & 2.593 & 3.507 & -0.452 & 27.547 &       & 55    & 0.597 & 1.095 & -0.370 & 7.856 \\
    7     & 0.663 & 1.209 & -0.107 & 11.402 &       & 58    & 1.168 & 1.486 & -0.130 & 7.156 \\
    8     & 1.616 & 1.636 & -0.137 & 8.661 &       & 59    & 0.543 & 0.847 & -0.399 & 4.301 \\
    9     & 0.389 & 0.565 & -0.138 & 2.651 &       & 61    & 1.189 & 1.491 & -0.031 & 7.398 \\
    10    & 0.497 & 0.697 & -0.526 & 5.233 &       & 62    & 1.363 & 1.891 & -0.061 & 10.884 \\
    11    & 0.381 & 0.725 & -0.150 & 7.001 &       & 63    & 0.633 & 0.750 & -0.508 & 3.724 \\
    15    & 0.449 & 0.662 & -0.034 & 4.073 &       & 64    & 0.695 & 0.876 & -0.075 & 4.262 \\
    16    & 0.659 & 1.185 & -0.463 & 8.025 &       & 65    & 0.791 & 2.017 & -0.125 & 22.371 \\
    17    & 0.364 & 0.561 & -0.010 & 3.607 &       & 66    & 2.033 & 2.519 & -0.131 & 19.504 \\
    18    & 0.232 & 0.320 & -0.009 & 1.939 &       & 67    & 4.231 & 4.888 & -0.462 & 19.710 \\
    19    & 0.327 & 0.420 & -0.010 & 2.236 &       & 68    & 0.727 & 1.212 & -0.159 & 8.095 \\
    20    & 0.407 & 0.533 & -0.361 & 2.586 &       & 69    & 1.127 & 1.502 & -0.173 & 8.505 \\
    21    & 0.412 & 0.670 & -0.013 & 4.640 &       & 70    & 0.518 & 0.886 & -0.394 & 5.239 \\
    23    & 0.481 & 0.764 & -0.012 & 4.653 &       & 71    & 3.212 & 4.667 & -0.223 & 24.524 \\
    24    & 0.923 & 1.302 & -0.292 & 10.050 &       & 73    & 0.457 & 0.665 & -0.036 & 4.666 \\
    30    & 0.731 & 0.952 & -0.179 & 4.721 &       & 74    & 0.336 & 0.673 & -0.031 & 4.304 \\
    32    & 0.400 & 0.613 & -0.253 & 3.211 &       & 76    & 0.240 & 0.422 & -0.104 & 3.735 \\
    34    & 0.419 & 0.636 & -0.006 & 3.691 &       & 82    & 0.858 & 1.104 & -0.362 & 5.341 \\
    35    & 0.525 & 0.824 & -0.009 & 5.677 &       & 87    & 0.766 & 1.347 & -0.749 & 6.729 \\
    36    & 7.257 & 6.698 & 0.096 & 36.632 &       & 88    & 1.041 & 1.260 & -0.185 & 8.286 \\
    37    & 2.159 & 1.844 & -0.065 & 8.031 &       & 89    & 4.455 & 3.799 & -0.039 & 17.700 \\
    38    & 0.330 & 0.518 & -0.084 & 3.462 &       & 90    & 0.339 & 0.426 & -0.378 & 2.027 \\
    39    & 0.284 & 0.562 & -0.013 & 4.220 &       & 91    & 9.938 & 5.588 & -0.051 & 36.798 \\
    40    & 0.401 & 0.740 & -0.023 & 5.858 &       & 92    & 2.672 & 4.312 & -0.399 & 24.148 \\
    42    & 0.952 & 1.652 & -0.084 & 13.658 &       & 93    & 8.151 & 4.063 & 1.295 & 18.768 \\
    49    & 0.300 & 0.418 & -0.217 & 2.798 &       & 94    & 0.488 & 0.711 & -0.368 & 5.283 \\
    50    & 0.792 & 1.732 & -0.020 & 13.111 &       & 95    & 0.649 & 0.984 & -0.273 & 6.777 \\
    51    & 0.840 & 1.897 & -0.131 & 12.506 &       & 96    & 0.644 & 0.762 & -0.107 & 3.713 \\
    \bottomrule
     \multicolumn{11}{l}{\shortstack[l]{\parbox{14cm}{\vspace{0.1cm} \scriptsize
Note: This table summarizes the effects on cutoffs (endogenous competitiveness) of a 5\% increase in international trade costs under Melitz and Ottaviano (2008) model
}}
}
    \end{tabular}%

%% file: cf2_1.tex

    \begin{tabular}{lrrlrrrrrrr}
    \toprule
    Country & \multicolumn{1}{l}{$\Delta$} &       & Country & \multicolumn{1}{l}{$\Delta$} &       & \multicolumn{1}{l}{Country} & \multicolumn{1}{l}{$\Delta$} &       & \multicolumn{1}{l}{Country} & \multicolumn{1}{l}{$\Delta$} \\
    \midrule
    ABW   & 0     &       & DMA   & 0     &       & \multicolumn{1}{l}{BOL} & -92.825249 &       & \multicolumn{1}{l}{GTM} & -142.23793 \\
    AFG   & 0     &       & DNK   & -114.35749 &       & \multicolumn{1}{l}{BRA} & -88.041534 &       & \multicolumn{1}{l}{GUM} & 0 \\
    AGO   & -5.9250102 &       & DOM   & -290.53134 &       & \multicolumn{1}{l}{BRB} & 0     &       & \multicolumn{1}{l}{GUY} & 0 \\
    AIA   & 0     &       & DZA   & -211.93385 &       & \multicolumn{1}{l}{BRN} & -105.61481 &       & \multicolumn{1}{l}{HKG} & 0 \\
    ALB   & 0     &       & ECU   & -150.62788 &       & \multicolumn{1}{l}{BTN} & -30.520798 &       & \multicolumn{1}{l}{HND} & -71.087296 \\
    AND   & 0     &       & EGY   & -139.12598 &       & \multicolumn{1}{l}{CAF} & -6.0313959 &       & \multicolumn{1}{l}{HRV} & -35.601143 \\
    ARE   & 110.83006 &       & ERI   & 0     &       & \multicolumn{1}{l}{CAN} & -134.60129 &       & \multicolumn{1}{l}{HTI} & 10.083085 \\
    ARG   & -172.65367 &       & ESP   & -3.0844131 &       & \multicolumn{1}{l}{CCK} & 0     &       & \multicolumn{1}{l}{HUN} & -71.353348 \\
    ARM   & -83.855202 &       & EST   & -137.4808 &       & \multicolumn{1}{l}{CHE} & -221.01123 &       & \multicolumn{1}{l}{IDN} & -61.682568 \\
    ASM   & 0     &       & ETH   & -14.915351 &       & \multicolumn{1}{l}{CHL} & 148.51651 &       & \multicolumn{1}{l}{IND} & -79.18856 \\
    ATF   & 0     &       & FIN   & -139.43924 &       & \multicolumn{1}{l}{CHN} & -18.930056 &       & \multicolumn{1}{l}{IOT} & 0 \\
    ATG   & 0     &       & FJI   & -1.2493064 &       & \multicolumn{1}{l}{CIV} & -566.53101 &       & \multicolumn{1}{l}{IRL} & -159.63408 \\
    AUS   & -85.625969 &       & FLK   & 0     &       & \multicolumn{1}{l}{CMR} & 20.903 &       & \multicolumn{1}{l}{IRN} & 125.87421 \\
    AUT   & -35.475746 &       & FRA   & -48.415028 &       & \multicolumn{1}{l}{COG} & 9.3548021 &       & \multicolumn{1}{l}{IRQ} & -94.298141 \\
    AZE   & -88.898872 &       & FSM   & 0     &       & \multicolumn{1}{l}{COK} & 0     &       & \multicolumn{1}{l}{ISL} & -217.56667 \\
    BDI   & 197.47388 &       & GAB   & -317.1246 &       & \multicolumn{1}{l}{COL} & -32.865913 &       & \multicolumn{1}{l}{ISR} & -117.73706 \\
    BEL   & 4.9142151 &       & GBR   & -89.882896 &       & \multicolumn{1}{l}{COM} & 0     &       & \multicolumn{1}{l}{ITA} & -4.3799672 \\
    BEN   & 340.07468 &       & GEO   & -77.644836 &       & \multicolumn{1}{l}{CPV} & 0     &       & \multicolumn{1}{l}{JAM} & -181.83876 \\
    BFA   & -499.91898 &       & GHA   & 313.25208 &       & \multicolumn{1}{l}{CRI} & -150.97874 &       & \multicolumn{1}{l}{JOR} & -117.16895 \\
    BGD   & -52.478401 &       & GIB   & 0     &       & \multicolumn{1}{l}{CUB} & 0     &       & \multicolumn{1}{l}{JPN} & -34.355942 \\
    BGR   & -193.67613 &       & GIN   & -221.29001 &       & \multicolumn{1}{l}{CXR} & 0     &       & \multicolumn{1}{l}{KAZ} & -3.359056 \\
    BHR   & 0     &       & GMB   & -127.25304 &       & \multicolumn{1}{l}{CYM} & 0     &       & \multicolumn{1}{l}{KEN} & 114.01509 \\
    BHS   & -313.15955 &       & GNB   & 425.91129 &       & \multicolumn{1}{l}{CYP} & -286.89645 &       & \multicolumn{1}{l}{KGZ} & -32.405098 \\
    BIH   & -70.668152 &       & GNQ   & 164.41464 &       & \multicolumn{1}{l}{CZE} & -132.38187 &       & \multicolumn{1}{l}{KHM} & -155.89485 \\
    BLR   & 9.4817276 &       & GRC   & -42.363346 &       & \multicolumn{1}{l}{DEU} & -97.001007 &       & \multicolumn{1}{l}{KIR} & 0 \\
    BLZ   & -127.64666 &       & GRD   & 0     &       & \multicolumn{1}{l}{DJI} & -76.750755 &       & \multicolumn{1}{l}{KNA} & 0 \\
    BMU   & 0     &       & GRL   & 0     &       &       &       &       &       &  \\
    \bottomrule
    
     \multicolumn{11}{l}{\shortstack[l]{\parbox{14cm}{\vspace{0.1cm} \scriptsize
Note: This table summarizes the changes in endogenous cutoffs following an increase in preference for quality
}}
}
    \end{tabular}%

%% file: cf2_2.tex

    \begin{tabular}{lrrlrrrrrrr}
    \toprule
    Country & \multicolumn{1}{l}{$\Delta$} &       & Country & \multicolumn{1}{l}{$\Delta$} &       & \multicolumn{1}{l}{Country} & \multicolumn{1}{l}{$\Delta$} &       & \multicolumn{1}{l}{Country} & \multicolumn{1}{l}{$\Delta$} \\
    \midrule
    KOR   & -35.787136 &       & QAT   & -219.32658 &       & \multicolumn{1}{l}{MYS} & -158.87752 &       & \multicolumn{1}{l}{TKM} & -52.555267 \\
    KWT   & -460.9559 &       & ROM   & 0     &       & \multicolumn{1}{l}{NCL} & 0     &       & \multicolumn{1}{l}{TMP} & 0 \\
    LAO   & -436.81769 &       & RUS   & -98.661285 &       & \multicolumn{1}{l}{NER} & -230.2971 &       & \multicolumn{1}{l}{TON} & 0 \\
    LBN   & 20.976509 &       & RWA   & -396.1434 &       & \multicolumn{1}{l}{NFK} & 0     &       & \multicolumn{1}{l}{TTO} & -104.69793 \\
    LBR   & 292.86221 &       & SAU   & -45.471489 &       & \multicolumn{1}{l}{NGA} & -42.240189 &       & \multicolumn{1}{l}{TUN} & -165.00168 \\
    LBY   & 0     &       & SEN   & -349.76016 &       & \multicolumn{1}{l}{NIC} & 74.577103 &       & \multicolumn{1}{l}{TUR} & -73.441292 \\
    LCA   & 0     &       & SGP   & 0     &       & \multicolumn{1}{l}{NIU} & 0     &       & \multicolumn{1}{l}{TUV} & 0 \\
    LKA   & -184.27283 &       & SHN   & 0     &       & \multicolumn{1}{l}{NLD} & -138.3261 &       & \multicolumn{1}{l}{TWN} & -55.560108 \\
    LTU   & -159.97792 &       & SLB   & 0     &       & \multicolumn{1}{l}{NOR} & -83.566177 &       & \multicolumn{1}{l}{TZA} & -234.38101 \\
    LVA   & -20.440449 &       & SLE   & 194.28214 &       & \multicolumn{1}{l}{NPL} & -314.22827 &       & \multicolumn{1}{l}{UGA} & -58.329792 \\
    MAC   & 0     &       & SLV   & -65.059845 &       & \multicolumn{1}{l}{NRU} & 0     &       & \multicolumn{1}{l}{UKR} & -150.4944 \\
    MAR   & -100.28398 &       & SMR   & 0     &       & \multicolumn{1}{l}{NZL} & -180.14423 &       & \multicolumn{1}{l}{URY} & -185.035 \\
    MDA   & -37.104824 &       & SOM   & 0     &       & \multicolumn{1}{l}{OMN} & -0.6131459 &       & \multicolumn{1}{l}{USA} & -83.864113 \\
    MDG   & -21.992657 &       & SPM   & 0     &       & \multicolumn{1}{l}{PAK} & -30.568909 &       & \multicolumn{1}{l}{UZB} & 110.60154 \\
    MDV   & 0     &       & STP   & 0     &       & \multicolumn{1}{l}{PAL} & 0     &       & \multicolumn{1}{l}{VCT} & 0 \\
    MEX   & -26.129211 &       & SUR   & -325.828 &       & \multicolumn{1}{l}{PAN} & -197.16428 &       & \multicolumn{1}{l}{VEN} & -17.546169 \\
    MHL   & 0     &       & SVK   & -167.24568 &       & \multicolumn{1}{l}{PCN} & 0     &       & \multicolumn{1}{l}{VGB} & 0 \\
    MLI   & 7.2255492 &       & SVN   & -107.09841 &       & \multicolumn{1}{l}{PER} & -114.9763 &       & \multicolumn{1}{l}{VNM} & -68.709572 \\
    MLT   & 0     &       & SWE   & -112.68818 &       & \multicolumn{1}{l}{PHL} & -77.56852 &       & \multicolumn{1}{l}{VUT} & 0 \\
    MMR   & -69.190582 &       & SYC   & 0     &       & \multicolumn{1}{l}{PLW} & 0     &       & \multicolumn{1}{l}{WLF} & 0 \\
    MNG   & -217.73524 &       & SYR   & -77.680817 &       & \multicolumn{1}{l}{PNG} & 0     &       & \multicolumn{1}{l}{WSM} & 0 \\
    MNP   & 0     &       & TCA   & 0     &       & \multicolumn{1}{l}{POL} & -64.078629 &       & \multicolumn{1}{l}{YEM} & -515.06506 \\
    MOZ   & -138.54242 &       & TCD   & 105.06513 &       & \multicolumn{1}{l}{PRK} & 0     &       & \multicolumn{1}{l}{ZAF} & -143.84947 \\
    MRT   & 179.3062 &       & TGO   & 301.63589 &       & \multicolumn{1}{l}{PRT} & -181.69063 &       & \multicolumn{1}{l}{ZAR} & 0 \\
    MSR   & 0     &       & THA   & -39.003437 &       & \multicolumn{1}{l}{PRY} & -254.91846 &       & \multicolumn{1}{l}{ZMB} & -184.53088 \\
    MUS   & 0     &       & TJK   & -232.86932 &       & \multicolumn{1}{l}{PYF} & 0     &       & \multicolumn{1}{l}{ZWE} & 150.78076 \\
    MWI   & 37.507915 &       & TKL   & 0     &       &       &       &       &       &  \\
    \bottomrule
   \multicolumn{11}{l}{\shortstack[l]{\parbox{14cm}{\vspace{0.1cm} \scriptsize
Note: This table summarizes the changes in endogenous cutoffs following an increase in preference for quality 
}}
}
    \end{tabular}%

%% file: size-gain.tex
{\tabcolsep=1.3mm
    \begin{tabular}{rrrr}
    \toprule
    \toprule
    \multicolumn{1}{l}{Dependent Variable:} & \multicolumn{3}{c}{Changes in Cost Cutoff} \\
    \midrule
          & \multicolumn{1}{c}{(1)} & \multicolumn{1}{c}{(2)} & \multicolumn{1}{c}{(3)} \\
    \multicolumn{1}{l}{(log) Population} & -6.188** & -8.510*** & -8.929*** \\
          & (3.139) & (2.044) & (3.299) \\
    \multicolumn{1}{l}{Initial Preference} &       & -0.795 & -0.634 \\
          &       & (0.606) & (0.700) \\
    \multicolumn{1}{l}{(log) GDP per capita} &       &       & -13.617 \\
          &       &       & (9.317) \\
   \multicolumn{1}{l}{Number of Observations} &   168    &  168     & 160 \\
    \bottomrule
    \bottomrule
    \multicolumn{4}{l}{\shortstack[l]{\parbox{10cm}{\vspace{0.1cm} \scriptsize
Notes: *** p < 0.01, ** p < 0.05, * p < 0.1. Robust standard errors are in parentheses. The dependent variable in specifications (1)-(3) is the level of aggregate cutoff change at the country-level. Country-level control variables are the initial preferences for quality and (log) GDP per capita.
}}
}
    \end{tabular}}%